\newtheorem{theorem}{Theorem}[section]
\newtheorem{proposition}[theorem]{Proposition}
\newtheorem{lemma}[theorem]{Lemma}
\newtheorem{corollary}[theorem]{Corollary}
\theoremstyle{remark}
\newtheorem{definition}[theorem]{Definition}
\newtheorem{remark}[theorem]{Remark}
\newtheorem{example}[theorem]{Example}
\def\NN{\mathbb{N}}
\def\BB{\mathbb{B}}
\newcommand{\eqleft}[1]{\begin{itemize} \item[] $#1$ \end{itemize}}
\newcommand{\pair}[1]{\langle #1 \rangle}
\newcommand{\initSeg}[2]{[#1](#2)}
\newcommand{\initSegs}[3]{\{#2\}_{#1}(#3)}
\newcommand{\wext}[1]{\widehat{#1}}
\newcommand{\ext}[1]{\hat{#1}}
\newcommand{\modcont}{\mathcal{C}^\omega}
\newcommand{\modpar}{\hat{\mathcal{C}}^\omega}
\newcommand{\systemT}{{\sf T}}
\newcommand{\PA}{{\sf PA}}
\newcommand{\PAomega}{{\sf PA}^\omega}
\newcommand{\EHAomega}{{\sf E\mbox{-}HA}^\omega}
\newcommand{\EPAomega}{{\sf E\mbox{-}PA}^\omega}
\newcommand{\WEPAomega}{{\sf WE\mbox{-}PA}^\omega}
\newcommand{\CONT}{{\sf Cont}}
\newcommand{\sSPEC}{{\sf sSpec}}
\newcommand{\SPEC}{{\sf Spec}}
\newcommand{\QFAC}{{\sf QF\mbox{-}AC}}
\newcommand{\AC}{{\sf AC}}
\newcommand{\DC}{{\sf DC}}
\newcommand{\CA}{{\sf CA}}
\newcommand{\DNS}{{\sf DNS}}
\newcommand{\BI}{{\sf BI}}
\newcommand{\sBI}{{\sf sBI}}
\newcommand{\Rec}{{\sf R}}
\newcommand{\BR}{{\sf BR}}
\newcommand{\sBR}{{\sf sBR}}
\newcommand{\SBR}{{\Phi}}
\newcommand{\sSBR}{\Psi}
\newcommand{\types}{\mathcal{T}}
\newcommand{\compact}{\mbox{\sf compact}}
\newcommand{\discrete}{\mbox{\sf discrete}}
\newcommand{\zero}{{\bf 0}}
\newcommand{\dt}[3]{|{#1}|^{#2}_{#3}}
\newcommand{\least}{\mu}
\newcommand{\at}{{\; @ \:}}
\newcommand{\upd}[3]{{#1}\oplus({#2},{#3})}
\newcommand{\dom}{{\rm dom}}
\newcommand{\embpar}[1]{\mbox{emb}(#1)}
\newcommand{\unit}{{\bf 1}}
\begin{document}

\title{Spector Bar Recursion over Finite Partial Functions}

\author{Paulo Oliva and Thomas Powell}


\date{\today}

\maketitle

\begin{abstract}We introduce a new, demand-driven variant of Spector's bar recursion in the spirit of the Berardi-Bezem-Coquand functional of \cite{BBC(1998.0)}. The recursion takes place over finite partial functions $u$, where the control parameter $\varphi$, used in Spector's bar recursion to terminate the computation at sequences $s$ satisfying $\varphi(\ext{s})<|s|$, now acts as a guide for deciding exactly where to make bar recursive updates, terminating the computation whenever $\varphi(\ext{u})\in\dom(u)$. We begin by exploring theoretical aspects of this new form of recursion, then in the main part of the paper we show that demand-driven bar recursion can be directly used to give an alternative functional interpretation of classical countable choice. We provide a short case study as an illustration, in which we extract a new bar recursive program from the proof that there is no injection from $\NN\to\NN$ to $\NN$, and compare this to the program that would be obtained using Spector's original variant. We conclude by formally establishing that our new bar recursor is primitive recursively equivalent to the original Spector bar recursion, and thus defines the same class of functionals when added to G\"odel's system $\systemT$. \\



\end{abstract}

\section{Introduction}
\label{sec-intro}

In 1962 C. Spector extended G\"odel's functional, or \emph{Dialectica}, interpretation of classical arithmetic \cite{Goedel(1958.0)} to full classical analysis by proving that the functional interpretation of the negative translation of countable choice, and hence full arithmetical comprehension, could be realized by a novel form of recursion which has come to be known as \emph{Spector's bar recursion}  \cite{Spector(1962.0)}. Since then, this seminal work has been extended in several ways, and in particular a number of novel variants of bar recursion have been devised to give computational interpretations to classical analysis in new settings, to the extent that bar recursion, in one form or another, is one of the most recognisable methods of giving a computational interpretation to mathematical analysis.

Spector's original aim was to extend G{\"o}del's proof of the relative consistency of Peano arithmetic to classical analysis. For this purpose, bar recursion is very well suited, allowing us to elegantly and easily expand the soundness of the Dialectica interpretation to incorporate the double negation shift and thus classical countable choice. However, in recent decades applications of proof interpretations such as the Dialectica interpretation and modified realizability have moved away from foundational concerns and towards the more practical issue of extracting computational content from proofs. In line with this shift of emphasis comes an increasing interest in how the modes of computation assigned to non-constructive principles behave.

From this perspective, it could be argued that traditional bar recursion is not necessarily the best method of interpreting countable choice principles. The defining characteristic of Spector's bar recursion is that it carries out computations over some well-founded tree of finite sequences $s$, always making recursive calls in a sequential manner over extensions $s\ast \pair{x}$ of these finite sequences. This strict adherence to sequentiality means that in practice, when constructing an approximation to a choice sequence using bar recursion, elements of the approximation are always computed in order, even if we do not require knowledge of all of them.

The purpose of this paper is to introduce a new, demand-driven alternative to Spector's bar recursion, in which the order of the recursive calls is not fixed but rather directly controlled by its parameters. We first focus on recursion-theoretic issues, in particular giving an intuitive explanation of why our recursor exists in standard continuous models. We then prove that our new form of recursion is capable of realizing the Dialectica interpretation of countable choice, and moreover argue that (from an algorithmic viewpoint) it can be superior to Spector's original bar recursion because the manner in which it constructs approximations to choice sequences is much more sensitive to its environment. We illustrate this with an example in which we extract realizers from the classical proof that there is no injection from $\NN\to\NN$ to $\NN$. In this case the program based on our demand-driven recursion has a much more intuitive behaviour than that based on Spector bar recursion, and significantly outperforms the latter on a small sample of concrete inputs. Finally, we calibrate the computational strength of this new variant of bar recursion relative to Spector's original definition, showing that despite the algorithmic differences in extracted programs, the two forms of bar recursion are in fact primitive recursively interdefinable, and thus our recursor exists in all the usual models of Spector's bar recursion.

Our variant of bar recursion is in some ways similar to the realizer of countable choice proposed by Berardi et al. in \cite{BBC(1998.0)}, now often called the BBC-functional. In both cases the recursion is carried out in a `symmetric', rather than a fixed sequential manner. However, the BBC-functional belongs to the world of \emph{realizability}, which typically uses much stronger forms of recursion to interpret choice principles than Spector bar recursion (see \cite{BergOli(2006.0),EscOli(2015.0),Powell(2013.0)}). Moreover, the BBC-functional itself has a highly complex behaviour, its demand driven execution coming at the expense that each entry in its output is computed via a completely independent recursion. Our bar recursor is very different in this respect, as it retains a `memory' of what has already been computed, and simply relaxes the order in which the computation occurs.


\subsection{Preliminaries}
\label{subsec-intro-prelim}

The basic formal theory we work over is fully extensional\footnote{As it is well-known, the Dialectica interpretation does not validate full extensionallity. The system we are describing here, however, is the one we use to \emph{verify} the bar recursive interpretation of countable choice, and also for our inter-definability results.} Heyting arithmetic $\EHAomega$ in all finite types (and its classical variant $\EPAomega$), whose quantifier-free fragment is G{\"o}del's system $\systemT$ of primitive recursive functionals (see \cite{Kohlenbach(2008.0),Troelstra(1973.0)} for full details). The finite types $\types$ are typically defined using the following inductive rules
\begin{equation*}\types ::= \; \NN \; | \; X\times Y \; | \; X\to Y\end{equation*}
We expand these basic types with some standard `definable' types, including the unit $\unit$ and Boolean $\BB = \{0,1\}$ types, finite sequence types $X^\ast$ and co-product types $X+Y$. We consider \emph{partial sequences} $\NN\to X$ to be objects of type $\NN\to X+\unit$, where $\unit$ denotes an `undefined' value.

Finally, we consider one slightly non-standard type: the type $X^\dagger$ of \emph{finite partial functions}, that is partial sequences $\NN \to X$ defined at only finitely many points. This type can be easily simulated as in \cite{Berger(2002.0)} by $(\NN\times X)^\ast$, in which the partial function that takes values $x_0,\ldots,x_{k-1}$ at arguments $n_0,\ldots,n_{k-1}$ is encoded as the sequence $\pair{(n_0,x_0),\ldots,(n_{k-1},x_{k-1})}$, although to minimise syntax we treat $X^\dagger$ as primitive and avoid any further details of how it can be precisely encoded using the usual types. 

Relative to a suitable encoding for $X^\dagger$ there exists a computable functional $\dom \colon X^\dagger\to\NN^\ast$ which for each finite partial function $u$ encodes its finite domain as a sequence of natural numbers (if $u$ is encoded in the type $(\NN\times X)^\ast$ as described above this functional would simply be the first projection of the sequence, quotiented by equality). In general, for both partial and finite partial sequences $u$ we write $\dom(u) \subseteq \NN$ to denote the domain of $u$, and write $n\in\dom(u)$ whenever $u$ is defined at $n$. We can assume that membership of $\dom(u)$ is a decidable predicate (i.e. recursive in $u$). 

We write $x\colon X$ or $x^X$ to signify ``$x$ is an object of type $X$", and sometimes write $Y^X$ for the type $X\to Y$. In addition to the basic constructors for dealing with the finite types, $\EHAomega$ contains variables and quantifiers for all types, the predicates $=_\BB$ and $=_\NN$ and corresponding axioms for equality over base types, induction over arbitrary formulas, combinators which allow us to carry out $\lambda$-abstraction, and primitive recursors $\Rec_X$ for each type which satisfy
\begin{equation*}\begin{aligned}\Rec_X^{y,z}(0)&=_X y \\ \Rec_X^{y,z}(n+1)&=_X z_n(\Rec_X^{y,z}(n)).\end{aligned}\end{equation*}
In particular, the recursor of type $\NN$ allows us to carry out definition by cases, and also assign characteristic functions to all quantifier-free formulas of $\EHAomega$. Note that we choose to write $\Rec_X^{y,z}(n)$ instead of the more common $\Rec_X(y,z,n)$. This is simply a notational separation between the parameters that remain fixed throughout the recursion, namely $y$ and $z$, and the parameter over which the recursion takes place, namely $n$. We follow a similar convention when defining bar recursion schemes below.

Finally, higher-type equality $=_X$ for arbitrary $X$ is defined inductively in terms of $=_\NN$, and is treated as fully extensional via the axioms
\begin{equation*}\forall f^{X\to Y},x^X,y^X(x=_X y\to f(x)=_Y f(y)).\end{equation*}
We will often require extensions of $\EHAomega$/$\EPAomega$ with various principles, notably both the axioms $\AC_{\NN, X}$ of countable choice and $\DC$ of countable dependent, which are defined by
\begin{equation*}
\AC_{\NN, X} \; \colon \; \forall n^{\NN} \exists x^X A_n(x)\to\exists f^{\NN \to X}\forall n A_n(f(n))
\end{equation*}
and
\begin{equation*}
\DC_X \; \colon \; \forall n^{\NN},x^X \exists y^X A_n(x,y)\to\exists f^{\NN \to X}\forall n A_n(f(n),f(n+1))
\end{equation*}
respectively, where $A$ is an arbitrary formula in each case. We denote by $\AC_{\NN}$ and $\DC$ the general axiom schemata $(\AC_{\NN, X})$ and $(\DC_X)$ where $X$ ranges over arbitrary finite types.

\subsection{Notation}
\label{sec-intro-notation}

We make use of the following notational conventions:

\begin{itemize}

\item \emph{Finite sequence constructors}. For $s\colon X^\ast$, $|s|$ denotes the length of $s$. We use $\pair{x_0, \ldots, x_{n-1}} \colon X^*$ to denote the finite sequence constructor. Hence, we write $\pair{}$ for the empty sequence and $\pair{x}$ for the singleton sequence containing only $x$. 

\item \emph{Finite sequence concatenation}. Given two finite sequences $s, t \colon X^*$ we write $s * t$ for their concatenation. For a finite sequence $s\colon X^\ast$ and an object $x\colon X$ we often write $s\ast x$ for $s\ast\pair{x}$. We also write $s\ast\alpha\colon X^\NN$ to denote the concatenation of the finite sequence $s \colon X^*$ with an infinite sequence $\alpha\colon X^\NN$. 

\item \emph{Initial finite sequence}. Given an $\alpha \colon X^\NN$ we write $\initSeg{\alpha}{n}=\pair{\alpha(0),\ldots,\alpha(n-1)}$ for the finite initial segment of length $n$ of the infinite sequence $\alpha$.

\item \emph{Finite partial function constructors}. We use $\emptyset \colon X^\dagger$ for the finite partial function with empty domain, and $(n, x) \colon X^\dagger$ for the finite partial function defined only at point $n$ with value $x$.

\item \emph{Ordering of finite partial functions}. Given two finite partial functions $u, v \colon X^\dagger$ we write $u \sqsubseteq v$ if the domain of $v$ contains the domain of $u$; and $u$ and $v$ coincide on the domain of $u$. In the  domain of $u$ is strictly contained in the domain of $v$ we will write $u \sqsubset v$.

\item \emph{Merging finite partial functions}. Given two finite partial functions $u, v \colon X^\dagger$ we write $u \at v$ to denote the ``union" of the two partial functions, where we give priority to the values of $u$ when $u$ and $v$ are both defined at some common point. Given finite sequences $s, t \colon X^*$ we also write $s \at t$ since we can think of $s, t \colon X^*$ as finite partial functions of a particular kind.

\item \emph{Finite partial function update}. For a finite partial function $u \colon X^\dagger$ and an object $x\colon X$ we write $u \oplus (n,x)$ for $u \at (n, x)$, to stress the view that this is an update of $u$ with the new value $x$ at point $n$. Of course, if $u$ is already defined at point $n$ then $u \oplus (n,x) = u$.

\item \emph{Canonical extensions}. The term $\zero_X\colon X$ denotes the usual inductively defined zero object of each type $X$, used as a canonical representative of $X$ -- we use the convention that $\zero_{X^\ast}=\pair{}$ and $\zero_{X^\dagger}=\emptyset$. The canonical extension $\hat{s}\colon X^\NN$ of the finite sequence $s \colon X^*$ is defined by $\hat{s}(i)=s_i$ for $i<|s|$ and else $\hat{s}(i)=\zero_{X}$. The canonical extension $\hat{u} \colon X^\NN$ of a finite partial function $u \colon X^\dagger$ is defined analogously. Given a function $\varphi \colon X^\NN\to R$ we also talk about its canonical extension, and write $\hat{\varphi}$ for the function $\hat{\varphi}(s) = \varphi(\hat{s})$ so that $\hat{\varphi}$ can be either of type $X^* \to R$ or $X^\dagger \to R$. The type of $\hat{\varphi}$ will be clear from the context.

\item \emph{Partial application}. As a generalisation of currying, given a function $\varphi \colon X^\NN\to R$ and an $s \colon X^*$, we write $\varphi_s\colon X^\NN\to R$ for the function defined by $\varphi_s(\alpha) = \varphi(s\ast\alpha)$.

\item \emph{Bounded search}. Given any decidable predicate $P(i)$ on $\NN$, the term $\mu i\leq n\; . \; P(i)$ returns the least $i \leq n$ satisfying $P(i)$, if it exists, or returns $n$ otherwise.

\end{itemize}

\subsection{Spector's bar recursion}
\label{sec-bar-SBR}

The defining equation of Spector's general bar recursor $\BR_{X,R}$ is given by
\begin{equation*}
\BR_{X,R}^{\phi,b,\varphi}(s^{X^\ast}) =_R
\begin{cases}
b(s) & \mbox{if $\varphi(\hat{s})<|s|$} \\
\phi_s(\lambda x^{X} . \BR_{X,R}^{\phi,b,\varphi}(s\ast x)) & \mbox{otherwise}
\end{cases}
\end{equation*}
where the parameters have type $\phi\colon X^\ast\to (X\to R)\to R$, $b\colon X^\ast\to R$ and $\varphi\colon X^\NN \to \NN$ and $X$, $R$ range over arbitrary types. Just as we did with $\Rec_X^{y,z}(n)$, we also write $\BR_{X,R}^{\phi,b,\varphi}(s)$ instead of the more common $\BR_{X,R}(\phi,b,\varphi,s)$, so as to highlight the parameter $s$ over which the recursion takes place. In fact, we will often omit the parameters $\phi,b,\varphi$ from the superscript of $\BR$ when there is no danger of ambiguity. 

The parameter $\varphi$ acts as a `control' for $\BR^{\phi,b,\varphi}_{X,R}(s)$, whose role is to ensure that at some point the recursive calls stop. Therefore Spector's bar recursor is well-founded only if the control parameter eventually satisfies $\ext{\varphi}(s\ast\pair{x_0,\ldots,x_{N-1}})<|s|+N$ (recall that we use the abbreviation $\ext{\varphi}(t)=\varphi(\ext{t})$) for each sequence of recursive calls $s$, $s\ast \pair{x_0}$, $s\ast\pair{x_0,x_1}\dots$. We call this requirement \emph{Spector's condition}, which can be formulated more precisely as
\begin{equation*}\SPEC_X \ \colon \ \forall\varphi^{X^\NN\to\NN}\forall\alpha^{X^\NN}\exists n(\ext{\varphi}(\initSeg{\alpha}{n})<n).\end{equation*}
As demontrated by Howard using a trick attributed to Kreisel, $\SPEC$ must be valid in any model of bar recursion.
\begin{proposition}[Howard/Kreisel \cite{Howard(1968.0)}] \label{howard} $\EHAomega+(\BR)\vdash \SPEC$.\end{proposition}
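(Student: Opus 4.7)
The plan is to argue by contradiction via what is usually called the Kreisel trick: assume $\SPEC_X$ fails and use $\BR_{X,\NN}$ to construct a single natural number that must exceed every natural number. Concretely, if $\SPEC_X$ fails then there exist $\varphi\colon X^\NN\to\NN$ and $\alpha\colon X^\NN$ with $\ext{\varphi}(\initSeg{\alpha}{n})\geq n$ for all $n$. The goal is to pick $\phi$ and $b$ so that Spector's defining equation, applied along $\alpha$, forces the bar recursor to take arbitrarily large values at a single input.

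The choice I would make is $R = \NN$ together with
\begin{equation*}
b(s) := 0, \qquad \phi_s(p) := p(\alpha(|s|)) + 1,
\end{equation*}
of types $X^\ast\to\NN$ and $X^\ast\to(X\to\NN)\to\NN$ respectively. Since $\ext{\varphi}(\initSeg{\alpha}{n})\geq n = |\initSeg{\alpha}{n}|$ for every $n$, the first branch of the defining equation never triggers along $\alpha$, and the equation specialises to
\begin{equation*}
\BR^{\phi,b,\varphi}(\initSeg{\alpha}{n}) \;=\; \phi_{\initSeg{\alpha}{n}}\!\left(\lambda x.\,\BR^{\phi,b,\varphi}(\initSeg{\alpha}{n}\ast x)\right) \;=\; \BR^{\phi,b,\varphi}(\initSeg{\alpha}{n+1}) + 1.
\end{equation*}
A straightforward induction on $n$ inside $\EHAomega$ then yields $\BR^{\phi,b,\varphi}(\pair{}) = \BR^{\phi,b,\varphi}(\initSeg{\alpha}{n}) + n \geq n$ for all $n$, which contradicts the fact that $\BR^{\phi,b,\varphi}(\pair{})$ is a fixed natural number. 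Since the argument is parametric in $X$, it yields every instance of $\SPEC$.

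I expect no genuine obstacle: the entire derivation is a few lines of syntactic unfolding of the defining equation of $\BR$ together with a single induction. The only step requiring any care is to check that $\bigl(\lambda x.\,\BR^{\phi,b,\varphi}(\initSeg{\alpha}{n}\ast x)\bigr)(\alpha(n))$ reduces to $\BR^{\phi,b,\varphi}(\initSeg{\alpha}{n+1})$, which is immediate from $\beta$-conversion and the identity $\initSeg{\alpha}{n}\ast \alpha(n)=\initSeg{\alpha}{n+1}$; no extensional reasoning beyond what $\EHAomega$ already provides is needed. All of the cleverness is concentrated in the choice of $b$ and $\phi$, not in the verification.
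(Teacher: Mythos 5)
Your choice of parameters $\phi_s(p) := p(\alpha(|s|)) + 1$ and $b(s) := 0$ for $\BR_{X,\NN}$ is exactly the Kreisel instantiation, and the equation $\BR(\initSeg{\alpha}{n}) = \BR(\initSeg{\alpha}{n+1}) + 1$ along an $\alpha$ that never hits the bar is the right observation. The paper itself does not spell out a proof of Proposition \ref{howard} (it just cites \cite{Howard(1968.0)}), but its proof of the symmetric analogue, Proposition \ref{prop-sbar}, uses the very same recursor (there called $\theta_{\varphi,\alpha}$, defined from $\sBR$), so your instinct is on target.

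There is, however, a genuine gap in the logical packaging, and it matters because $\EHAomega$ is intuitionistic. Arguing "assume $\SPEC_X$ fails'' has two problems. The small one: from $\neg \SPEC_X$ you cannot intuitionistically extract witnesses $\varphi,\alpha$; you should instead fix arbitrary $\varphi,\alpha$ and aim directly at the inner $\exists n$. The substantive one: with $\varphi,\alpha$ fixed, what your contradiction actually delivers is $\neg\forall n\,(\ext{\varphi}(\initSeg{\alpha}{n})\geq n)$, i.e.\ $\neg\neg\exists n\,(\ext{\varphi}(\initSeg{\alpha}{n}) < n)$, and the passage from this to $\exists n\,(\ext{\varphi}(\initSeg{\alpha}{n}) < n)$ is precisely an instance of Markov's principle, which $\EHAomega$ does not prove. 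So as written the argument proves $\neg\neg\SPEC$, not $\SPEC$.

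The fix is small but essential, and your own computation already supplies it. With $\varphi,\alpha$ fixed, set
\begin{equation*}
N := \BR_{X,\NN}^{\phi,b,\varphi}(\pair{}) + 1.
\end{equation*}
The predicate $\ext{\varphi}(\initSeg{\alpha}{n}) < n$ is quantifier-free and hence decidable, so the bounded statement $\exists n \leq N\,(\ext{\varphi}(\initSeg{\alpha}{n}) < n)$ is decidable too; its negation $\forall n < N\,(\ext{\varphi}(\initSeg{\alpha}{n}) \geq n)$ yields, by your induction carried only up to $N$, $\BR(\pair{}) = \BR(\initSeg{\alpha}{N}) + N \geq N = \BR(\pair{})+1$, which is absurd. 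Having refuted the decidable $\forall n < N$, we may intuitionistically conclude the bounded $\exists n$, and hence $\SPEC_X$. This is exactly the shape of the paper's proof of Proposition \ref{prop-sbar}: the authors exhibit $\theta_{\varphi,\alpha}(\emptyset)$ as an \emph{explicit bound} on where the bar is hit and verify by a (bounded, double) induction that it is reached at or before that bound, precisely to avoid any appeal to Markov's principle. In short, the cleverness is indeed in $\phi$ and $b$, but there is a second, equally load-bearing idea: $\BR(\pair{})$ must be read not merely as an obstruction to $\forall n$-failure but as an effective bound on the witness, and it is this reading that keeps the verification inside $\EHAomega$.
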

For this reason, $\BR$ is not well-defined in the full type structure of all set-theoretic functionals, since $\SPEC$ is clearly not valid in this structure. However, it is well known to exist in most continuous type-structures (such as the Kleene/Kreisel continuous functionals \cite{Kleene(1959.0),Kreisel(1959.0),Troelstra(1973.0)}), and even in the type structure of strongly majorizable functionals \cite{Bezem(1985.0)}, which contains non-continuous functionals.  

Just as normal primitive recursion forms a computational analogue of induction, bar recursion can be viewed as a computational analogue of the principle of \emph{bar induction}, which is well-known in intuitionistic mathematics as an equivalent formulation of dependent choice:
\begin{equation*}\BI \ \colon \ \forall\alpha^{X^\NN}\exists n P(\initSeg{\alpha}{n})\wedge\forall t^{X^\ast}(\forall x^X P(t\ast x)\to P(t))\to P(\pair{}).\end{equation*}
Here $P$ is some predicate over finite sequences.
%
To see, on an infomal level, why bar recursion exists in continuous models, we first note that such models all satisfy the following sequential continuity principle:
\begin{equation*}\CONT \ \colon \ \forall \varphi^{X^\NN\to \NN},\alpha^{X^\NN}\exists N\forall \beta(\initSeg{\alpha}{N} =_{X^\ast} \initSeg{\beta}{N} \to \varphi(\alpha)=_{\NN}\varphi(\beta)). \end{equation*}
From this we can easily derive $\SPEC$: if $N$ is a point of continuity on $\varphi$ and $\alpha$ then $\ext{\varphi}(\initSeg{\alpha}{n}) < n$ holds for $n := \max\{N,\varphi(\alpha)+1\}$. Now, to show that $\BR_{X,R}^{\phi,b,\varphi}(s)$ defines a total object for total arguments $\phi$, $b$, $\varphi$ and $s$, we argue by bar induction on the predicate
$$P(t) \equiv \BR_{X,R}^{\phi,b,\varphi}(s\ast t)\mbox{ is total}.$$
Given an infinite sequence $\alpha\colon X^\NN$ it is clear by $\SPEC$ that $\ext{\varphi}(s\ast\initSeg{\alpha}{n})<|s|+n$ for some $n$ and therefore $\BR^{\phi,b,\varphi}(s\ast\initSeg{\alpha}{n})=b(s\ast\initSeg{\alpha}{n})$ is total. Clearly the bar induction step $\forall t^{X^\ast}(\forall x^X P(t\ast x)\to P(t))$ holds and thus we obtain $P(\pair{})$ and therefore totality of $\BR_{X,R}^{\phi,b,\varphi}(s)$. A broadly similar but somewhat more involved application of bar induction proves that $\BR$ exists in the majorizable functionals (see \cite{Bezem(1985.0),Kohlenbach(2008.0)}).

To summarise, the basic idea behind Spector's bar recursion is that any sequence of recursive calls made by $\BR$ eventually hits a \emph{bar} $s$ at which the condition $\varphi(\hat{s})<|s|$ holds and therefore $\BR(s)$ is assigned a value $b(s)$. These values propagate backwards along the tree of recursive calls ensuring that $\BR$ is defined everywhere.

\section{A demand-driven variant of Spector's bar recursion}
\label{sec-bar}

%

One can view Spector's bar recursion as just one instance of a more general form of backward recursion in which the main argument is some partial function with finite domain (for Spector a finite sequence $s$), and recursive calls are made by extending the domain of this argument (for Spector extending the sequence with one element $s\ast x$). From this perspective it seems that bar recursion is quite constrained in that the domain of its input is always an initial segment of the natural numbers. This has two obvious disadvantages. Firstly, the implicit dependence on the ordering of the natural numbers makes it unclear how to generalise $\BR$ to carry out recursion over partial functions on discrete structures which do not come equipped with a natural ordering. Secondly, adherence to sequentiality means that precise values of the control functional $\varphi$ are never required: all that matters is whether or not $\varphi(\hat{s})<|s|$, or in other words, whether or not $\varphi(\hat{s})$ is within the domain of already computed values. Thus when we consider that in terms of program extraction the parameter $\varphi$ is typically some realizing term extracted from a lemma in a proof, Spector's bar recursor lacks sensitivity in that it ignores precise information from its proof-theoretic environment.

It is natural, then, to ask whether there is alternative to bar recursion which still terminates on inputs $u$ with $\varphi(\ext{u})$ in the domain of $u$, but which searches for these points in a more flexible way, taking into account information provided by $\varphi$. This is the idea behind our variant of bar recursion, which we call \emph{symmetric bar recursion}. The symmetric bar recursor $\sBR_{X,R}$ is given by the defining equation 
\begin{equation*}
\sBR_{X,R}^{\phi,b,\varphi}(u^{X^\dagger})=_R
\begin{cases}
	b(u) & \mbox{if $\ext\varphi(u)\in\dom(u)$} \\
	\phi_u(\lambda x^X\; . \; \sBR^{\phi,b,\varphi}(\upd{u}{\ext\varphi(u)}{x})) & \mbox{otherwise}
\end{cases}
\end{equation*}
where now the parameters have type $\phi\colon X^\dagger \to (X\to R)\to R$, $b\colon X^\dagger\to R$ and $\varphi\colon X^\NN\to \NN$. Recall that the operation $\oplus$ indicates the extension of the partial function $u$ with one more piece of information, analogous to the extension of finite partial functions in the defining equation of $\BR$. 
The crucial difference is that this extension can potentially take place at any point $n\in\NN \backslash \dom(u)$, and so we are no longer restricted to making recursive calls in a sequential fashion. However, this additional freedom requires us to carefully justify the totality of $\sBR$, as its recursive calls are not easily seen to be well-founded. In Definition \ref{def-sBI} below we give a corresponding symmetric bar induction principle which can be used to reason about $\sBR$. First we need the following important definition.

\begin{definition}[Finite $\varphi$-threads of $u \colon X^\dagger$ or $\alpha \colon X^\NN$] \label{defn-S} Given $\varphi\colon X^\NN\to\NN$ and $u\colon X^\dagger$, the \emph{$\varphi$-thread of $u$ of length $i$} is the finite partial function $\initSegs{\varphi}{u}{i} \colon X^\dagger$ inductively defined as
\begin{equation*}
\begin{aligned}
\initSegs{\varphi}{u}{0}&:=\emptyset \\
\initSegs{\varphi}{u}{i+1}&:= \begin{cases} \upd{\initSegs{\varphi}{u}{i}}{n_{\varphi,i}}{u(n_{\varphi,i})} & \mbox{if $n_{\varphi,i}\in\dom(u)$}\\  \initSegs{\varphi}{u}{i} & \mbox{otherwise}\end{cases}
\end{aligned}
\end{equation*}
where $n_{\varphi,i}:=\ext{\varphi}(\initSegs{\varphi}{u}{i})$. Note that when either $n_{\varphi,i}\in\dom(\initSegs{\varphi}{u}{i})$ or $n_{\varphi,i}\notin\dom(u)$, we would have $\initSegs{\varphi}{u}{j}=\initSegs{\varphi}{u}{i}$ for all $j\geq i$. Entirely analogously, define the \emph{$\varphi$-thread of $\alpha$ of length $i$}, also denoted $\initSegs{\varphi}{\alpha}{i} \colon X^\dagger$, as
\begin{equation*}
\begin{aligned}
\initSegs{\varphi}{\alpha}{0}&:=\emptyset \\
\initSegs{\varphi}{\alpha}{i+1}&:= \upd{\initSegs{\varphi}{\alpha}{i}}{n_{\varphi,i}}{\alpha(n_{\varphi,i})}
\end{aligned}
\end{equation*} 
where $n_{\varphi,i}:=\ext{\varphi}(\initSegs{\varphi}{\alpha}{i})$.
%
%
%
\end{definition}

\begin{remark}In what follows we will frequently just write $\initSegs{}{u}{i}$ when $\varphi$ is clear from the context.\end{remark}

\begin{definition}[$\varphi$-threads] \label{defn-Sprop} Let $\varphi\colon X^\NN\to\NN$. We say that a finite partial function $u \colon X^\dagger$ is a \emph{$\varphi$-thread} if $u = \initSegs{\varphi}{u}{|\dom(u)|}$. This can be expressed formally by the decidable predicate 
\begin{equation*}
S_\varphi(u):\equiv\forall n(n\in\dom(u)\to n\in\dom(\initSegs{\varphi}{u}{|\dom(u)|})).
\end{equation*}
It is an easy exercise to see that $S_\varphi(u)\Leftrightarrow u=\initSegs{\varphi}{u}{|\dom(u)|}$, since for all $i$, $n\in\dom(\initSegs{\varphi}{u}{i})$ implies that $\initSegs{\varphi}{u}{i}(n)=u(n)$.\end{definition}

The intuition here is that $\varphi$ works as a control function that dictates the location of the next bar recursive call. Thus $\varphi$-threads are just partial functions that have been constructed using $\varphi$ as a control functional.

\begin{lemma}\label{lem-S}Suppose that $n_{\varphi,i}$ is defined as in Definition \ref{defn-S}. A finite partial function $u$ satisfies $S_{\varphi}(u)$ (i.e. is a $\varphi$-thread) iff for all $i\leq |\dom(u)|$,
\begin{equation*}\initSegs{\varphi}{u}{i}=(n_{\varphi,0},x_0)\oplus (n_{\varphi,1},x_1)\oplus\ldots\oplus (n_{\varphi,i-1},x_{i-1})\end{equation*}
for $x_j = u(n_{\varphi, j})$, where the $n_{\varphi,j}$ are all distinct and all lie in $\dom(u)$. In particular
\begin{equation*}S_{\varphi}(u)\Rightarrow u=\initSegs{\varphi}{u}{l}=(n_{\varphi,0},x_0)\oplus\ldots\oplus (n_{\varphi,l-1},x_{l-1})\end{equation*} 
where $l=|\dom(u)|$.\end{lemma}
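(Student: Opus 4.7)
The plan is to prove the biconditional by induction on $i$ for the forward direction, and then to derive the reverse direction (and the ``in particular'' clause) by specialising to $i = l := |\dom(u)|$.

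For the forward direction, suppose $S_\varphi(u)$, equivalently $u = \initSegs{\varphi}{u}{l}$. I would argue by induction on $i \leq l$ that $\initSegs{\varphi}{u}{i} = (n_{\varphi,0}, x_0) \oplus \cdots \oplus (n_{\varphi,i-1}, x_{i-1})$ with distinct $n_{\varphi,j}$'s all lying in $\dom(u)$ and $x_j = u(n_{\varphi,j})$. The base case $i = 0$ is immediate since $\initSegs{\varphi}{u}{0} = \emptyset$. For the inductive step, assume the claim at stage $i < l$. By the defining equation of $\initSegs{\varphi}{u}{i+1}$, there are two possibilities: either $n_{\varphi,i} \in \dom(u)$, in which case $\initSegs{\varphi}{u}{i+1} = \initSegs{\varphi}{u}{i} \oplus (n_{\varphi,i}, u(n_{\varphi,i}))$, or else $\initSegs{\varphi}{u}{i+1} = \initSegs{\varphi}{u}{i}$.

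The main obstacle, and the only nontrivial step, is to rule out this second case (and more generally to rule out $n_{\varphi,i} \in \dom(\initSegs{\varphi}{u}{i})$) when $i < l$, since we need both that the thread genuinely extends and that $n_{\varphi,i}$ is distinct from the earlier $n_{\varphi,j}$'s. For this I would use the observation highlighted in Definition \ref{defn-S}: if either $n_{\varphi,i} \notin \dom(u)$ or $n_{\varphi,i} \in \dom(\initSegs{\varphi}{u}{i})$, then $\initSegs{\varphi}{u}{j} = \initSegs{\varphi}{u}{i}$ for all $j \geq i$, and in particular $\initSegs{\varphi}{u}{l} = \initSegs{\varphi}{u}{i}$. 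Combined with the hypothesis $u = \initSegs{\varphi}{u}{l}$ and the inductive description of $\initSegs{\varphi}{u}{i}$ (whose domain has size $i$), this gives $|\dom(u)| = i < l = |\dom(u)|$, a contradiction. Hence $n_{\varphi,i} \in \dom(u) \setminus \dom(\initSegs{\varphi}{u}{i})$, so setting $x_i := u(n_{\varphi,i})$ yields the required form at stage $i+1$ with all $n_{\varphi,j}$ distinct.

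For the reverse direction, suppose the explicit form holds for all $i \leq l$. Taking $i = l$, the partial function $\initSegs{\varphi}{u}{l}$ has domain $\{n_{\varphi,0}, \ldots, n_{\varphi,l-1}\}$, which by assumption is a subset of $\dom(u)$ of cardinality $l = |\dom(u)|$, hence equal to $\dom(u)$; and the values agree since $x_j = u(n_{\varphi,j})$. Therefore $u = \initSegs{\varphi}{u}{l}$, i.e.\ $S_\varphi(u)$. The ``in particular'' clause is then just the $i = l$ instance of the forward direction together with the identity $u = \initSegs{\varphi}{u}{l}$ coming from $S_\varphi(u)$.
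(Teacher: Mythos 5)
Your proof is correct and takes essentially the same route as the paper's: induction on $i \leq l$ for the forward direction, ruling out stabilisation of the thread by a cardinality contradiction with $S_\varphi(u)$, and a domain-size argument for the converse. The only (cosmetic) difference is that you phrase the reverse direction via a cardinality comparison of domains, whereas the paper phrases it as containment plus value agreement; the content is identical.
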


\begin{proof}For one direction, assume $S_\varphi(u)$ and set $l:=|\dom(u)|$. We use induction on $i \leq l$. If $i = 0$ then $\initSegs{\varphi}{u}{0}=\emptyset$ by definition. Now for $i<l$ assume that $\initSegs{\varphi}{u}{i}=(n_{\varphi,0},x_0)\oplus\ldots\oplus (n_{\varphi,i-1},x_{i-1})$ for distinct $n_{\varphi,j}$. If either $n_{\varphi, i} \not\in \dom(u)$ or $n_{\varphi, i} \in \{n_{\varphi, 0}, \ldots, n_{\varphi, i-1}\}$ then, by the definition of the $\varphi$-thread of $u$, we would have that $\initSegs{\varphi}{u}{i} = \initSegs{\varphi}{u}{i+1} = \ldots = \initSegs{\varphi}{u}{l}$, and hence $i=|\dom(\initSegs{\varphi}{u}{i})|=|\dom(\initSegs{\varphi}{u}{l})| < l$, contradicting $S_\varphi(u)$. Therefore, we must have $n_{\varphi,i}\in\dom(u)\backslash\{n_{\varphi,0},\ldots,n_{\varphi,i-1}\}$ and
\begin{equation*}\initSegs{\varphi}{u}{i+1}=(n_{\varphi,0},x_0)\oplus\ldots\oplus (n_{\varphi,i-1},x_{i-1})\oplus(n_{\varphi,i},x_i)\end{equation*}
for $x_i=u(n_{\varphi,i})$, where the $n_{\varphi,j}$ are all distinct and belong to $\dom(u)$. The other direction is straightforward: If $\initSegs{\varphi}{u}{l} = (n_{\varphi,0},x_0)\oplus\ldots\oplus (n_{\varphi,l-1},x_{l-1})$ for distinct $n_{\varphi,j}$ then $\dom(\initSegs{\varphi}{u}{l}) = l = \dom(u)$. But since $\initSegs{\varphi}{u}{l}$ is only defined at points where $u$ is defined, and at those points they hold the same value, it means that $u = \initSegs{\varphi}{u}{l}$, i.e. $S_\varphi(u)$.\end{proof}

\begin{example}Define $u:=(1,1)\oplus (2,2)\oplus (3,3)$ i.e. $u$ is the partial identity function defined at $1,2,3$.
\begin{enumerate}

\item Let $\varphi(\alpha):=\max\{\alpha(0),\alpha(1),\alpha(2)\}+1$. We have that $n_{\varphi,0} = \ext{\varphi}(\emptyset) = \max\{0,0,0\}+1=1$ and thus $\initSegs{\varphi}{u}{1}=(1,1)$, and similarly $n_{\varphi,1}=\max\{0,1,0\}+1=2$, $\initSegs{\varphi}{u}{2}=(1,1)\oplus (2,2)$, and $n_{\varphi,2}=\max\{0,1,2\}+1=3$, $\initSegs{\varphi}{u}{3}=(1,1)\oplus (2,2)\oplus (3,3)=u$. Hence, $u$ is a $\varphi$-thread.

\item On the other hand, for $\psi(\alpha):=\max\{\alpha(0),\alpha(1),\alpha(2)\}$ we have $n_{\psi,0}=0\notin\dom(u)$ and thus $\initSegs{\psi}{u}{i}=\emptyset$ for all $i$, so $u$ is not a $\psi$-thread.


\end{enumerate}
\end{example}

\begin{definition} Let us write $\forall u \!\in\! S_{\varphi} \, A(u)$ as an abbreviation for $\forall u (S_{\varphi}(u) \to A(u))$. The principle of \emph{symmetric bar induction} $\sBI$ is given by
\begin{equation*}\label{def-sBI}
\sBI \ \colon \ \forall\varphi^{X^\NN\to\NN}\left(\forall\alpha^{X^\NN}\exists n P(\initSegs{\varphi}{\alpha}{n}) \wedge \forall u\in S_\varphi([\varphi(\ext{u})\notin\dom(u)\to \forall x^X P(\upd{u}{\varphi(\ext{u})}{x})]\to P(u))\to P(\emptyset)\right)
\end{equation*}
where $P$ is an arbitrary predicate on $X^\dagger$. 
%
%
\end{definition}
\begin{theorem} \label{thm-sBI} $\EPAomega + \DC \vdash \sBI$.
\end{theorem}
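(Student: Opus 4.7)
The plan is to argue classically by contradiction, using $\DC$ to build an increasing sequence of $\varphi$-threads $u_0 \sqsubset u_1 \sqsubset \cdots$ with $\neg P(u_i)$ at every stage, and then to amalgamate them into an $\alpha \colon X^\NN$ whose $\varphi$-threads also uniformly falsify $P$, contradicting the bar hypothesis. Assume both premises of $\sBI$ together with $\neg P(\emptyset)$. Taking the contrapositive of the inductive step and reasoning classically, every $\varphi$-thread $u$ with $\neg P(u)$ satisfies $\varphi(\ext u) \notin \dom(u)$ and admits some $x \in X$ with $\neg P(\upd{u}{\varphi(\ext u)}{x})$. A short verification via Lemma~\ref{lem-S} shows that in this situation $\upd{u}{\varphi(\ext u)}{x}$ is itself a $\varphi$-thread: its first $|\dom(u)|$ thread-steps coincide with those of $u$ (each $n_{\varphi,i}$ with $i < |\dom(u)|$ lies in $\dom(u)$ and so differs from the fresh point $\varphi(\ext u)$), after which the next thread-step picks precisely $\varphi(\ext u)$.

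Since $\emptyset$ is trivially a $\varphi$-thread and $\neg P(\emptyset)$ holds, we apply $\DC$ to the relation ``$v$ is a $\varphi$-thread of the form $\upd{u}{\varphi(\ext u)}{x}$ with $\neg P(v)$'' to obtain a sequence $(u_i)_{i \in \NN}$ with $u_0 = \emptyset$, $u_{i+1} = \upd{u_i}{\varphi(\ext{u_i})}{x_i}$ for some $x_i \in X$, each $u_i$ a $\varphi$-thread, and $\neg P(u_i)$ throughout. Define $\alpha \colon X^\NN$ as the common extension of all the $u_i$: set $\alpha(n) := u_i(n)$ whenever $n \in \dom(u_i)$ for some $i$ (well-defined since the $u_i$ are nested), and $\alpha(n) := \zero_X$ otherwise. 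A routine induction on $i$ then yields $\initSegs{\varphi}{\alpha}{i} = u_i$: at the inductive step, $n_{\varphi,i} = \varphi(\ext{u_i})$ is precisely the point at which $\DC$ extended $u_i$, and $\alpha(n_{\varphi,i}) = x_i$ by construction. Hence $\neg P(\initSegs{\varphi}{\alpha}{i})$ for every $i$, contradicting the bar hypothesis.

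The step requiring the most care is the extraction of $\alpha$ from the sequence $(u_i)$, which classically amounts to a case distinction on the $\Sigma^0_1$ condition ``$n \in \dom(u_i)$ for some $i$''. Within $\EPAomega + \DC$ this presents no real difficulty: one may either invoke the classical case analysis directly, or, more cleanly, enrich the $\DC$ application so as to produce at each stage both $u_i$ and a finite approximation to $\alpha$ (padded with $\zero_X$ at untouched positions) and then assemble $\alpha$ in the limit. No choice principle stronger than $\DC$ is invoked, matching the statement of the theorem.
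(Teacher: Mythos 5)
Your proof is correct and follows essentially the same route as the paper's: classically contrapose the inductive step, apply $\DC$ to generate a nested sequence of $\varphi$-threads all falsifying $P$, amalgamate them into a single $\alpha \colon X^\NN$, and verify by induction that $\initSegs{\varphi}{\alpha}{i} = u_i$ to contradict the bar hypothesis. Your two refinements — explicitly checking that $\upd{u}{\varphi(\ext u)}{x}$ remains a $\varphi$-thread, and observing that extracting $\alpha$ is a $\Sigma^0_1$ case distinction rather than a genuine appeal to countable choice (which can be sidestepped by folding the approximation of $\alpha$ into the $\DC$ application) — are sound and make the argument slightly tighter, but do not change its structure.
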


\begin{proof}Fix some $\varphi$ and suppose for a contradiction that the premises of $\sBI$ are true but $\neg P(\emptyset)$. The second premise of $\sBI$ is classically equivalent to
\[ \forall u\in S_\varphi(\neg P(u) \to [\varphi(\ext{u})\notin\dom(u) \wedge \exists x \neg P(\upd{u}{\varphi(\ext{u})}{x})]). \]
Hence, by dependent choice, there exists a sequence $u_0,u_1,\ldots$ of finite partial functions, together with a sequence $x_0,x_1,\ldots$ of elements of $X$, satisfying
\begin{equation*}u_0=\emptyset \mbox{ \ \ \ and \ \ \ } u_{i+1}=u_i\oplus (n_{i},x_i)\end{equation*}
with $n_i = \ext{\varphi}(u_i)\notin\dom(u_i)$. Clearly each $u_i$ is a $\varphi$-thread, i.e. $S_{\varphi}(u_i)$. Moreover,  by construction we have that $\neg P(u_i)$  holds for all $i$. Now, by classical countable choice there exists a function $\alpha\colon X^\NN$ satisfying
\begin{equation*}
\alpha(n):=
\begin{cases}
u_{i}(n) & \mbox{where $i$ is the least such that $n\in\dom(u_i)$, if it exists} \\
\zero_X & \mbox{otherwise}. 
\end{cases}
\end{equation*}
Let us first show by induction on $i$ that $\initSegs{\varphi}{\alpha}{i}=u_i$, for all $i$. First, $\initSegs{\varphi}{\alpha}{0}=\emptyset$ by definition. Assuming that $\initSegs{\varphi}{\alpha}{i}=u_i$ we have $\ext{\varphi}(\initSegs{\varphi}{\alpha}{i}) = \ext{\varphi}(u_i)=n_i$, and therefore $\initSegs{\varphi}{\alpha}{i+1}=\initSegs{\varphi}{\alpha}{i}\oplus (n_i,\alpha(n_i))=u_i\oplus (n_i,\alpha(n_i))$. Now by construction $n_i\notin\dom(u_i)$ and $n_i\in\dom(u_{i+1})$. Thus $\alpha(n_i)=u_{i+1}(n_i)=x_i$ and therefore $\initSegs{\varphi}{\alpha}{i+1}=u_{i+1}$. That concludes the proof that $\initSegs{\varphi}{\alpha}{i}=u_i$. \\
By the first premise of $\sBI$ there exists some $n$ such that $P(\initSegs{\varphi}{\alpha}{n})$, which implies $P(u_n)$, contradicting the assumption that $\neg P(u_i)$ holds for all $i$. \end{proof}

Intuitively, symmetric bar recursion is well-founded only if every sequence of recursive calls eventually arrives at some $u$ satisfying $\ext{\varphi}(u)\in\dom(u)$. Put formally, this statement can be seen as a symmetric analogue of $\SPEC$, namely 
\begin{equation*}
\sSPEC_X \ \colon \ \forall \varphi^{X^\NN\to\NN} \forall \alpha^{X^\NN} \exists n (\ext{\varphi}(\initSegs{\varphi}{\alpha}{n}) \in \dom(\initSegs{\varphi}{\alpha}{n})).
\end{equation*}
In fact, by adapting the proof in \cite{Howard(1968.0)} of Proposition \ref{howard}, we can prove that $\sSPEC$ must be valid in any model of $\sBR$.

\begin{proposition} \label{prop-sbar} Define the term $\theta_{\varphi,\alpha}$ in $\EHAomega + (\sBR)$ with free variables $\alpha\colon X^\NN$ and $\varphi\colon X^\NN\to\NN$ by
\begin{equation*}
\theta_{\varphi,\alpha}(u^{X^\dagger})=
	\begin{cases}
		0 & \mbox{if $\varphi(\ext{u})\in\dom(u)$} \\
		1+\theta_{\varphi,\alpha}(u \oplus (\varphi(\ext{u}),\alpha(\varphi(\ext{u})))) & \mbox{otherwise}.
	\end{cases}
\end{equation*}
Then, provably in $\EHAomega + (\sBR)$, we have $\ext{\varphi}(\initSegs{\varphi}{\alpha}{n})\in \dom(\initSegs{\varphi}{\alpha}{n})$ for some $n \leq \theta_{\varphi,\alpha}(\emptyset)$.
\end{proposition}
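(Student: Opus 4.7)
The plan is to prove, by external induction on $i$, the key quantitative lemma
$$\bigl(\forall j < i.\; \ext{\varphi}(\initSegs{\varphi}{\alpha}{j}) \notin \dom(\initSegs{\varphi}{\alpha}{j})\bigr) \;\to\; \theta_{\varphi,\alpha}(\initSegs{\varphi}{\alpha}{i}) + i = \theta_{\varphi,\alpha}(\emptyset),$$
which intuitively says that each step of the $\varphi$-thread of $\alpha$ at which no termination condition is met decreases the remaining $\theta$-counter by exactly one. Once this is established, write $k := \theta_{\varphi,\alpha}(\emptyset)$ and split on whether some $n < k$ already satisfies $\ext{\varphi}(\initSegs{\varphi}{\alpha}{n}) \in \dom(\initSegs{\varphi}{\alpha}{n})$: if so we are done, and otherwise instantiating the lemma at $i = k$ gives $\theta_{\varphi,\alpha}(\initSegs{\varphi}{\alpha}{k}) = 0$, which by the base clause of the defining equation of $\theta_{\varphi,\alpha}$ forces $\ext{\varphi}(\initSegs{\varphi}{\alpha}{k}) \in \dom(\initSegs{\varphi}{\alpha}{k})$, so that $n = k$ witnesses the claim.

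For the induction itself, the base case $i = 0$ is immediate since $\initSegs{\varphi}{\alpha}{0} = \emptyset$. For the step, assume the premise holds for $i+1$; in particular it holds for $i$, so the induction hypothesis gives $\theta_{\varphi,\alpha}(\initSegs{\varphi}{\alpha}{i}) + i = k$. Writing $n_i := \ext{\varphi}(\initSegs{\varphi}{\alpha}{i})$, the additional assumption $n_i \notin \dom(\initSegs{\varphi}{\alpha}{i})$ means the recursive clause in the definition of $\theta_{\varphi,\alpha}$ applies at $u := \initSegs{\varphi}{\alpha}{i}$, yielding
$$\theta_{\varphi,\alpha}(\initSegs{\varphi}{\alpha}{i}) = 1 + \theta_{\varphi,\alpha}\bigl(\initSegs{\varphi}{\alpha}{i} \oplus (n_i, \alpha(n_i))\bigr).$$
The argument of the outer $\theta_{\varphi,\alpha}$ on the right-hand side is precisely $\initSegs{\varphi}{\alpha}{i+1}$ by Definition~\ref{defn-S}, so rearranging gives $\theta_{\varphi,\alpha}(\initSegs{\varphi}{\alpha}{i+1}) + (i+1) = k$, as required.

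The only mildly subtle point is matching the update $u \oplus (\varphi(\ext{u}), \alpha(\varphi(\ext{u})))$ appearing in the defining recursion of $\theta_{\varphi,\alpha}$ with the update $\upd{\initSegs{\varphi}{\alpha}{i}}{n_{\varphi,i}}{\alpha(n_{\varphi,i})}$ appearing in the construction of the $\varphi$-thread of $\alpha$; but setting $u = \initSegs{\varphi}{\alpha}{i}$ gives $\varphi(\ext{u}) = n_{\varphi,i}$ directly, so the two updates coincide. The external induction on $i$ and the case split are both primitive recursive in $\theta_{\varphi,\alpha}(\emptyset)$ (using decidability of membership in $\dom(\cdot)$), so the whole argument is carried out inside $\EHAomega + (\sBR)$.
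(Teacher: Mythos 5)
Your proof is correct and takes essentially the same approach as the paper's: the paper defines $\beta i := \theta_{\varphi,\alpha}(\initSegs{\varphi}{\alpha}{i})$, observes that $\beta i \neq 0$ iff $\ext{\varphi}(\initSegs{\varphi}{\alpha}{i}) \notin \dom(\initSegs{\varphi}{\alpha}{i})$, and proves (via two small inductions rather than your one) the same key lemma $\forall j < i\,(\beta j \neq 0) \to \beta 0 = i + \beta i$, then concludes by setting $i = \beta 0$. Your single-induction packaging is a mild streamlining of the identical argument.
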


\begin{proof} Fix $\alpha \colon X^\NN$ and $\varphi \colon X^\NN \to \NN$. Let $\beta i := \theta_{\varphi,\alpha}(\initSegs{\varphi}{\alpha}{i})$. By definition of $\theta_{\varphi,\alpha}$ we have
\begin{equation*}
\beta i=
	\begin{cases}
		0 & \mbox{if $\ext{\varphi}(\initSegs{\varphi}{\alpha}{i}) \in \dom(\initSegs{\varphi}{\alpha}{i})$} \\
		1+\beta(i+1) & \mbox{otherwise}.
	\end{cases}
\end{equation*}
First note that, by the definition of $\beta$, we have 
\begin{equation*}(*) \ \ \beta i \neq 0\mbox{ \; iff \; }\ext{\varphi}(\initSegs{\varphi}{\alpha}{i}) \not\in \dom(\initSegs{\varphi}{\alpha}{i}).\end{equation*} 
By induction on $i$, using $(*)$, it is easy to show
\[ \forall i (\forall j \leq i \, (\beta j \neq 0) \to \forall j \leq i \, (\beta j=1+\beta(j+1))). \]
By another induction on $i$, using the above fact, we obtain
\[ \forall i (\forall j < i \, (\beta j \neq 0) \to \beta 0=i+\beta i). \]
Therefore, setting $i=\beta 0$ we have 
\begin{equation*}\forall j<\beta 0 (\beta j \neq 0) \to \beta 0= \beta 0 +\beta (\beta 0).\end{equation*} 
Therefore either $\beta(\beta 0)=0$, or $\exists j<\beta 0(\beta j = 0)$, i.e. $\exists j \leq \beta 0(\beta j = 0)$. Using $(*)$ we have 
\[ \exists j \leq \beta 0(\ext{\varphi}(\initSegs{\varphi}{\alpha}{j})\in\dom(\initSegs{\varphi}{\alpha}{j})). \]
That concludes the proof since $\beta 0 =\theta_{\varphi,\alpha}(\initSegs{\varphi}{\alpha}{0})=\theta_{\varphi,\alpha}(\emptyset)$. \end{proof}

\subsection{Relating $\SPEC$ and $\sSPEC$}

We now make our first link between symmetric bar recursion and Spector's bar recursion via their corresponding axioms $\sSPEC$ and $\SPEC$.

\begin{theorem} \label{thm-sSPEC-SPEC} $\EHAomega + \sSPEC_{X \times \BB} \vdash \SPEC_X$
\end{theorem}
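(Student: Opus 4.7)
The plan is to reduce $\SPEC_X$ to $\sSPEC_{X \times \BB}$ by constructing a control functional $\varphi'$ on $(X \times \BB)^\NN$ whose $\varphi'$-thread over a suitable $\alpha'$ mimics the sequential, left-to-right exploration of Spector bar recursion. The Boolean flag will mark the genuinely defined positions of a partial function, enabling $\varphi'$ to distinguish them from the $\zero_X$ values supplied by canonical extension.

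Concretely, given $\varphi$ and $\alpha$, I would put $\alpha'(n) := (\alpha(n), 1)$ and define
\[ \varphi'(\beta) \; := \; \min\bigl(m,\; \mu i \leq m + 1 \,.\ \pi_2(\beta(i)) = 0\bigr), \qquad m := \varphi(\lambda i.\, \pi_1(\beta(i))). \]
This term lies in $\systemT$ because the inner search is bounded by $\varphi$'s own output. Writing $u_k := (0, \alpha'(0)) \oplus \cdots \oplus (k-1, \alpha'(k-1))$ for the sequential partial function of length $k$, a direct calculation — using that $\lambda i.\, \pi_1(\ext{u_k}(i))$ is exactly $\ext{\initSeg{\alpha}{k}}$ while $\pi_2(\ext{u_k}(i)) = 1$ iff $i < k$ — yields the key identity
\[ \ext{\varphi'}(u_k) \; = \; \min\bigl(\ext{\varphi}(\initSeg{\alpha}{k}),\, k\bigr). \]

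From here a routine induction on $k$ would show that either (a) $\exists j \leq k.\ \ext{\varphi}(\initSeg{\alpha}{j}) < j$, or (b) $\initSegs{\varphi'}{\alpha'}{k} = u_k$ and $\ext{\varphi'}(\initSegs{\varphi'}{\alpha'}{k}) = k \notin \dom(u_k)$: case (b) at $k$ forces the thread to extend to $u_{k+1}$, and a decidable case split on whether $\ext{\varphi}(\initSeg{\alpha}{k+1}) < k+1$ delivers (a) or (b) at $k+1$. Applying $\sSPEC_{X \times \BB}$ to $\varphi',\alpha'$ then yields some $N$ with $\ext{\varphi'}(\initSegs{\varphi'}{\alpha'}{N}) \in \dom(\initSegs{\varphi'}{\alpha'}{N})$, which rules out (b) at $k = N$ and forces (a), producing the required Spector witness. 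The only non-trivial step will be the design of $\varphi'$ itself: exploiting the output of $\varphi$ as an a priori bound for the search over Boolean-zero positions is what keeps the definition primitive recursive while still producing the intended behaviour on canonical extensions of sequential partial functions; the rest is just decidable case analysis.
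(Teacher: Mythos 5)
Your proof is correct and takes essentially the same route as the paper's: you pair $\alpha$ with Boolean flags, and your $\varphi'$ is extensionally identical to the paper's control functional $\theta(\beta) = \mu i \leq \varphi(\lambda k.\pi_0(\beta k))\,(\pi_1(\beta i)=0)$ (the $\min$ with $m$ and the $m{+}1$ bound just re-implement the same bounded search), so the $\varphi'$-thread is forced to march sequentially $\emptyset, u_1, u_2, \ldots$ until Spector's condition fires. The only stylistic difference is that you carry a disjunctive invariant (``(a) or (b)'') through the induction, which lets you use an arbitrary $N$ from $\sSPEC$, whereas the paper takes the least $N$ and proves the two conjuncts of claim ($ii$) directly; the content is the same.
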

\begin{proof} Given $\alpha\colon X^\NN$  and $\varphi\colon X^\NN\to\NN$ we need to produce a point $n$ such that $\ext{\varphi}(\initSeg{\alpha}{n}) < n$. Recall that $\BB = \{0, 1\}$ is the type of Booleans. Define $\tilde\alpha\colon (X\times\BB)^\NN$ and $\theta \colon (X\times\BB)^\NN\to\NN$ in terms of $\alpha$ and $\varphi$ as
\begin{itemize}
	\item[] $\tilde\alpha(n):=\pair{\alpha(n),1}$,
	\item[] $\theta(\beta):=\least i\leq \varphi(\lambda k . \pi_0(\beta k))(\pi_1(\beta i)=_\BB 0)$
\end{itemize}
where $\pi_0 \colon X \times \BB \to X$ and $\pi_1 \colon X \times \BB \to \BB$ are the two projections, and $\least$ is the bounded search operator. Intuitively, we are using the booleans to indicate whether a position is `defined' (i.e. equal to $1$) or not. Hence, the functional $\theta$ returns the first undefined position of $\beta$ which is bounded by $\varphi(\lambda k . \pi_0(\beta k))$, or just $0$ if no such position is found. By $\sSPEC$ there exists some $N$ such that
\begin{itemize}
	\item[($i$)] $\ext{\theta}(\initSegs{\theta}{\tilde{\alpha}}{N})\in\dom(\initSegs{\theta}{\tilde{\alpha}}{N})$.
\end{itemize}
Without loss of generality let $N$ be the least such value. We will show that $\ext{\varphi}(\initSeg{\alpha}{N}) < N$. To do this, we first claim that
\begin{itemize}
	\item[($ii$)] $\forall m \leq N (\dom(\initSegs{\theta}{\tilde{\alpha}}{m}) = \{0, \ldots, m-1\})$ and $\forall m < N (\ext{\varphi}(\initSeg{\alpha}{m})\geq m)$.
\end{itemize}	
The proof of ($ii$) is by induction on $m$. If $m = 0$ the claim is trivial. Now assume that ($ii$) holds for $m$. \\
For the first part, suppose that $m < N$. Then by the induction hypothesis we have
\begin{equation*}
\begin{aligned}
\ext{\theta}(\initSegs{\theta}{\tilde\alpha}{m})
	&=\theta(\pair{\alpha(0),1},\ldots,\pair{\alpha(m-1),1},\pair{\zero,0},\pair{\zero,0},\ldots)\\
	&=\least i\leq \ext{\varphi}(\initSeg{\alpha}{m})(i\geq m)\\
	&=m
\end{aligned},
\end{equation*}
using that $m < N$ and thus $\ext{\varphi}(\initSeg{\alpha}{m})\geq m$ by the second induction hypothesis. Therefore by definition we have $\initSegs{\theta}{\tilde\alpha}{m+1} = \initSegs{\theta}{\tilde\alpha}{m}\oplus (m,\tilde\alpha(m))$, and thus $\dom(\initSegs{\theta}{\tilde\alpha}{m+1})=\{0,\ldots,m\}$. \\
For the second part, suppose for a contradiction that $m<N$ but $\ext{\varphi}(\initSeg{\alpha}{m})<m$. Then by the first part we have
\begin{equation*}\ext{\theta}(\initSegs{\theta}{\tilde\alpha}{m})\leq \ext\varphi(\initSeg{\alpha}{m})<m\end{equation*}
which would imply that $\ext{\theta}(\initSegs{\theta}{\tilde\alpha}{m})\in\{0,\ldots,m-1\}$, contradicting the assumed minimality of $N$. \\
Therefore we have established ($ii$), and setting $m=N$ we have $\dom(\initSegs{\theta}{\tilde\alpha}{N})=\{0,\ldots,N-1\}$. By $(i)$, this of course implies that $\ext\theta(\initSegs{\theta}{\tilde\alpha}{N})<N$, and unwinding the definition $\theta$ we obtain
\begin{equation*}
N > \ext\theta(\initSegs{\theta}{\tilde\alpha}{N})=\least i\leq \ext{\varphi}(\initSeg{\alpha}{N})(i\geq N).
\end{equation*}
If $\ext{\varphi}(\initSeg{\alpha}{N})\geq N$ then the unbounded search would select $N$. So we must have that $\ext{\varphi}(\initSeg{\alpha}{N})< N$. \end{proof}

%
%
%

\begin{theorem} \label{thm-SPEC-sSPEC} $\EPAomega + \AC_{\NN, X} + \SPEC_{X^\dagger} \vdash \sSPEC_X$.
\end{theorem}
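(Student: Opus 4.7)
The strategy is to reduce $\sSPEC_X$ to an instance of $\SPEC_{X^\dagger}$. Given $\varphi \colon X^\NN \to \NN$ and $\alpha \colon X^\NN$, define primitive-recursively the sequence $\beta \colon \NN \to X^\dagger$ by $\beta(i) := \initSegs{\varphi}{\alpha}{i+1}$. The key observation is that $\beta(i+1) = \beta(i)$ holds exactly when $n_{\varphi,i+1} \in \dom(\beta(i))$, i.e., when the $\sSPEC$ condition is witnessed at step $i+1$, so to establish $\sSPEC_X$ it suffices to find a stabilisation point of the sequence $\beta$. Note furthermore that, under the contrapositive hypothesis that no such point exists, one has $|\dom(\beta(i))| = i+1$ for every $i$, a strict monotonicity we will exploit below.

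The next step is to design a control functional $\psi \colon (X^\dagger)^\NN \to \NN$ for which the conclusion $\ext{\psi}(\initSeg{\beta}{n}) < n$ of $\SPEC_{X^\dagger}$ is incompatible with the assumed non-stabilisation of $\beta$. A natural candidate is a functional whose behaviour on canonical extensions of finite sequences is $\ext{\psi}(s) := |\dom(s(|s|-1))|$, so that under the failure hypothesis one has $\ext{\psi}(\initSeg{\beta}{n}) = |\dom(\beta(n-1))| = n$, directly contradicting the bound $\ext{\psi}(\initSeg{\beta}{n}) < n$. An application of $\SPEC_{X^\dagger}$ to this $\beta$ and $\psi$ would then immediately yield the required witness for $\sSPEC_X$.

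The main obstacle is ensuring that $\psi$ is a well-defined term of $\EPAomega$ acting uniformly on the whole function space $(X^\dagger)^\NN$, since the naive definition requires locating the ``last nonempty entry'' of an arbitrary input $\delta$, which is an unbounded search not available in the primitive-recursive framework of $\EPAomega$. This is where $\AC_{\NN, X}$ enters: countable choice lets us extract a representative sequence of $X$-values (for instance, the values of $\alpha$ along the putative $\varphi$-thread) that supplies a primitive-recursively usable bound on the search, thereby yielding a closed term $\psi$ with the desired specification on canonical extensions of finite prefixes of $\beta$. Once this technical hurdle has been dispatched, the rest of the argument is a routine classical unpacking of definitions: the $n$ produced by $\SPEC_{X^\dagger}$, together with the choice of $\psi$, forces $\beta$ to stabilise at some $i < n$, which by the key observation of the first paragraph translates into a witness for $\sSPEC_X$ at step $i+1$.
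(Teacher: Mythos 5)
Your reduction strategy---apply $\SPEC_{X^\dagger}$ to a sequence built from the $\varphi$-thread of $\alpha$---is in the right spirit, and your ``key observation'' relating stabilisation of $\beta(i):=\initSegs{\varphi}{\alpha}{i+1}$ to the $\sSPEC$ condition is correct. But there is a genuine gap at exactly the point you flag as ``the main obstacle'': the control functional $\psi$ with specification $\ext{\psi}(s)=|\dom(s(|s|-1))|$ cannot exist, and $\AC_{\NN,X}$ cannot rescue it. The problem is not merely that the ``last nonempty entry'' requires an unbounded search. Suppose, for contradiction, that $\beta$ never stabilises, so that $\beta(i)$ has domain of size $i+1$ for every $i$. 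Then $\beta$ is itself a total element of $(X^\dagger)^\NN$, and any total $\psi$ (in particular any $\psi$ that can live in a model of $\SPEC_{X^\dagger}$ such as $\modcont$) assigns it a fixed value $M:=\psi(\beta)$ determined by a finite part of $\beta$. Since $\hat{\initSeg{\beta}{n}}$ agrees with $\beta$ on $\{0,\ldots,n-1\}$, for all sufficiently large $n$ we get $\psi(\hat{\initSeg{\beta}{n}})=M$, which is eventually $<n$. So under the failure hypothesis your desired identity $\ext{\psi}(\initSeg{\beta}{n})=n$ must \emph{fail} for large $n$, regardless of what bound $\AC_{\NN,X}$ supplies. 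In other words, the contradiction you plan to extract from $\SPEC_{X^\dagger}$ would have to be against a statement ($\forall n\,\ext{\psi}(\initSeg{\beta}{n})\geq n$) that is itself refutable under the very hypothesis you are working with. Moreover, $\AC_{\NN,X}$ is a choice axiom, not a device for bounding searches inside a term: the paper uses it only to form, via classical comprehension, the sequence of indices $i_n$, not to make an unbounded search ``primitive recursive''.

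The paper's proof avoids this altogether by re-indexing: instead of $\beta(i)=\initSegs{\varphi}{\alpha}{i+1}$ (indexed by thread step), it sets $\tilde\alpha(n)=\initSegs{\varphi}{\alpha}{i_n}$ where $i_n$ is the least thread step at which position $n\in\NN$ becomes defined, so that $\tilde\alpha$ is indexed by natural-number positions. The control functional $\theta(\beta)=\ext{\varphi}(d(\beta))$ then uses a \emph{bounded} diagonal search $d(\beta)(j)$ over $k\leq j$, which is definable in $\EHAomega$ with no continuity issues, and the $N$ produced by $\SPEC_{X^\dagger}$ is used indirectly by taking the maximum $i_m$ of $\{i_0,\ldots,i_{N-1}\}$. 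This re-indexing and diagonalisation are precisely what let the search remain bounded, and they are not a routine technicality: they are the heart of the argument. To repair your proposal you would effectively have to reconstruct this machinery.
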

\begin{proof} Let $\alpha\colon X^\NN$ and $\varphi\colon X^\NN\to\NN$ be given. We must find $n$ such that $\ext{\varphi}(\initSegs{\varphi}{\alpha}{n}) \in \dom(\initSegs{\varphi}{\alpha}{n})$. Using $\AC_{\NN, X}$ and classical logic we can define the sequence of indices $(i_n)_{n \in \NN}$ as
\eqleft{
i_n :=
\begin{cases}
	i & \mbox{where $i$ is the least such that $n \in \dom(\initSegs{\varphi}{\alpha}{i})$, if such $i$ exists} \\[1mm]
	0 & \mbox{if no such $i$ exists}.
\end{cases}
}
Using $(i_n)_{n \in \NN}$ we can then define the sequence $\tilde\alpha\colon (X^\dagger)^\NN$ as
\eqleft{
\tilde\alpha(n):=\initSegs{\varphi}{\alpha}{i_n.} 
}
Note that $\tilde\alpha$ represents a characteristic function in the following sense:
\begin{equation*}
(\ast) \ \ n \in \dom(\tilde\alpha(n)) \Leftrightarrow \exists i(n\in\dom(\initSegs{\varphi}{\alpha}{i}))
\end{equation*}
Next, we primitive recursively define the `diagonalisation' function $d\colon (X^\dagger)^\NN\to (X+\unit)^\NN$ by
\eqleft{
d(\beta)(j)
:=_{X+\unit}
\begin{cases}
	\beta(k)(j) & \mbox{for least $k\leq j$ such that $j\in\dom(\beta(k))$} \\[1mm]
	\mbox{undefined} & \mbox{if no such $k \leq j$ exists}.
\end{cases}
}
Finally, we define the functional $\theta\colon (X^\dagger)^\NN\to\NN$ by 
\eqleft{\theta(\beta):=\ext{\varphi}(d(\beta)).}
Now, applying $\SPEC_{X^\dagger}$ to $\theta$ and $\tilde\alpha$ obtain a number $N$ such that
\begin{itemize}
\item[($i$)] $\ext{\theta}(\initSeg{\tilde\alpha}{N})<N$.
\end{itemize}
Let $i_m$ be the maximum number in the set $\{i_0,\ldots,i_{N-1}\}$. We claim that $i_m$ is our desired witness, i.e. $\ext{\varphi}(\initSegs{\varphi}{\alpha}{i_m})\in\dom(\initSegs{\varphi}{\alpha}{i_m})$. First, let $\embpar{\cdot} \colon X^\dagger \to (X+\unit)^\NN$ denote the embedding of finite partial functions into the type of arbitrary partial sequences. We prove that
\begin{itemize}
\item[($ii$)] $\ext{d}(\initSeg{\tilde\alpha}{N})=\embpar{\tilde\alpha(m)}$.
\end{itemize}
We consider two cases: \\
If $j\notin \dom(\ext{d}(\initSeg{\tilde\alpha}{N}))$ then by definition of $d$ we have that $\neg \exists k \leq j (j \in \dom(\widehat{\initSeg{\tilde\alpha}{N}}(k)))$. We consider a further two cases. If $j<N$ then $j\notin\dom(\tilde\alpha(k))$ for all $k\leq j$, which in particular implies $j\notin\dom(\tilde\alpha(j))$ and thus by $(\ast)$ we get $j\notin\dom(\initSegs{\varphi}{\alpha}{i_m})=\tilde\alpha(m)$. If $j \geq N$ then $j\notin\dom(\tilde\alpha(k))$ for all $k<N$, which in particular implies $j\notin\dom(\tilde\alpha(m))$. \\
If $j\in\dom(d(\widehat{\initSeg{\tilde\alpha}{N}}))$ we know that $j\in\dom(\tilde\alpha(k))$ for some $k\leq j$ (with $k<N$) and $\ext{d}(\initSeg{\tilde\alpha}{N})(j)=\tilde\alpha(k)(j)=\alpha(j)$, and since $i_k\leq i_m$ we have $\tilde\alpha(k) \sqsubseteq \tilde\alpha(m)$ and thus $\tilde\alpha(m)(j)=\alpha(j)$, which concludes the proof of ($ii$). \\
Therefore, using our usual abbreviation $n_{\varphi,k}=\ext\varphi(\initSegs{\varphi}{\alpha}{k})$, we obtain that $n_{\varphi, i_m} < N$ as follows:
\[
n_{\varphi,i_m} = \ext{\varphi}(\initSegs{\varphi}{\alpha}{i_m})=\ext{\varphi}(\embpar{\initSegs{\varphi}{\alpha}{i_m}})=\ext{\varphi}(\embpar{\tilde\alpha(m)})\stackrel{(ii)}{=}\ext{\varphi}(d(\widehat{\initSeg{\tilde\alpha}{N}}))=\theta(\widehat{\initSeg{\tilde\alpha}{N}})\stackrel{(i)}{<}N.
\]
Now, to prove the main result, suppose for a contradiction that $n_{\varphi,i_m}\notin\dom(\initSegs{\varphi}{\alpha}{i_m})$. Then by the definition of $\initSegs{\varphi}{\alpha}{i}$ we have $n_{\varphi,i_m}\in\dom(\initSegs{\varphi}{\alpha}{i_m+1})$, and moreover $i_m+1$ is the least such index and we have $i_{n_{\varphi,i_m}}=i_m+1$. But by maximality of $i_m$ for the set $\{i_0,\ldots,i_{N-1}\}$ and the fact that $n_{\varphi,i_m}<N$ we have $i_m\geq i_{n_{\varphi,i_m}}=i_m+1$, a contradiction.\end{proof}

%
%
%
%

In section \ref{sec-equiv} we expand the ideas in the proofs of Theorems \ref{thm-sSPEC-SPEC} and \ref{thm-SPEC-sSPEC} to show that the recursors $\BR$ and $\sBR$ themselves are primitive recursively equivalent. This in particular implies the following:

\begin{theorem}\label{thm-sBR-CONT}The Kleene/Kreisel continuous functionals $\modcont$ are a model of $\sBR$.\end{theorem}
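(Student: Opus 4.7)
The plan is to obtain this essentially as a corollary of the primitive recursive inter-definability of $\BR$ and $\sBR$ established in Section \ref{sec-equiv}. Once that equivalence is in place, closure of $\modcont$ under $\sBR$ reduces to closure of $\modcont$ under $\BR$ together with closure under primitive recursion, both of which are classical results. This is presumably the intended proof, indicated by the phrase ``This in particular implies the following'' immediately preceding the theorem statement.

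For a self-contained argument not relying on Section \ref{sec-equiv}, I would instead mimic the $\BI$-based totality argument for $\BR$ sketched in Section \ref{sec-bar-SBR}. The two ingredients are the axiom $\sSPEC$ and the scheme $\sBI$. The cleanest way to secure $\sSPEC$ in $\modcont$ is via Theorem \ref{thm-SPEC-sSPEC}, using that both $\SPEC$ (standard in $\modcont$) and $\AC_{\NN,X}$ hold there. The scheme $\sBI$ is valid in $\modcont$ by Theorem \ref{thm-sBI} together with the validity of $\DC$ in $\modcont$. With these in hand, fix total parameters $\phi,b,\varphi$ and apply $\sBI$ to the predicate
\[
P(u) \; :\equiv \; \sBR^{\phi,b,\varphi}(u) \mbox{ is total}.
\]
The bar premise of $\sBI$ is exactly $\sSPEC$, since at any $u = \initSegs{\varphi}{\alpha}{n}$ with $\ext{\varphi}(u)\in\dom(u)$ the defining equation of $\sBR$ returns the total value $b(u)$. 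The induction step is immediate from the second clause of the defining equation, because $\phi(u)(\lambda x . \sBR(\upd{u}{\ext{\varphi}(u)}{x}))$ is total whenever all its branches are. Thus $P(\emptyset)$ holds, witnessing totality of $\sBR^{\phi,b,\varphi}(\emptyset)$, and an analogous argument relativised to a fixed $u_0$ gives totality on arbitrary total inputs.

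There is no genuine obstacle here, since the forward-referenced route reduces the theorem to a known model-theoretic fact about $\BR$, and the direct route is a straightforward transposition of the standard $\BI$ argument to the symmetric setting. The most delicate step in the self-contained version would be verifying $\sSPEC$ directly from sequential continuity $\CONT$ rather than via Theorem \ref{thm-SPEC-sSPEC}: the $\varphi$-threads of $\alpha$ need not agree with $\alpha$ on any initial segment, so continuity cannot be invoked at $\alpha$ itself but must be applied at the pointwise limit of the growing threads---precisely the subtlety that Theorem \ref{thm-SPEC-sSPEC} lets one bypass by borrowing $\SPEC$ and $\AC_{\NN,X}$ wholesale.
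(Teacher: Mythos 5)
Your primary route---deriving the theorem from the primitive recursive definability of $\sBR$ from $\BR$ (Theorem \ref{thm-BR-sBR}), reducing $\sBI$ to $\DC$ via Theorem \ref{thm-sBI}, and invoking the standard fact that $\modcont$ models $\EPAomega+\DC+\BR$---is exactly the paper's proof. Your secondary self-contained sketch also closely mirrors the informal domain-theoretic argument the paper gives immediately after the theorem, so the proposal is correct and matches the paper's approach.
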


\begin{proof} By Theorem \ref{thm-BR-sBR}, there is a term in $\mathcal{T} :\equiv \EHAomega + \sBI + \BR$ which satisfies the defining equation of $\sBR$ provably in $\mathcal{T}$. By Theorem \ref{thm-sBI} we can reduce $\sBI$ to $\DC$ over $\EPAomega$, and thus $\mathcal{T}$ can be reduced to $\EPAomega+\DC+\BR$. But it is well-known (see e.g. \cite{Troelstra(1973.0)}) that $\modcont$ is a model of $\EPAomega+\DC+\BR$, and therefore we can conclude that $\sBR$ exists in $\modcont$.  \end{proof}

We can also give a more direct (though informal) domain-theoretic argument to the existence of $\sBR$ in $\modcont$. 
%
%
It is a standard result \cite{Ershov(1977.0)} that the total continuous functionals $\modcont$ are just the extensional collapse of the partial continuous functionals $\modpar$. Now $\sBR$ can be easily defined in $\modpar$ as the least fixpoint $\Phi$ of its recursive defining equation, since $\modpar$ has the property that all recursive functionals $Z\to Z$ have a fixed point $p\colon Z$.

Now it is well-known that all total continuous functionals $\varphi$, $\alpha$ satisfy Spector's property $\SPEC$, and thus adapting Theorem \ref{thm-SPEC-sSPEC} to the total continuous functionals using the fact that $\AC_{\NN}$ is validated in continuous models (since these contain all set-theoretic sequences $\NN\to X$), we have that $\sSPEC$ also holds for all total $\varphi$, $\alpha$.

We can then prove that the fixpoint $\Phi$ defined is a total function using $\sBI$ (also valid in continuous models since as shown in Theorem \ref{thm-sBI} it follows directly from $\DC$). Suppose that the parameters $\phi$, $\varphi$ and $b$ and an argument $v$ of $\Phi$ are total, and let
\begin{equation*}
P(u^{X^\dagger}):\equiv \mbox{$\Phi(v\at u)$ is total}.
\end{equation*}
Then by $\sSPEC$ on the functional $\psi(\alpha):= \varphi(v\at\alpha)$ we have that $\forall \alpha \exists n (\ext{\psi}(\initSegs{\psi}{\alpha}{n}) \in \dom(\initSegs{\psi}{\alpha}{n}))$. By the definition of $\psi$ this implies
$$\forall \alpha \exists n (\ext{\varphi}(v \at \initSegs{\varphi}{\alpha}{n}) \in \dom(v \at \initSegs{\varphi}{\alpha}{n}))$$
and hence $\forall\alpha\exists n P(\initSegs{\varphi}{\alpha}{n})$, the first hypothesis of $\sBI$. For the second hypothesis of $\sBI$ as follows, we assume that for $u\in S_{\psi}$ we have $\psi(\ext{u}) \notin \dom(u) \to \forall x^X P(\upd{u}{\psi(\ext{u})}{x})$, and aim to prove that $\Phi(v \at u)$ is total. If $\Phi(v \at u)$ was not total then obviously we would have that $\ext{\psi}(u)\notin\dom(v\at u)$, which implies $\ext{\psi}(u)\notin\dom(u)$. But in this case $\forall x P(u\oplus (\ext{\psi
}(u),x)))$ implies that $\lambda x . \Phi(v\at u\oplus (\ext{\varphi}(v\at u),x))$ is total, and thus so is $\Phi(v\at u)$. Therefore since both premises of $\sBI$ hold, we can conclude $P(\emptyset)$ i.e. $\Phi(u)$ is total. 

\section{The Dialectica interpretation of countable choice}
\label{sec-dialectica}

In the following sections we assume that the reader is broadly familiar with G\"{o}del's Dialectica intepretation and its role in the extraction of computational content from proofs -- details of which can be found in e.g. \cite{AvFef(1998.0),Kohlenbach(2008.0)} -- although we make an effort to keep the main flow of ideas as self-contained as possible. 

Recall that the Dialectica interpretation translates each formula $A$ in the language of some theory $\mathcal{T}$ to a quantifier-free formula $\dt{A}{x}{y}$ in some `verifying' functional theory $\mathcal{S}$, where $x$ and $y$ are (possibly empty) tuples of variables of some finite type. The idea is that $A$ is (classically) equivalent to $\exists x\forall y\dt{A}{x}{y}$, and that the interpretation $\mathcal{T}\to\mathcal{S}$ is sound if whenever $\mathcal{T}\vdash A$ we can extract a realizer $t$ for $\exists x$ so that $\mathcal{S}\vdash \dt{A}{t}{y}$. When $\mathcal{T}$ is a classical theory, one typically precomposes the Dialectica intepretation with a negative translation in order to obtain soundness, a combination normally referred to as the \emph{ND interpretation}.

The Dialectica interpretation was conceived by G\"{o}del in the 1930s, and published much later in a seminal paper of 1958 \cite{Goedel(1958.0)} in which it was shown that Peano arithmetic can be ND interpreted into the system $\systemT$ of primitive recursive functionals in all finite types. In fact it is not too difficult to lift G\"{o}del's soundness proof to the higher-type theory $\WEPAomega+\QFAC$ of \emph{weakly-extensional} Peano arithmetic with the quantifier-free axiom of choice (see \cite{Kohlenbach(2008.0)} for details). On the other hand, for the addition of computationally non-trivial choice principles, such as the axiom of countable choice $\AC_{\NN}$, for arbitrary formulas, the primitive recursive functionals no longer suffice for soundness of the interpretation. In fact, over $\WEPAomega$ countable choice is strong enough to derive the full comprehension schema
\begin{equation*}\CA \ \colon \ \exists f^{\NN \to X}\forall n(f(n)=0\leftrightarrow A(n))\end{equation*}
and so the theory $\WEPAomega + \QFAC + \AC_{\NN}$ is already capable of formalising a large portion of mathematical analysis, and is thus considerably stronger than Peano arithmetic. Nevertheless, just a few years after G\"{o}del's paper, Spector \cite{Spector(1962.0)} proved that one could indeed extend the Dialectica interpretation to full classical analysis provided we add bar recursion to system $\systemT$.

\subsection{The countable choice problem}
\label{sec-dialectica-problem}

Spector's main idea can be appreciated from a completely abstract perspective, independently of the full details of the Dialectica interpretation. Spector observed that in order to extend the ND interpretation to $\WEPAomega + \QFAC + \AC_{\NN}$, it suffices to find some way of realizing the Dialectica interpretation of the double negation shift, a non-constructive principle given by
\begin{equation*}\DNS \ \colon \ \forall n\neg\neg B(n)\to\neg\neg\forall n B(n).\end{equation*}
Now, suppose that the Dialectica interpretation of $B(n)$ is $\dt{B(n)}{x}{y}$ where $x\colon X$ and $y\colon Y$ are tuples of variables of the appropriate type. Then the Dialectica interpretation of $\DNS$ is given by
\begin{equation*}\dt{\DNS}{f,p,n}{\varepsilon,q,\varphi}=\dt{B(n)}{\varepsilon_np}{p(\varepsilon_np)}\to\dt{B(\varphi f)}{f(\varphi f)}{qf}.\end{equation*}
In other words, to solve the Dialectica interpretation of $\DNS$, for each given formula $B$ one must produce realizers $f\colon X^\NN$, $p\colon X\to Y$ and $n\colon\NN$ in terms of the parameters $\varepsilon\colon\NN\to (X\to Y) \to X$, $q\colon X^\NN\to Y$ and $\varphi\colon X^\NN\to\NN$ satisfying $\dt{\DNS}{f,p,n}{\varepsilon,q,\varphi}$. Spector approached this by tackling a stronger problem, namely to solve the underlying system of equations 
\begin{equation}\label{eqn-spector}
\begin{aligned}
\varphi f&=n \\
f n&=\varepsilon_n p \\
qf &= p(\varepsilon_np)
\end{aligned}
\end{equation}
in $f$, $p$ and $n$. We call the equations (\ref{eqn-spector}) \emph{Spector's equations}, and the issue of solving them the \emph{countable choice problem}. It is clear that a solution Spector's equations is also a realizer for $\DNS$, even independent of the formula $B$, and thus to extend the ND interpretation to classical analysis it suffices to find a general solution to the countable choice problem.

\subsection{Spector's bar recursive solution}
\label{sec-dialectica-spector}

Spector's classic solution to the countable choice problem was to use the general bar recursion $\BR$ to construct a term $\SBR_X^{\varepsilon,q,\varphi}$ in the parameters $\varepsilon$, $q$ and $\varphi$ of the problem, which satisfies the defining equation
\begin{equation*}
\SBR_X^{\varepsilon,q,\varphi}(s) =_{X^*} s \at
\begin{cases}
\pair{} & \mbox{if $\varphi(\ext{s})<|s|$} \\
\SBR_X^{\varepsilon,q,\varphi}(s\ast a_s) & \mbox{otherwise}.
\end{cases}
\end{equation*}
where $a_s := \varepsilon_{|s|}(\lambda x . \ext{q}(\SBR_X^{\varepsilon,q,\varphi}(s\ast x)))$. Actually Spector's defines a slightly different (but equivalent) variant of $\SBR_X^{\varepsilon,q,\varphi}(s)$, but precise details are not particularly relevant here (see e.g. \cite{OliPow(2012.2)}).
\begin{theorem}[\cite{Spector(1962.0)}]\label{thm-spector}Define
\begin{equation*}
\begin{aligned}
t &:=_{X^\ast}\SBR_X^{\varepsilon,q,\varphi}(\pair{}) \\
p_i &:=_{X\to Y}\lambda x . \ext{q}(\SBR_X^{\varepsilon,q,\varphi}(\initSeg{t}{i}\ast x))
\end{aligned}
\end{equation*}
where $i<|t|$ in the second equation. Then for all $0\leq i<|t|$ we have
\begin{equation} \label{eqn-spector-SBR}
\begin{aligned}
t_i &=\varepsilon_i p_i \\
\ext{q} t &= p_i(\varepsilon_ip_i).
\end{aligned}
\end{equation}
\end{theorem}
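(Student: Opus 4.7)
The plan is to unfold the defining equation of $\SBR_X^{\varepsilon,q,\varphi}$ into an explicit trace. Starting from $s_0 := \pair{}$, I would define inductively $s_{i+1} := s_i \ast a_{s_i}$ as long as $\ext{\varphi}(s_i) \geq i$. By Spector's condition $\SPEC$, which holds in $\EHAomega + (\BR)$ (Proposition \ref{howard}), this process terminates at some least $N$ with $\ext{\varphi}(s_N) < N$. The claim is that $t = s_N$, so in particular $|t| = N$ and $\initSeg{t}{i} = s_i$ for every $i \leq N$.

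The central lemma I would establish is that $\SBR_X^{\varepsilon,q,\varphi}(s_i) = t$ for all $i \leq N$, by downward induction on $i$. The base case $i = N$ is immediate from the defining equation, since $\ext{\varphi}(s_N) < |s_N|$ and therefore $\SBR(s_N) = s_N \at \pair{} = s_N = t$. For the inductive step, when $i < N$ the minimality of $N$ forces $\ext{\varphi}(s_i) \geq i$, so the second clause applies and gives $\SBR(s_i) = s_i \at \SBR(s_{i+1}) = s_i \at t = t$, where the final equality uses the induction hypothesis together with the fact that $t$ extends $s_i$. In parallel one establishes, essentially for free in the same induction, the invariant that $\SBR(s)$ always extends its input $s$; this is what makes the $\at$ operator behave as the identity on the prefix and justifies the above simplifications.

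Granted the central lemma, both equations of (\ref{eqn-spector-SBR}) follow by unwinding definitions. For the first, $t_i$ is by construction the element $a_{s_i} = \varepsilon_{|s_i|}(\lambda x . \ext{q}(\SBR(s_i \ast x)))$ appended at step $i$, which equals $\varepsilon_i p_i$ exactly because $|s_i| = i$ and $s_i = \initSeg{t}{i}$. For the second, $p_i(\varepsilon_i p_i) = p_i(t_i) = \ext{q}(\SBR(\initSeg{t}{i} \ast t_i)) = \ext{q}(\SBR(s_{i+1})) = \ext{q}\, t$, where the last step invokes the central lemma. I do not foresee any genuine obstacle: the whole argument is syntactic manipulation of the defining equation, with the well-foundedness of $\SBR$ entering only to justify the existence of the terminating index $N$, and the extension invariant being the only point worth isolating as a separate bookkeeping remark.
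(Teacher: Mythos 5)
Your proof is correct and is exactly the standard unfolding argument the paper appeals to by citation (the paper itself relegates the details to \cite{Kohlenbach(2008.0),Oliva(2006.2)}, and the structure you give is also the one implicitly used in the proof of Corollary \ref{cor-spector}). One small presentational caveat: your central lemma is stated as $\SBR(s_i) = t$ for $i \leq N$, and your base case $i = N$ already invokes $s_N = t$, which is itself part of what is being established — the clean repair is to prove $\SBR(s_i) = s_N$ by downward induction (base case $\SBR(s_N) = s_N$ from the first clause of the defining equation, step $\SBR(s_i) = s_i \at \SBR(s_{i+1}) = s_i \at s_N = s_N$ since $s_i$ is a prefix of $s_N$), from which $t = \SBR(s_0) = s_N$ drops out at $i = 0$ and your use of the lemma in both halves of (\ref{eqn-spector-SBR}) goes through unchanged.
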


\begin{proof}This is a standard induction argument see e.g. \cite{Kohlenbach(2008.0),Oliva(2006.2)}.\end{proof}

\begin{corollary} \label{cor-spector}
Define $t$ and $p_i$ in the parameters $\varepsilon$, $q$ and $\varphi$ as in Theorem \ref{thm-spector}. Then $f:=\ext{t}$, $n:=\varphi f$, and $p:=p_n$ solve Spector's equations (\ref{eqn-spector}).
\end{corollary}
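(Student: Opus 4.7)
The plan is to observe that Corollary \ref{cor-spector} is essentially a direct substitution of the index $i = n$ into the conclusion of Theorem \ref{thm-spector}, the only real content being to verify that $n$ lies in the admissible range $0 \leq n < |t|$ for which that theorem applies.

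First I would dispatch the first Spector equation $\varphi f = n$, which is immediate from the very definitions $f := \ext{t}$ and $n := \varphi f$. The core of the argument is then to show $n < |t|$. For this I would unfold the defining equation of $\SBR_X^{\varepsilon,q,\varphi}$ at the input $\pair{}$: by inspection $\SBR(s) = s$ exactly when the stopping condition $\varphi(\ext{s}) < |s|$ holds, while otherwise $\SBR(s) = s \at \SBR(s \ast a_s)$, which strictly extends $s$. Since we are working inside a model of $\BR$, Spector's condition $\SPEC$ is validated (by Proposition \ref{howard}), so the recursion terminates on every branch and hence the base case is eventually reached. Consequently $t = \SBR_X^{\varepsilon,q,\varphi}(\pair{})$ satisfies $\varphi(\ext{t}) < |t|$, i.e. $n < |t|$; and $n \geq 0$ since $\varphi$ takes values in $\NN$.

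With $0 \leq n < |t|$ established I would instantiate (\ref{eqn-spector-SBR}) at $i = n$. The first equation of the pair gives $t_n = \varepsilon_n p_n$; since $n < |t|$ we have $f(n) = \ext{t}(n) = t_n$, and by definition $p = p_n$, so this rewrites as $f(n) = \varepsilon_n p$, the second Spector equation. The second equation gives $\ext{q}(t) = p_n(\varepsilon_n p_n)$, which unfolds as $q(f) = p(\varepsilon_n p)$, the third Spector equation. I do not expect any genuine obstacle here: the only nontrivial ingredient is the termination fact for $\SBR$, which is inherited directly from Spector's original analysis of $\BR$, and everything else is bookkeeping.
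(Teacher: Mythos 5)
Your proof is correct and, at the top level, mirrors the paper's: reduce everything to verifying $\varphi(\ext{t}) < |t|$ and then instantiate the equations (\ref{eqn-spector-SBR}) at $i=n$. The difference is in how that inequality is obtained. You argue operationally: invoke $\SPEC$ (via Proposition~\ref{howard}) to conclude that the chain of recursive calls $\pair{},\ \pair{a_{\pair{}}},\ \pair{a_{\pair{}},a_{\pair{a_{\pair{}}}}},\ \ldots$ terminates, and then read off that $t = \SBR(\pair{})$ is the sequence at the leaf and so satisfies the stopping condition. The paper instead gives a purely equational argument that never invokes $\SPEC$: one first proves $\SBR(\pair{}) = \SBR(t)$ by a bounded induction on $i \le |t|$ showing $\SBR(\pair{}) = \SBR(\initSeg{t}{i})$, and then notes that $\varphi(\ext{t}) \ge |t|$ would force $\SBR(t) = \SBR(t \ast a_t)$, whence $|t| = |\SBR(\pair{})| = |\SBR(t)| = |\SBR(t \ast a_t)| \ge |t \ast a_t| = |t|+1$, a contradiction. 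Both routes are sound. The paper's is the more economical one to formalize inside $\EHAomega + (\BR)$, since its auxiliary lemma is an ordinary induction up to $|t|$ rather than a claim about the evaluation trace; yours has the virtue of exposing the operational content directly. One place to tighten: the passage from ``the recursion terminates'' to ``$t$ itself satisfies the stopping condition'' is where the real work sits. It rests on the observation that $\SBR(s) = \SBR(s \ast a_s)$ whenever $\varphi(\ext{s}) \ge |s|$ (because $\SBR(s\ast a_s)$ extends $s$, so the $\at$ absorbs $s$), which is exactly what lets the value propagate unchanged from $\pair{}$ down to the leaf; termination alone does not tell you that the output of $\SBR(\pair{})$ equals the terminating node.
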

\begin{proof} The proof essentially reduces to verifying that $\varphi(\ext{t})<|t|$. The result then follows directly from Theorem \ref{thm-spector} and the equations (\ref{eqn-spector-SBR}). A sketch of the argument is as follows. Note that $\Phi(s)$ is an extension of $s$ and hence $|\Phi(s)| \geq |s|$. One first shows that if $t=\Phi(\pair{})$ then $\Phi(\pair{})=\Phi(t)$. Hence, if $\varphi(\ext{t})\geq |t|$ we would have that $\Phi(t)=\Phi(t\ast a_t)$ and thus $|t|=|\Phi(\pair{})|=|\Phi(t)|=|\Phi(t * a_t)|\geq |t\ast a_t|=|t|+1$, a contradiction. \end{proof}

\subsection{A symmetric solution}
\label{sec-dialectica-symmetric}

We now present our alternative solution to the countable choice problem which is based on our symmetric bar recursor $\sBR$ instead of the usual Spector recursor $\BR$. Our first step is to define a symmetric version of the special recursor $\SBR_X^{\varepsilon,q,\varphi}(s)$, which takes parameters $\varepsilon$, $q$ and $\varphi$ of the same type as those in for $\SBR$, but whose input and output are now \emph{finite partial functions}:
\begin{equation*}
\sSBR_X^{\varepsilon,q,\varphi}(u^{X^\dagger}) =_{X^\dagger}u\at
\begin{cases}
\emptyset & \mbox{if $n_u \in \dom(u)$} \\
\sSBR_X^{\varepsilon,q,\varphi}(u\oplus(n_u,a_u)) & \mbox{otherwise}
\end{cases}
\end{equation*}
where $n_u=\varphi(\ext{u})$ and $a_u=\varepsilon_{n_u}(\lambda x . \ext{q}(\sSBR_X^{\varepsilon,q,\varphi}(u\oplus(n_u,x)))$. We note without proof that $\sSBR$ is indeed definable from $\sBR$:
\begin{proposition} The functional $\sBR^{\phi^{\varepsilon,q,\varphi},\lambda \alpha . \alpha,\varphi}_{X,X^\dagger}(u)$, where
\eqleft{\phi^{\varepsilon,q,\varphi}_u(p^{X\to X^\dagger}) :=_{X^\dagger}u\at p(\varepsilon_{\varphi(\ext{u})}(\lambda x . \ext{q}(p(x)))),}
satisfies the defining equation of $\sSBR^{\varepsilon,q,\varphi}_X(u)$, provably in $\EHAomega$.\end{proposition}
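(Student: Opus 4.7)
The plan is to verify the claim by a straightforward equational unfolding, splitting on the condition $\varphi(\ext{u})\in\dom(u)$ that governs the defining equation of both $\sBR$ and $\sSBR$. Set $\Psi(u) := \sBR^{\phi^{\varepsilon,q,\varphi},\lambda\alpha.\alpha,\varphi}_{X,X^\dagger}(u)$. The goal is then to show that $\Psi$ satisfies the defining equation of $\sSBR^{\varepsilon,q,\varphi}_X$, i.e.\ that for every $u\colon X^\dagger$,
\begin{equation*}
\Psi(u) =_{X^\dagger} u \at
\begin{cases}
\emptyset & \text{if $n_u\in\dom(u)$}\\
\Psi(u\oplus(n_u,a'_u)) & \text{otherwise}
\end{cases}
\end{equation*}
where $n_u = \varphi(\ext{u})$ and $a'_u = \varepsilon_{n_u}(\lambda x.\ext{q}(\Psi(u\oplus(n_u,x))))$.

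First I would treat the case $n_u\in\dom(u)$. Here the defining equation of $\sBR$ gives $\Psi(u) = b(u)$ where $b = \lambda\alpha.\alpha$, so $\Psi(u)=u$, and on the $\sSBR$ side we get $u\at\emptyset = u$, so the two agree. Next, for the case $n_u\notin\dom(u)$, the defining equation of $\sBR$ yields
\begin{equation*}
\Psi(u) = \phi^{\varepsilon,q,\varphi}_u\bigl(\lambda x.\Psi(u\oplus(n_u,x))\bigr).
\end{equation*}
Setting $p := \lambda x.\Psi(u\oplus(n_u,x))$ and unfolding the definition of $\phi^{\varepsilon,q,\varphi}_u$ gives
\begin{equation*}
\Psi(u) = u \at p\bigl(\varepsilon_{n_u}(\lambda x.\ext{q}(p(x)))\bigr) = u\at p(a'_u) = u\at \Psi(u\oplus(n_u,a'_u)),
\end{equation*}
which is exactly the second case of the $\sSBR$ equation.

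Thus both branches match, and the verification is complete modulo routine unfolding of the definition of $\phi^{\varepsilon,q,\varphi}_u$ and substitution of $\lambda\alpha.\alpha$ for $b$. There is no genuine obstacle here: the functional $\phi^{\varepsilon,q,\varphi}$ has been tailored precisely so that the recursive call $p$ supplied by $\sBR$ plays the double role it plays inside the defining equation of $\sSBR$, namely computing both the argument $a'_u = \varepsilon_{n_u}(\lambda x.\ext{q}(p(x)))$ to feed back in and the value $p(a'_u)$ that is then merged (via $\at$) with $u$. The entire derivation is purely equational and goes through in $\EHAomega$ without any appeal to induction, choice, or extensionality beyond $\lambda$-conversion.
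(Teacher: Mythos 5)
Your proof is correct. Note that the paper actually states this proposition explicitly \emph{without proof} (``We note without proof that $\sSBR$ is indeed definable from $\sBR$''), so there is no proof in the paper to compare against; your two-case unfolding — using $b=\lambda\alpha.\alpha$ in the terminating case and unpacking $\phi^{\varepsilon,q,\varphi}_u$ applied to the continuation $p=\lambda x.\Psi(u\oplus(n_u,x))$ in the recursive case, observing that $a'_u=a_u$ once $\Psi$ is substituted for $\sSBR$ — is exactly the routine verification the authors had in mind and goes through purely equationally in $\EHAomega$.
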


Our construction and verification of a solution to Spector's equations now broadly follows, but is somewhat more intricate, than the usual approach for Spector's standard bar recursion. Recall the notion of a $\varphi$-thread from Definition \ref{defn-Sprop}.

\begin{lemma} \label{lem-symmetric} Assume $u$ is a $\varphi$-thread\footnote{Here the restriction $S_\varphi(u)$ on $u$ is merely a convenience as opposed to a necessity, allowing us to smoothly import the notation from Definition \ref{defn-S}.}. Let $v := \sSBR^{\varepsilon,q,\varphi}_X(u)$.Then
\begin{equation}\label{eqn-lem-symmetric}v=\sSBR_X^{\varepsilon,q,\varphi}(\initSegs{\varphi}{v}{i})\end{equation}
for all $|\dom(u)|\leq i\leq |\dom(v)|$.\end{lemma}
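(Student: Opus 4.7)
The plan is to prove the equality $v = \sSBR_X^{\varepsilon,q,\varphi}(\initSegs{\varphi}{v}{i})$ by induction on $i$, starting from $i = |\dom(u)|$ and going up to $|\dom(v)|$. The underlying intuition is that the sequence of recursive unfoldings of $\sSBR$ starting at $u$ traces out, step by step, the $\varphi$-thread of the final output $v$.

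For the base case $i = l := |\dom(u)|$, the assumption $S_\varphi(u)$ gives $u = \initSegs{\varphi}{u}{l}$ directly from the definition of $\varphi$-thread. Since $v$ extends $u$ (as can be read off from the defining equation of $\sSBR$, where $v = u \at (\ldots)$) and the $\varphi$-thread construction at each stage only inspects already-chosen indices, a short induction on $j \leq l$ shows $\initSegs{\varphi}{v}{j} = \initSegs{\varphi}{u}{j}$, so in particular $\initSegs{\varphi}{v}{l} = u$, and the base case reduces to the trivial equality $v = \sSBR(u)$.

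For the inductive step, assume $v = \sSBR(w_i)$ where $w_i := \initSegs{\varphi}{v}{i}$ and $l \leq i < |\dom(v)|$. Write $n_i := \hat\varphi(w_i)$. Since $|\dom(w_i)| = i < |\dom(v)|$, the first branch of the definition of $\sSBR(w_i)$ cannot fire (otherwise $v = w_i$, contradicting the size inequality), so $n_i \notin \dom(w_i)$ and
\[ v = \sSBR(w_i) = w_i \at \sSBR(w_i \oplus (n_i,a_{w_i})) \]
where $a_{w_i} = \varepsilon_{n_i}(\lambda x\,.\,\hat q(\sSBR(w_i \oplus (n_i,x))))$. The key sub-claim is that $v(n_i) = a_{w_i}$: the right-hand side $\sSBR(w_i\oplus (n_i,a_{w_i}))$ takes the value $a_{w_i}$ at $n_i$ (since it prepends $w_i\oplus(n_i,a_{w_i})$ via $\at$), and $w_i$ is undefined there, so the value of $v$ at $n_i$ is $a_{w_i}$. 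This gives $n_i \in \dom(v)$ and hence $\initSegs{\varphi}{v}{i+1} = w_i \oplus (n_i,v(n_i)) = w_i \oplus (n_i,a_{w_i})$. Finally, since $w_i \sqsubseteq w_i\oplus (n_i,a_{w_i}) \sqsubseteq \sSBR(w_i\oplus (n_i,a_{w_i}))$, the outer $w_i \at (-)$ in the displayed equation above is redundant, and we conclude $v = \sSBR(w_i\oplus (n_i,a_{w_i})) = \sSBR(\initSegs{\varphi}{v}{i+1})$, closing the induction.

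The main obstacle is the bookkeeping in the inductive step: one must simultaneously identify the next index chosen by the $\varphi$-thread of $v$ with the index $n_i$ chosen by the recursive unfolding of $\sSBR$, and identify the value $v(n_i)$ with the argument $a_{w_i}$ that $\sSBR$ feeds into itself. Once the sub-claim $v(n_i) = a_{w_i}$ is isolated and proved, everything else follows mechanically from the definitions of $\sSBR$, $\at$, and the $\varphi$-thread.
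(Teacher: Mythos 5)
Your proof is correct and follows essentially the same route as the paper's: induction on $i$, a preliminary sub-induction for the base case showing $\initSegs{\varphi}{v}{j}=\initSegs{\varphi}{u}{j}$ for $j\leq|\dom(u)|$, and, in the inductive step, noting that $n_i\notin\dom(w_i)$ (else $v=w_i$ and a size contradiction), unfolding the recursor, identifying $v(n_i)$ with $a_{w_i}$, and absorbing the leading $w_i\at(\cdot)$. The only slight wrinkle is your parenthetical assertion that $|\dom(w_i)|=i$: a priori the $\varphi$-thread of a \emph{finite partial function} $v$ can stall (if some $n_{\varphi,j}\notin\dom(v)$), so one only knows $|\dom(w_i)|\leq i$ at the start of the step; that weaker bound is all the size-contradiction argument needs, and the equality $|\dom(w_i)|=i$ only becomes available \emph{after} your sub-claim $v(n_i)=a_{w_i}$ establishes $n_i\in\dom(v)$. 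Since you never actually use the equality, the proof stands as written.
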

\begin{proof} By induction on $i$. \\
For $i=|\dom(u)|$ we claim that $\initSegs{\varphi}{v}{i}=u$. By a separate easy induction on $j \leq |\dom(u)|$, one can show that whenever $u\sqsubseteq v$ and $S_{\varphi}(u)$ then $\initSegs{\varphi}{u}{j}=\initSegs{\varphi}{v}{j}$ for all $0\leq j\leq |\dom(u)|$. Now in this case $u\sqsubseteq v$ by definition, and thus setting $j=|\dom(u)|$ we have $\initSegs{\varphi}{v}{|\dom(u)|}=u$. \\
Now, for the main induction step, assume that (\ref{eqn-lem-symmetric}) is true for some $|\dom(u)|\leq i<|\dom(v)|$. Because $i$ is strictly smaller than $|\dom(v)|$ it must be the case that $n_{\initSegs{\varphi}{v}{i}} := \ext{\varphi}(\initSegs{\varphi}{v}{i})\notin\dom(\initSegs{\varphi}{v}{i})$. 
%
Therefore, by the defining equation of $\Psi$ we obtain
\begin{equation*}
v \stackrel{\textup{\tiny{I.H.}}}{=}\sSBR(\initSegs{\varphi}{v}{i})=\initSegs{\varphi}{v}{i}\at \Psi(\initSegs{\varphi}{v}{i}\oplus (n_{\initSegs{\varphi}{v}{i}}, a_{\initSegs{\varphi}{v}{i}}))=\Psi(\initSegs{\varphi}{v}{i}\oplus (n_{\initSegs{\varphi}{v}{i}},a_{\initSegs{\varphi}{v}{i}})).
\end{equation*}
The last equality above holds because by definition $\initSegs{\varphi}{v}{i} \sqsubseteq \sSBR(\initSegs{\varphi}{v}{i}\oplus (n_i,a_{\initSegs{\varphi}{v}{i}}))$ and thus $\initSegs{\varphi}{v}{i}$ can be absorbed by the latter term. Now, observing that
$$a_{\initSegs{\varphi}{v}{i}}=\sSBR(\initSegs{\varphi}{v}{i}\oplus (n_{\initSegs{\varphi}{v}{i}}, a_{\initSegs{\varphi}{v}{i}}))(n_{\initSegs{\varphi}{v}{i}}) = v(n_{\initSegs{\varphi}{v}{i}})$$
we have  $v= \sSBR(\initSegs{\varphi}{v}{i}\oplus (n_{\initSegs{\varphi}{v}{i}},v(n_{\initSegs{\varphi}{v}{i}})))=\sSBR(\initSegs{\varphi}{v}{i+1})$, which completes the induction step.
\end{proof}

\begin{theorem}\label{thm-symmetric} Define
\begin{equation*}
\begin{aligned}
v&:=_{X^\dagger}\sSBR_X^{\varepsilon,q,\varphi}(\emptyset) \\
p_i&:=_{X\to Y}\lambda x . \ext{q}(\sSBR_X^{\varepsilon,q,\varphi}(\initSegs{\varphi}{v}{i}\oplus (n_i,x)))
\end{aligned}
\end{equation*}
where in the second equation $i<|\dom(v)|$ and $n_i=\ext{\varphi}(\initSegs{}{v}{i})$. Then for all $0\leq i< |\dom(v)|$ we have
\begin{equation}\label{eqn-spector-sSBR}\begin{aligned}v(n_i) &=\varepsilon_{n_i} p_i \\ \ext{q}(v) &= p_i(\varepsilon_{n_i}p_i).\end{aligned}\end{equation}\end{theorem}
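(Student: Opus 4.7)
The plan is to leverage Lemma \ref{lem-symmetric}, which lets us unfold $v$ as $\sSBR_X^{\varepsilon,q,\varphi}(\initSegs{\varphi}{v}{i})$ at every stage of its construction. First I would instantiate that lemma with $u = \emptyset$, which is trivially a $\varphi$-thread since $|\dom(\emptyset)| = 0$, obtaining
\begin{equation*}
v = \sSBR_X^{\varepsilon,q,\varphi}(\initSegs{\varphi}{v}{i}) \qquad \text{for all } 0 \leq i \leq |\dom(v)|.
\end{equation*}
This is the master identity from which both equations of (\ref{eqn-spector-sSBR}) will fall out.

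Next, for each $i < |\dom(v)|$, I would argue that the defining equation of $\sSBR$ applied to $\initSegs{\varphi}{v}{i}$ must take the ``otherwise'' branch, i.e.\ that $n_i := \ext{\varphi}(\initSegs{\varphi}{v}{i}) \notin \dom(\initSegs{\varphi}{v}{i})$. Indeed, if $n_i$ lay in that domain then the ``if'' branch would yield $v = \initSegs{\varphi}{v}{i}$, forcing $|\dom(v)| = i$, contrary to assumption. Unfolding once, using that $\initSegs{\varphi}{v}{i} \sqsubseteq \sSBR(\initSegs{\varphi}{v}{i}\oplus(n_i,a_i))$, gives
\begin{equation*}
v = \sSBR(\initSegs{\varphi}{v}{i}\oplus(n_i,a_i)), \qquad a_i := \varepsilon_{n_i}\bigl(\lambda x . \ext{q}(\sSBR(\initSegs{\varphi}{v}{i}\oplus(n_i,x)))\bigr) = \varepsilon_{n_i} p_i,
\end{equation*}
where the second equality is just the definition of $p_i$. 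Evaluating this identity at the point $n_i$ (which is not in $\dom(\initSegs{\varphi}{v}{i})$, so its value is contributed solely by the recursive call) yields $v(n_i) = a_i = \varepsilon_{n_i} p_i$, which is the first equation of (\ref{eqn-spector-sSBR}).

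For the second equation, I would note that the first equation combined with the definition of the $\varphi$-thread operator forces
\begin{equation*}
\initSegs{\varphi}{v}{i+1} = \initSegs{\varphi}{v}{i}\oplus(n_i, v(n_i)) = \initSegs{\varphi}{v}{i}\oplus(n_i, \varepsilon_{n_i} p_i).
\end{equation*}
Since $i+1 \leq |\dom(v)|$, a second application of Lemma \ref{lem-symmetric} gives $v = \sSBR(\initSegs{\varphi}{v}{i+1})$, and applying $\ext{q}$ to both sides yields
\begin{equation*}
\ext{q}(v) = \ext{q}\bigl(\sSBR(\initSegs{\varphi}{v}{i}\oplus(n_i,\varepsilon_{n_i} p_i))\bigr) = p_i(\varepsilon_{n_i} p_i),
\end{equation*}
by the very definition of $p_i$. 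The main obstacle in this plan is really absorbed into Lemma \ref{lem-symmetric} itself; once that is in hand, the argument reduces to carefully distinguishing the two branches of $\sSBR$ and tracking how the $\varphi$-threads $\initSegs{\varphi}{v}{i}$ grow as $i$ increases, together with the small verification that ``$i < |\dom(v)|$'' is precisely the condition ensuring the recursion has not yet terminated at step $i$.
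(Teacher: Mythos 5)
Your proposal is correct and follows essentially the same route as the paper: apply Lemma \ref{lem-symmetric} with $u=\emptyset$ to get $v=\sSBR(\initSegs{\varphi}{v}{i})$ for all $i\leq|\dom(v)|$, show $n_i\notin\dom(\initSegs{\varphi}{v}{i})$ for $i<|\dom(v)|$, unfold one step of $\sSBR$ and evaluate at $n_i$ for the first equation, then invoke the lemma at $i+1$ and the definition of $p_i$ for the second. The only slight imprecision is "forcing $|\dom(v)|=i$"; what you actually get is $|\dom(v)|=|\dom(\initSegs{\varphi}{v}{i})|\leq i$, but this still contradicts $i<|\dom(v)|$, so the argument is sound.
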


\begin{proof} By Lemma \ref{lem-symmetric} we have that $v=\Psi(\initSegs{\varphi}{v}{i})$ for all $0\leq i\leq |\dom(v)|$. Using this fact, we can show, analogously to the proof of Lemma \ref{lem-symmetric}, that $n_i = \ext{\varphi}(\initSegs{\varphi}{v}{i})\notin\dom(\initSegs{\varphi}{v}{i})$ for any $0\leq i<|\dom(v)|$. Therefore
\begin{equation*}
v(n_i)=\Psi(\initSegs{\varphi}{v}{i})(n_i) = (\initSegs{\varphi}{v}{i}\at\Psi(\initSegs{\varphi}{v}{i}\oplus (n_i,a_{\initSegs{\varphi}{v}{i}})))(n_i)=a_{\initSegs{\varphi}{v}{i}}.
\end{equation*}
But by the definitions of $a_u$ and $p_i$ we have
\begin{equation*}
a_{\initSegs{\varphi}{v}{i}}=\varepsilon_{n_i}(\lambda x . \ext{q}(\Psi(\initSegs{\varphi}{v}{i}\oplus (n_i,x))))=\varepsilon_{n_i} p_i.
\end{equation*}
Put together these establish the first equation of (\ref{eqn-spector-sSBR}). For the second we have, once more using Lemma \ref{lem-symmetric}, that for any $0\leq i<|\dom(v)|$:
\begin{equation*}\ext{q}(v)=\ext{q}(\Psi(\initSegs{\varphi}{v}{i+1}))=\ext{q}(\Psi(\initSegs{\varphi}{v}{i}\oplus (n_i,v(n_i)))=p_i(v(n_i)),\end{equation*}
and thus by the first equation we have $\ext{q}(v)=p_i(\varepsilon_{n_i}p_i)$.
\end{proof}

\begin{corollary}\label{cor-symmetric} Define $v$ and $p_i$ in the parameters $\varepsilon$, $q$ and $\varphi$ as in Theorem \ref{thm-symmetric}. Let $k < |\dom(v)|$ be such that $n_k = \ext{\varphi}(v)$. Then $f:=\ext{v}$, $n:=n_k$, and $p:=p_{k}$ solve Spector's equations (\ref{eqn-spector}).
\end{corollary}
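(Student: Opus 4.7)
The plan is to apply Theorem~\ref{thm-symmetric} at the specific index $i = k$ and translate the resulting identities directly into Spector's equations (\ref{eqn-spector}). Before doing so I would briefly verify that the required index $k$ in fact exists. By the defining equation of $\sSBR$, the recursion can only terminate at $v$ once $\ext{\varphi}(v) \in \dom(v)$. Lemma~\ref{lem-symmetric} applied with $u = \emptyset$ yields $v = \sSBR_X^{\varepsilon,q,\varphi}(\initSegs{\varphi}{v}{i})$ for all $i \leq |\dom(v)|$, and taking $i = |\dom(v)|$ shows in particular that $v$ is a $\varphi$-thread. By Lemma~\ref{lem-S} we then have $\dom(v) = \{n_0, n_1, \ldots, n_{|\dom(v)|-1}\}$ with the $n_i$ pairwise distinct, so the condition $\ext{\varphi}(v) \in \dom(v)$ picks out a unique $k < |\dom(v)|$ with $n_k = \ext{\varphi}(v)$, matching the hypothesis of the corollary.

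With such a $k$ in hand, all three of Spector's equations follow by direct substitution into (\ref{eqn-spector-sSBR}). The identity $\varphi(f) = n$ reads $\varphi(\ext{v}) = n_k$, which is just the defining condition on $k$. For $f(n) = \varepsilon_n p$, I would observe that because $n = n_k \in \dom(v)$ we have $f(n) = \ext{v}(n_k) = v(n_k)$, so the first equation of (\ref{eqn-spector-sSBR}) gives $v(n_k) = \varepsilon_{n_k} p_k = \varepsilon_n p$. Finally, $q(f) = q(\ext{v}) = \ext{q}(v)$ by definition of $f$, and the second equation of (\ref{eqn-spector-sSBR}) yields $\ext{q}(v) = p_k(\varepsilon_{n_k} p_k) = p(\varepsilon_n p)$.

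I do not expect any significant obstacle here: the real content of the argument is already packaged in Theorem~\ref{thm-symmetric}, and the corollary is essentially a bookkeeping exercise translating between a finite partial function $v$ and its canonical extension $\ext{v}$, making use of the fact that all the evaluations involved take place at points lying in $\dom(v)$.
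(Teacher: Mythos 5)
Your proposal follows essentially the same route as the paper: establish that the index $k$ is well-defined by showing $v$ is a $\varphi$-thread and $\ext{\varphi}(v) \in \dom(v)$, then substitute into the identities (\ref{eqn-spector-sSBR}) of Theorem~\ref{thm-symmetric}. The final substitution step is correct and identical to the paper's.

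One step deserves more care. You assert that ``taking $i = |\dom(v)|$ shows in particular that $v$ is a $\varphi$-thread,'' but the equation $v = \sSBR(\initSegs{\varphi}{v}{|\dom(v)|})$ from Lemma~\ref{lem-symmetric} does not on its own imply $v = \initSegs{\varphi}{v}{|\dom(v)|}$; both sides are consistent with the thread stalling early. The actual argument requires unwinding the defining equation of $\sSBR$ on $w := \initSegs{\varphi}{v}{|\dom(v)|}$: if $w \sqsubsetneq v$ then the thread stalled at some $j < |\dom(v)|$, and either $\ext{\varphi}(\initSegs{\varphi}{v}{j}) \in \dom(\initSegs{\varphi}{v}{j})$, forcing $v = \sSBR(\initSegs{\varphi}{v}{j}) = \initSegs{\varphi}{v}{j} \sqsubsetneq v$, a contradiction; or $\ext{\varphi}(\initSegs{\varphi}{v}{j}) \notin \dom(v)$, contradicting the fact that $\sSBR(\initSegs{\varphi}{v}{j}) = v$ is defined at that point. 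The paper handles this via a short auxiliary induction (producing the explicit form $\initSegs{\varphi}{v}{i} = (n_0,v(n_0)) \oplus \ldots \oplus (n_{i-1},v(n_{i-1}))$), and also goes in the opposite order to you, deriving $\ext{\varphi}(v) \in \dom(v)$ \emph{from} $v$ being a $\varphi$-thread rather than asserting it first by an operational ``termination'' argument. Either ordering works once the $\varphi$-thread step is properly justified, but as written your inference there is too quick.
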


\begin{proof}First we must show that the index $k$ is well-defined. By an easy induction similar to all those in the preceding proofs, we can show that for $0\leq i\leq |\dom(v)|$ we have
\begin{equation*}\initSegs{\varphi}{v}{i}=(n_0,v(n_0))\oplus\ldots\oplus(n_{i-1},v(n_{i-1}))\end{equation*}
for distinct $n_j$, where as always $n_j=\ext{\varphi}(\initSegs{}{v}{j})$. In particular we have $S_\varphi(v)$ and thus $v=\initSegs{}{v}{|\dom(v)|}$. Now, since $\ext\varphi(v)\notin\dom(v)$ would imply that $|\dom(v)|=|\dom(\Psi(\initSegs{}{v}{|\dom(v)|}))|>|\dom(v)|$ we must have $\ext{\varphi}(v)\in\dom(v)$ i.e. $\ext{\varphi}(v)=n_k$ for some $k<|\dom(v)|$. The solution is now easily verified: $n=n_k=\ext{\varphi}(v)=\varphi(f)$ by definition, and by the equations (\ref{eqn-spector-sSBR}) we have $f(n) = v(n_k) = \varepsilon_{n_k}(p_k) = \varepsilon_n(p)$ and $qf=\ext{q}(v)=p_k(\varepsilon_{n_k}p_k)=p(\varepsilon_np)$.\end{proof}

To summarise, so far in this section we have outlined Spector's well-known reduction of the problem of realizing the extension of the ND interpretation of classical analysis to that of solving the set of equations (\ref{eqn-spector}). We then recounted his standard bar recursive solution to these equations, and followed this with a novel solution using a new, symmetric variant of bar recursion. So what is the essential difference between these two approaches?

Spector's solution to the countable choice works by computing finite sequences $s$ such that $f=\ext{s}$ forms a solution to the last pair of equations in (\ref{eqn-spector}) for all $n<|s|$, terminating once we have such a sequence which in addition satisfies $\varphi(\ext{s})<|s|$, thus allowing us to incorporate the first equation. This method eventually succeeds as long as we are working in a model in which $\SPEC$ holds, ensuring well-foundedness of the underlying computation tree. However, while the solution given by bar recursion is elegant in its simplicity, from a algorithmic perspective it is potentially inefficient, precisely because solutions are always computed for the last two of Spector's equations for all $n<|s|$ whereas we only need these equations to hold for $n = \varphi(\ext{s})$.

Our new method of constructing solutions to Spector's equations uses a new algorithm which constructs finite partial functions $u$ such that $f=\ext{u}$ forms a solution to the last pair of equations for all $n\in\dom(u)$, where the set $\dom(u)$ is guided by the parameter $\varphi$. This means that, in stark constrast to Spector's method, we do not necessarily need to have computed solutions for the last equations for every $n$ in some initial segment of $\NN$, but only for certain values. While our solution is somewhat more complicated to verify, and in particular is based on a form of recursion for which it is seemingly more difficult to prove termination (cf. Section \ref{sec-bar}), from a purely practical perspective it is possible that it gives rise to much more efficient programs.

We now present a short and informal case study in order to provide a concrete illustration of the differences between the two methods of program extraction. Our aim is to highlight that in practice the realizers based on symmetric bar recursion compare favourably to the traditional Spector bar recursion.

\section{Case study: No injection from $\NN\to\NN$ to $\NN$}
\label{sec-noinjection}
 
We illustrate the preceding theoretical results with a program extraction from the proof that there is no injection from the function space $\NN\to\NN$ to the natural numbers $\NN$. This theorem can be formalized as a $\Pi_2$-statement in the language of $\EPAomega$, and moreover its standard proof by a diagonal argument can be formalized using an instance of $\AC_{\NN,\NN\to\NN}$. 

This example was originally used by the first author in \cite{Oliva(2006.2)} in order to demonstrate the extraction of programs from proofs using Spector's bar recursor, and is a good candidate for a case study as it is relatively straightforward without being completely trivial. 
\begin{theorem}\label{thm-noinjection}$\EPAomega+\AC_{\NN,\NN\to\NN}\vdash\forall H\colon \NN^\NN\to\NN\; \exists \alpha,\beta\colon\NN\to\NN\; \exists i\; \colon\NN(\alpha i\neq \beta i\wedge H\alpha=H\beta)$.\end{theorem}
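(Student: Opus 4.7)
The plan is to give the standard diagonal argument, formalised in a way that clearly uses a single instance of $\AC_{\NN,\NN\to\NN}$ together with classical logic, so that afterwards the case study can trace through the Dialectica interpretation of this particular proof.

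First I would, for each $n\colon\NN$, establish classically the instance
\[ \forall n\,\exists \gamma^{\NN\to\NN}\bigl(\exists\delta^{\NN\to\NN}(H\delta = n) \to H\gamma = n\bigr). \]
This follows from the law of excluded middle applied to $\exists\delta(H\delta=n)$: if this holds, pick $\gamma$ to be any such $\delta$; otherwise pick $\gamma := \lambda k.0$ so the implication is vacuously true. Applying $\AC_{\NN,\NN\to\NN}$ to this formula yields a function $f\colon\NN\to(\NN\to\NN)$ with
\[ \forall n\bigl(\exists\delta(H\delta = n) \to H(f(n)) = n\bigr). \]

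Next I would construct the diagonal witness. Define $\alpha\colon\NN\to\NN$ by $\alpha(n) := f(n)(n) + 1$, and set $m := H\alpha$ and $\beta := f(m)$. Since $\alpha$ itself witnesses $\exists\delta(H\delta = m)$, the choice property at the index $m$ gives $H\beta = H(f(m)) = m = H\alpha$, which establishes the second conjunct of the theorem.

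Finally I would pick $i := m$ for the first conjunct. By the very definition of $\alpha$ we have
\[ \alpha(m) = f(m)(m) + 1 \neq f(m)(m) = \beta(m), \]
so $\alpha(i)\neq\beta(i)$ and we are done. There is no real obstacle here: the proof is essentially a one-line diagonalisation. The only subtle point worth flagging is that both ingredients of the argument (the excluded middle on $\exists\delta(H\delta=n)$ and the subsequent application of $\AC_{\NN,\NN\to\NN}$) are used non-trivially, and so under the negative translation plus Dialectica they will be precisely the features that make the extracted realizer interesting — in particular it is the choice instance that triggers the need for (symmetric or Spector) bar recursion in the subsequent case study.
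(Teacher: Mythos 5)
Your proof is correct and is essentially identical to the paper's own proof: the same instance of excluded middle (drinker's paradox) on $\exists\delta(H\delta=n)$, the same application of $\AC_{\NN,\NN\to\NN}$ to obtain $f$, the same diagonal construction $\alpha(n):=f(n)(n)+1$, and the same witnesses $m=H\alpha$, $\beta=f(m)$, $i=m$. The only cosmetic difference is that you spell out the choice of the index $i=m$ explicitly, which the paper leaves implicit.
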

\begin{proof}[Classical Proof]As a simple case of the law of excluded middle (also known as the drinker's paradox) we have
\begin{equation}\label{eqn-noinjection-a}\forall n^\NN\exists \alpha^{\NN\to\NN}(\exists \beta(H\beta=n)\to H\alpha=n).\end{equation}
Applying $\AC_{\NN,\NN\to\NN}$ to (\ref{eqn-noinjection-a}) yields a functional $f\colon\NN\to\NN^\NN$ satisfying
\begin{equation}\label{eqn-noinjection-b}\forall n(\exists \beta(H\beta=n)\to H(f(n))=n).\end{equation}
The map $f$ produces for each $n$ a function $f(n)\colon\NN\to\NN$ such that whenever $n$ is in the range of $H$, $f(n)$ maps to $n$. Now, define $\alpha_H := \lambda n.f(n)(n)+1$ and let $i_H := H(\alpha_H)$ and $\beta_H := f(i_H)$. Then since $i_H$ is in the range of $H$, by (\ref{eqn-noinjection-b}) we must have $H(\beta_H) = H(f(i_H)) = i_H = H(\alpha_H)$. But $\alpha_H\neq f(i_H) = \beta_H$. \end{proof}

It is an intriguing consequence of Spector's ND interpretation of classical analysis that we are \emph{a priori} guaranteed to be able to convert the classical diagonal argument above into a semi-intuitionistic proof, and directly construct (using bar recursion) explicit witnesses for $\alpha_H$, $\beta_H$ and $i_H$ as a function of $H$. Moreover, Spector's reduction of the ND interpretation of analysis to the countable choice problem demonstrates that a realizer for Theorem \ref{thm-noinjection} can be constructed primitive recursively in an \emph{arbitrary} solution to the equations (\ref{eqn-spector}). In particular, we can replace Spector's bar recursion with an instance of symmetric bar recursion to give an alternative procedure for refuting injectiveness.

\begin{proposition}\label{prop-noinjection-ND}Any computable solution to Spector's equations allows us to effectively extract witnesses for $f$, $g$ and $i$ in Theorem \ref{thm-noinjection}.\end{proposition}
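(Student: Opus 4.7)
The plan is to run the ND interpretation over the classical proof of Theorem \ref{thm-noinjection} and unpack its output in terms of any given solution of Spector's equations. Since the proof uses $\AC_{\NN,\NN\to\NN}$ exactly once -- on the drinker's instance (\ref{eqn-noinjection-a}) -- the ND interpretation reduces, modulo primitive recursive glue, to realizing the Dialectica of $\DNS$ applied to
\begin{equation*}
B(n) \equiv \exists\beta(H\beta = n) \to H\alpha = n.
\end{equation*}
The matrix of $B(n)$ is the decidable predicate $H\beta = n \to H\alpha = n$ with $\alpha,\beta:\NN\to\NN$, so in the setup of Section \ref{sec-dialectica-problem} we are in the case $X=Y=\NN\to\NN$, and by hypothesis a computable solution $(f,n,p)$ to Spector's equations (\ref{eqn-spector}) is available.

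Next I would read off the parameters $\varphi,\varepsilon,q$ dictated by the remaining, intuitionistic steps of the proof. The diagonalisation $\alpha_H := \lambda k.\,f(k)(k)+1$ together with the definition $i_H := H(\alpha_H)$ forces the control
\begin{equation*}
\varphi(f) := H(\lambda k.\,f(k)(k)+1),
\end{equation*}
while the closing identity $H(\alpha_H) = H(\beta_H)$ supplies the challenge $q(f) := \lambda k.\,f(k)(k)+1$. The functional $\varepsilon_n(p)$ is the Dialectica realizer of the drinker's instance, whose defining property is the implication $H(p(\varepsilon_n p)) = n \to H(\varepsilon_n p) = n$, provable uniformly in $n$ and $p$.

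Given $(f,n,p)$ solving Spector's equations for this triple, I define
\begin{equation*}
\alpha_H := \lambda k.\,f(k)(k)+1, \qquad i_H := n, \qquad \beta_H := f(n),
\end{equation*}
and verify the two clauses of the theorem. The diagonal inequality $\alpha_H(i_H) = f(i_H)(i_H)+1 \neq \beta_H(i_H)$ is immediate by definition. For $H(\alpha_H) = H(\beta_H)$: the first Spector equation gives $H(\alpha_H) = \varphi(f) = n = i_H$; combining the remaining two equations with $q(f) = \alpha_H$ yields $p(\beta_H) = p(\varepsilon_n p) = q(f) = \alpha_H$, and hence $H(p(\beta_H)) = H(\alpha_H) = n$. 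The defining property of $\varepsilon$ then forces $H(\beta_H) = H(\varepsilon_n p) = n = i_H$, as required.

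The hard part will be the bookkeeping in the previous paragraph: extracting an explicit form for $\varepsilon$ from the classical proof of the drinker's instance in a shape that supports the final step of the verification. This is a routine but fiddly aspect of any concrete bar-recursive program extraction. Once it is settled, Corollaries \ref{cor-spector} and \ref{cor-symmetric} guarantee that computable solutions $(f,n,p)$ to Spector's equations exist, so the extraction above is effective whether we use Spector's original bar recursion or our symmetric variant -- which is precisely what the proposition asserts.
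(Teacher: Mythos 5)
Your proposal is correct and follows essentially the same route as the paper: same challenge $q(f) := \lambda k.\,f(k)(k)+1$, same control $\varphi(f) := H(q(f))$, and the same verification unwinding Spector's equations (\ref{eqn-spector}) through the defining property $H(p(\varepsilon_n p))=n \to H(\varepsilon_n p)=n$ of $\varepsilon$. The one piece you defer and call ``hard'' --- the explicit form of $\varepsilon$ --- is in fact the simplest step: the drinker's instance has the one-line Dialectica realizer $\varepsilon_n(p) := p(\zero)$ if $H(p(\zero))=n$ and $\zero_{\NN\to\NN}$ otherwise, exactly as in the paper's equation (\ref{eqn-noinjection-ND-a}).
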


\begin{proof} 
A solution to Spector's equations acts as a computational interpretation of an instance of $\AC_{\NN}$. We must simply produce suitable parameters for these equations which correspond to the particular instance of $\AC_{\NN,\NN\to\NN}$ used in the classical proof. First, we give a computation interpretation to the initial instance of law of excluded middle via the term $\varepsilon\colon\NN\to (\NN^\NN\to\NN^\NN)\to\NN^\NN$ given by
\begin{equation}\label{eqn-noinjection-ND-a}
\varepsilon_n(p^{\NN^\NN\to\NN^\NN}) :=_{\NN^\NN}
\begin{cases}
\zero_{\NN\to\NN} & \mbox{if $H(p(\zero))\neq n$} \\[1mm] 
p(\zero) & \mbox{if $H(p(\zero))=n$}.
\end{cases}
\end{equation}
It is easy to verify that $\varepsilon$ satisfies 
\begin{equation}\label{eqn-noinjection-c}\forall n,p(H(p(\varepsilon_np))=n\to H(\varepsilon_np)=n),\end{equation}
which is just the ND interpretation of (\ref{eqn-noinjection-a}). Now, a computable solution to Spector's equations allows us to effectively construct an approximation to a choice sequence $f$ in the variables $q\colon (\NN\to\NN^\NN)\to\NN^\NN$ and $\varphi\colon (\NN\to\NN^\NN)\to\NN$ that satisfies
\begin{equation} \label{eqn-noinjection-d}
H(q(f))=\varphi(f)\to H(f(\varphi f))=\varphi(f)
\end{equation}
using (\ref{eqn-noinjection-c}) and Spector's equations (\ref{eqn-spector}). Therefore, defining $q(f)=\lambda n.f(n)(n)+1$ and $\varphi(f)=H(q(f))$, the premise of (\ref{eqn-noinjection-d}) holds by definition and hence we obtain $H(f(\varphi f))=\varphi(f)$. Finally, setting $\alpha_H:=q(f)$ and $\beta_H:=f(\varphi f)$ we have $H\beta_H=H\alpha_H$, but $\alpha_H$ and $\beta_H$ differ at $i_H:=\varphi(f)$. \end{proof}  

In a general manner of speaking, the reason one is are able to convert the classical proof of Theorem \ref{thm-noinjection} into a construction which computes $\alpha$ and $\beta$ for any given $H$ is that the solution of Spector's equations will typically only work in a subset of the full set-theoretic type structure. Solutions can be obtained, for instance, if either continuity or majorizability is assumed (cf. \cite{Bezem(1985.0),Scarpellini(1971.0)}), although Proposition \ref{prop-noinjection-ND} provides a solution that is independent of the particular model one has in mind. 

However, the exact computational process in calculating these witnesses will depend on our chosen solution of Spector's equations. We now briefly analyse the program which arises from choosing our symmetric bar recursive solution in place of Spector's original bar recursion.

\subsection{Numerical performance on sample input}
\label{sec-noinjection-numerical}

We have implemented both Spector's and our new variant of bar recursion in Haskell\footnote{\url{http://www.eecs.qmul.ac.uk/~pbo/code/bar/}}. We ran several tests in which we used both variants to compute counterexample functions $\alpha_H$ and $\beta_H$ as described in Proposition \ref{prop-noinjection-ND} for various concrete choices of $H$. To make things clearer, recall that
\begin{equation*}\alpha_H=q(f) \hspace{20mm} \beta_H=f(H(q(f))) \end{equation*}
where $f$ is some solution to Spector's equations (\ref{eqn-spector}) in $\varepsilon$, $q$ and $\varphi$, which are all defined as in the proof of the Proposition. By Theorems \ref{thm-spector} and \ref{thm-symmetric} our counterexamples for the two forms of bar recursion are given as follows:
\[
\begin{array}{rlll}
\mbox{Spector:} & \ \ \ \ \alpha_H = \ext{q}(\Phi^{\varepsilon,q,\varphi}(\pair{})) & \beta_H =\Phi^{\varepsilon,q,\varphi}(\pair{})(i_H) & \ \ \ \ i_H=H(\ext{q}(\Phi^{\varepsilon,q,\varphi}(\pair{}))) \\[2mm]
\mbox{Symmetric:} & \ \ \ \ \alpha_H = \ext{q}(\Psi^{\varepsilon,q,\varphi}(\emptyset)) & \beta_H =\Psi^{\varepsilon,q,\varphi}(\emptyset)(i_H) & \ \ \ \ i_H=H(\ext{q}(\Psi^{\varepsilon,q,\varphi}(\emptyset))).
\end{array}
\]
For each instance of $H$ we calculated 
\begin{enumerate}[(a)]

\item the domain size of the finite approximations $\Phi(\pair{}) \colon (\NN^\NN)^*$ and $\Psi(\emptyset) \colon (\NN^\NN)^\dagger$, and 

\item the number of recursive calls triggered when computing $i_H$ together with the values of $\alpha_H$ and $\beta_H$ up to this point.

\end{enumerate} 
Of course, we could have chosen other benchmarks by which to compare the realizers, although we regard this as being fairly representative of how we might want to use the realizer in practice. In any case our only, modest aim in this section is to give an informal comparison between the two methods of program extraction.

%
In the vast majority of natural cases we have considered the procedure based on symmetric bar recursion outperformed that based on Spector's bar recursion by a considerable margin. The interested reader is encouraged to try their own examples using our source code to witness this for themselves. However, we sketch a few representative examples here. \\

\noindent\textbf{Example 1.} Take the family of functionals $H_n$ defined by
%
\eqleft{H_n(\gamma^{\NN\to\NN})=\Pi^{n-1}_{i=0}(1+\gamma i).} 
The table below indicates the domain sizes of $\SBR(\pair{})$ and $\sSBR(\emptyset)$, and the number of recursive calls needed to calculate $i_H$ and $\initSeg{\alpha_H}{i_H+1}$ and $\initSeg{\beta_H}{i_H+1}$, for $n \in \{4,5,6\}$.
\begin{center}
\begin{tabular}{l|cc}
 & {\rm Spector} (domain size / \# recursive calls) & {\rm Symmetric} (domain size / \# recursive calls) \\
 \hline \\[-3mm]
 $n = 4$ & 17\; / \, \;1140 & 1\;/\;12 \\[1mm]
 $n = 5$  & 33\; / \;4650 & 1\;/\;12 \\[1mm]
 $n = 6$ & \, 65\; / \;19154 & 1\;/\;12
\end{tabular}
\end{center}
This disparity in performance can be intuitively understood by computing on paper what each recursor does. First, take the symmetric recursor. One can show that
\begin{equation*}\ext{\varphi}(\emptyset)=H_n(q(\zero_{\NN\to\NN\to\NN}))=H_n(\lambda i.1)=\Pi^{n-1}_{i=0}(1+1)=2^n\end{equation*} 
and thus $\Psi(\emptyset)=\Psi((2^n,a_{\emptyset}))$ where
\begin{equation*}a_\emptyset:=\varepsilon_{2^n}(\lambda x\; . \; \ext{q}(\Psi((2^n,x))))=\begin{cases}\zero_{\NN\to\NN} & \mbox{if $H_n(\ext{q}(\Psi((2^n,\zero))))\neq 2^n$}\\ \ext{q}(\Psi((2^n,\zero))) & \mbox{otherwise}.\end{cases}\end{equation*}
But since $\ext{\varphi}((2^n,\zero))=\ext{\varphi}(\emptyset)=2^n\in\dom((2^n,\zero))=\{2^n\}$ we see that $\Psi((2^n,\zero))=(2^n,\zero)$, and therefore $\ext{q}(\Psi((2^n,\zero)))=\lambda k.1$ and thus $a_{\emptyset}=\lambda k.1$. Now, it is easy to show that we also have
\begin{equation*}\ext{\varphi}((2^n,\lambda k.1))=\Pi^{n-1}_{i=0}(1+1)=2^n\end{equation*}
and thus the recursion terminates, yielding $\Psi(\emptyset)=\Psi((2^n,\lambda k.1))=(2^n,\lambda k.1)$. Our counterexample are then
\begin{equation*}\begin{aligned}\alpha_{H_n}&=\ext{q}((2^n,\lambda k.1))=\lambda k.\mbox{ if $k=2^n$ then $2$ else $1$}\\
\beta_{H_n}&=\Psi(\emptyset)(H_n(\alpha_{H_n}))=\Psi(\emptyset)(2^n)=\lambda k.1 \end{aligned}\end{equation*}
and it is easy to verify that these counterexamples indeed work. Moreover, in order to compute them our symmetric bar recursor only needed to produce a finite partial function with one element, resulting in a very quick algorithm.

In contrast, take Spector's bar recursor. Entirely analogously to before we have
\begin{equation*}\ext{\varphi}(\pair{})=H_n(q(\zero_{\NN\to\NN\to\NN}))=2^n\end{equation*}
but now since the recursor is forced to compute sequentially we have $\Phi(\pair{})=\Phi(\pair{a_{\pair{}}})$ where
\begin{equation*}a_{\pair{}}:=\varepsilon_0(\lambda x\; . \; \ext{q}(\Phi(\pair{x})))=\begin{cases}\zero_{\NN\to\NN} & \mbox{if $H_n(\ext{q}(\Phi(\pair{\zero})))\neq 0$} \\ \ext{q}(\Phi(\pair{\zero}))\neq 0 & \mbox{otherwise}. \end{cases}\end{equation*}
Since $\ext\varphi{\pair{\zero}}=\ext{\varphi}(\pair{})=2^n>|\pair{\zero}|$ we have that $\Phi(\pair{\zero})=\pair{\zero}\at\Phi(\pair{\zero,a_{\pair{\zero}}})$ and we must continue by computing $a_{\pair{\zero}}$. Expanding the definition of $a_{\pair{\zero}}$ in terms of $\varepsilon_1$ analogously requires a further recursive call to $\Phi(\pair{\zero,\zero})$, and since once more $\ext{\varphi}{\pair{\zero,\zero}}=\ext{\varphi}(\pair{})=2^n>|\pair{\zero,\zero}|$ we have $\Phi(\pair{\zero,\zero})=\pair{\zero,\zero}\at\Phi(\pair{\zero,\zero,a_{\pair{\zero,\zero}}})$ and so on. In fact, our recursor continues to make nested recursive calls in this fashion until it reaches the list $\pair{\zero,\ldots,\zero}$ of size $2^{n}$, in which case one can compute that $\Phi(\pair{\zero\ldots\zero})=\pair{\zero,\ldots,\zero,\lambda k.1}$. Backtracking one finally obtains 
\begin{equation*}\Phi(\pair{})=\Phi(\pair{\zero})=\ldots=\pair{\underbrace{\zero,\ldots,\zero}_{\mbox{$2^n$ times}},\lambda k.1}.\end{equation*}
Our counterexamples are then
\begin{equation*}\begin{aligned}\alpha_{H_n}&=\ext{q}(\pair{\zero,\ldots,\zero,\lambda k.1})=\lambda k.\mbox{ if $k=2^n$ then $2$ else $1$}\\
\beta_{H_n}&=\Psi(\emptyset)(H_n(\alpha_{H_n}))=\Psi(\emptyset)(2^n)=\lambda k.1 \end{aligned}\end{equation*}
which are exactly the same as those produces by the symmetric bar recursor. However, the algorithm induced by the Spector recursor was forced to carry out a lengthy backtracking procedure along sequences of length up to $2^{n+1}$, resulting in a much more complex computation. 

%
%

If we adjust $H_n$ to make it more complex still, for example 
\eqleft{H_n(\gamma)=\Pi^{n-1}_{i=0}(1+i)^{1+\gamma i},}
the disparity is even more extreme:
\begin{center}
\begin{tabular}{l|cc}
 & {\rm Spector} & {\rm Symmetric} \\
 \hline \\[-3mm]
 $n = 3$ & 577\;/2350 & 1\;/\;12 \\[1mm]
 $n = 4$ & 577\;/365700 & 1\;/\;12
\end{tabular}
\end{center}

\noindent\textbf{Example 2.} Now suppose $H_n(\gamma)$ searches for a least point such that $\gamma i<\gamma(i+1)$:
\begin{equation*}H_n(\gamma)=\mbox{least $i\leq n$ such that $\gamma i<\gamma(i+1)$, else $n$ if none exist}.\end{equation*} 
The corresponding data for $n\in\{3,4,5\}$ is now:
\begin{center}
\begin{tabular}{l|cc}
& {\rm Spector} & {\rm Symmetric} \\
\hline \\[-3mm]
$n = 3$ & \, 4\;/\;316 & 4\;/\;52 \\[1mm]
$n = 4$ & \, 5\;/\;688 & 5\;/\;64 \\[1mm]
$n = 5$ & \;\; 6\;/\;1444 & 6\;/\;76
\end{tabular}
\end{center}
In this case, both recursors terminate once they have computed a domain of size $n$, but Spector's recursor takes much longer. Once more, this behaviour has an intuitive explanation. Let us take $n=3$ for simplicity, and define functions $\NN\to\NN$ via the following notational convention, whereby
\begin{equation*}\gamma:=[x_0,x_1,\ldots,x_{k-1}]\end{equation*} 
means that $\gamma i=x_i$ for $i< k$ and $\gamma i=1$ otherwise. Let's first look at the symmetric recursor. We have 
\begin{equation*}\ext{\varphi}(\emptyset)=H_3(q(\zero))=H_3([1,1,1,1])=3\end{equation*}
since $H_3$ is forced to return the `else' clause, and therefore $\Psi(\emptyset)=\Psi((3,\gamma_0))$ where we write $\gamma_0:=a_\emptyset=\varepsilon_3(\lambda x\; . \; \ext{q}(\Psi((3,x))))$. The functional $\varepsilon_3$ be expanded just as in the previous section, and analogously to there we can check that $\ext{q}(\Psi((3,\zero)))=\ext{q}((3,\zero))=[1,1,1,1]$ and thus $\gamma_0=[1,1,1,1]$. Now we have
\begin{equation*}\ext{\varphi}((3,\gamma_0))=H_3(\ext{q}((3,\gamma_0)))=H_3([1,1,1,2])=2.\end{equation*}
Thus $\Psi((3,\gamma_0))=\Psi((3,\gamma_0)\oplus (2,\gamma_1))$ where $\gamma_1:=a_{(3,f_0)}=\varepsilon_2(\lambda x\; . \; \ext{q}(\Psi((3,\gamma_0)\oplus (2,x))))$. Now again we can verify that $\ext{q}(\Psi((3,\gamma_0)\oplus (2,\zero)))=\ext{q}((3,\gamma_0)\oplus (2,\zero))=[1,1,1,2]$ and thus $f_1=[1,1,1,2]$. This process continues until we obtain
\begin{equation*}\Psi(\emptyset)=(0,[1,2,2,2])\oplus (1,[1,1,2,2])\oplus(2,[1,1,1,2])\oplus (3,[1,1,1,1]),\end{equation*}
from which we read off $\alpha_{H_3}=\ext{q}(\Psi(\emptyset))=[2,2,2,2]$ and $\beta_{H_3}=\Psi(\emptyset)(3)=[1,1,1,1]$, and it is clear that these are valid counterexamples.

Entirely analogously, for general $n$ we obtain 
\begin{equation*}\alpha_{H_n}=[\underbrace{2,2,\ldots,2}_{\mbox{\scriptsize $n+1$ times}}]\mbox{ \ \ \ and \ \ \ } \beta_{H_n}=[\underbrace{1,1,\ldots,1}_{\mbox{\scriptsize $n+1$ times}}]\end{equation*}
In the case of Spector's bar recursion, it is clear that since we always have $\varphi(f)=H_n(q(f))\leq n$ then the recursor will output a finite sequence of length at most $n$, and so the domain of the bar-recursive output is no bigger that that of symmetric bar recursion. However, as in the first example the computation itself involves a much more intricate backtracking procedure, and this time also yields a slightly different solution. By a somewhat tedious calculation (which the reader can try to reproduce themselves they want convinving of the complexity of Spector's recursion!) one obtains for e.g. $n=3$
\begin{equation*}\Phi(\pair{})=\pair{[1,2,1,1],[2,1,2,1],[2,2,1,2],[2,2,2,1]}\end{equation*}
yielding solutions $\alpha_{H_n}=[2,2,2,2]$ and $\beta_{H_n}=[2,2,2,1]$ and analogously for general $n$:
\begin{equation*}\alpha_{H_n}=[\underbrace{2,2,\ldots,2}_{\mbox{\scriptsize $n+1$ times}}]\mbox{ \ \ \ and \ \ \ } \beta_{H_n}=[\underbrace{2,2,\ldots,2}_{\mbox{\scriptsize $n$ times}},1]\end{equation*}
However, due again to its insistence of making recursive calls sequentially, Spector's recursor takes much longer to compute these solutions, as is clear from the table.\\

\noindent\textbf{Summary.} Naturally, this section remains very informal given that we have only provided a couple of examples to illustrate the difference between programs obtained using our recursor and those obtained via the traditional Spector bar recursor. Moreover, it is not the case that the symmetric recursor \emph{always} produces a more efficient algorithm than the Spector -- it is not too hard to come up with a somewhat contrived $H_n$ for which a sequential bar recursion is clearly better. Take, for example
\[
H_n(\gamma)=\begin{cases}
\mbox{greatest $i\leq n (\gamma(i)=1)$ if it exists, else $n$} & \mbox{if $\gamma(0)=\gamma(1)=2$}\\
0 & \mbox{if $\gamma(0)=1\wedge \gamma(1)=2$ or $\gamma(0)=2\wedge \gamma(1)=1$}\\
1 & \mbox{otherwise}.\end{cases}
\]
Here the sequential computation associated with $\SBR$ means that the first clause in the case distinction is never triggered, so that $\SBR$ always returns a sequence of length $2$. On the other hand, $\sSBR$ ends up with a finite partial function of size $n$, and so its cost is proportional to $n$. 

Nevertheless, the fact that such a $H$ exists does not necessarily detract from our symmetric bar recursion being a useful alternative to Spector's bar recursion in practice. In particular, when \emph{using} our realizer as, for instance, a building block for a more complex realizing term arising from a classical proof that uses Theorem \ref{thm-noinjection} as a lemma, it is reasonable to assume that $H$ will take the form of a fairly natural recursive function, and the authors conjecture that in many such cases symmetric bar recursion will drastically outperform Spector's bar recursion.

On top of this, due to the fact that it often avoids unnecessary backtracking, we believe that our symmetric recursor will often give rise to programs that are more natural and easier to understand from an \emph{algorithmic} perspective. In particular, we conjecture that there are close links between the symmetric bar recursive interpretation of choice over sequences law of excluded middle for $\Sigma_1$-formulas (of which the instance of $\AC_{\NN}$ here is an example) and the learning procedures for $\PA+{\sf EM}_1$ described in \cite{AscBer(2010.0)}.

However, we leave any further analysis to future work. For now, our main achievement has been to devise an interesting symmetric alternative to Spector's bar recursion, which appears to us more sensible and natural for the purpose of program extraction, and in a small number of test cases drastically outperforms the latter.

\section{Equivalence of $\BR$ and $\sBR$}
\label{sec-equiv}

In this section we prove that the recursion schemata $\BR$ and $\sBR$ are actually primitive recursively equivalent. This is the most technical part of the paper, but is entirely self-contained and is not at all necessary in order to understand the preceding sections. The most difficult direction -- the definability of $\sBR$ from $\BR$ -- can be carried out in $\EHAomega+\sBI$ and hence (by Theorem \ref{thm-sBI}) in $\EPAomega+\DC$, and thus as an immediate consequence we prove that $\sBI$ exists in any model of $\EPAomega+\DC$ which also validates $\BR$. In particular, $\sBI$ exists in both the Kleene/Kreisel total continuous functional (as stated earlier in Theorem \ref{thm-sBR-CONT}) but also the strongly majorizable functionals. 


\subsection{$\BR$ is definable from $\sBR$}

\begin{theorem}\label{thm-sBR-BR} $\BR_{X,R}$ is primitive recursively definable from $\sBR_{X \times \BB, R}$, provably in $\EHAomega + \sBR$.\end{theorem}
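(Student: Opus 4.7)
The strategy is to adapt the simulation idea from the proof of Theorem \ref{thm-sSPEC-SPEC}, where $\sSPEC_{X\times\BB}$ was shown to imply $\SPEC_X$, lifting it from the level of axioms to the level of the recursors themselves. The Boolean flag in $X\times\BB$ is used to mark which positions of an $(X\times\BB)^\dagger$-object are ``defined'', so that $\sBR$ over $X\times\BB$ is forced to drive a sequential bar recursion that mimics $\BR$.

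Concretely, I would encode a finite sequence $s \colon X^\ast$ as the finite partial function $\tilde s \colon (X\times\BB)^\dagger$ with $\dom(\tilde s) = \{0, \ldots, |s|-1\}$ and $\tilde s(i) = \pair{s_i, 1}$. Since $\zero_{X\times\BB} = \pair{\zero_X, 0}$, we have $\pi_0(\widehat{\tilde s}(k)) = \hat s(k)$ for all $k$, while $\pi_1(\widehat{\tilde s}(k)) = 1$ iff $k < |s|$. Taking
\[ \tilde\varphi(\beta) := \mu i \leq \varphi(\lambda k . \pi_0(\beta k)) (\pi_1(\beta i) =_\BB 0) \]
exactly as in Theorem \ref{thm-sSPEC-SPEC}, and using a primitive recursive extraction functional $\mathsf{ext}(u)$ that reads off the maximal initial segment of $\NN$ on which $u$ is defined with $\pi_1 = 1$, I set $\tilde b(u) := b(\mathsf{ext}(u))$ and $\tilde\phi_u(p) := \phi_{\mathsf{ext}(u)}(\lambda x^X . p(\pair{x,1}))$. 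Finally, define
\[ \BR_{X,R}^{\phi,b,\varphi}(s) := \sBR_{X\times\BB,R}^{\tilde\phi,\tilde b,\tilde\varphi}(\tilde s). \]

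The verification amounts to a case analysis on the termination condition of $\sBR$ at $\tilde s$, noting that $\mathsf{ext}(\tilde s) = s$. If $\varphi(\hat s) < |s|$, the bounded search finds no $i \leq \varphi(\hat s)$ with $\pi_1(\widehat{\tilde s}(i)) = 0$ and so returns the bound $\varphi(\hat s) \in \dom(\tilde s)$; hence $\sBR^{\tilde\phi, \tilde b, \tilde\varphi}(\tilde s) = \tilde b(\tilde s) = b(s)$, matching the bar clause of $\BR$. If $\varphi(\hat s) \geq |s|$, the least witness with $\pi_1 = 0$ is $i = |s| \notin \dom(\tilde s)$, so the recursive clause of $\sBR$ yields $\tilde\phi_{\tilde s}(\lambda y . \sBR^{\tilde\phi, \tilde b, \tilde\varphi}(\tilde s \oplus (|s|, y)))$, and since $\tilde s \oplus (|s|, \pair{x,1}) = \widetilde{s \ast x}$, this simplifies to $\phi_s(\lambda x . \sBR^{\tilde\phi, \tilde b, \tilde\varphi}(\widetilde{s \ast x})) = \phi_s(\lambda x . \BR_{X,R}^{\phi,b,\varphi}(s \ast x))$, matching the recursive clause of $\BR$.

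The main obstacle is really bookkeeping: ensuring that $\widehat{\tilde s}$ agrees with $\tilde s$ on $\dom(\tilde s)$ and returns $\pair{\zero_X, 0}$ everywhere else, so that $\pi_1$ correctly reports ``defined'' versus ``undefined''. Once that is set up, both clauses of the defining equation of $\BR$ fall out immediately from the two clauses of $\sBR$, using nothing beyond primitive recursion and the defining equation of $\sBR$ itself. In particular, no appeal to $\sBI$ or any termination argument is required here, since the theorem only asks that $\BR$ be \emph{defined} from $\sBR$ with its defining equation holding provably in $\EHAomega + \sBR$.
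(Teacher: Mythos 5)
Your proof is correct and follows essentially the same route as the paper: encode $s$ as the finite partial function on $\{0,\ldots,|s|-1\}$ tagged with the Boolean flag $1$, define $\tilde\varphi$ by bounded search for the first undefined position below $\varphi(\pi_0\beta)$, lift $b$ and $\phi$ by reading the finite sequence back out of a partial state, and verify the two clauses of $\BR$'s defining equation directly from the two clauses of $\sBR$'s. The only cosmetic difference is in the extraction functional: the paper uses a map $\eta'$ that reads up to the maximum of $\dom(u)$ (filling gaps with $\zero_X$) and, in $\tilde\phi$, reads the initial segment $[\pi_0\hat u](\tilde\varphi(\hat u))$; your $\mathsf{ext}$ reads the maximal defined initial prefix instead. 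Both agree on the inputs $\eta s$ that actually occur, so the verification goes through identically.
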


\begin{proof} Suppose we are given parameters $\phi\colon X^\ast\to (X\to R)\to R$, $b\colon X^\ast\to R$ and $\varphi\colon X^\NN\to \NN$ for $\BR_{X,R}$. Then there is a term $\Phi^{\phi,b,\varphi}$ primitive recursive in $\sBR_{X\times\BB,R}$ that satisfies the defining equation of $\BR_{X,R}^{\phi,b,\varphi}$.
Recall that we identify $\BB$ with the set $\{0,1\}$, taking $\zero_\BB:=0$, and thus $\zero_{X\times \BB}\equiv\pair{\zero_X,0}$. Define the map $\eta\colon X^\ast\to (X\times\BB)^\dagger$ by
\begin{equation*}
(\eta s)(n) := 
\begin{cases}
\pair{s_n,1} & \mbox{if $n<|s|$} \\
\mbox{undefined} & \mbox{otherwise},
\end{cases}
\end{equation*}
so that $\dom(\eta s) = \{ 0, 1, \ldots, |s| - 1 \}$; and conversely the map $\eta'\colon (X\times\BB)^\dagger \to X^\ast$ by $|\eta'u|=N+1$ where $N$ is the maximum element of $\dom(u)$, and
\begin{equation*}(\eta'u)_i:=\begin{cases}\pi_0(u(i)) & \mbox{if $i\in \dom(u)$}\\ \zero_X & \mbox{otherwise}\end{cases}\end{equation*}
where $\pi_0\colon X\times\BB\to X$ is the first projection. Note that $\eta'\eta s=s$ for all $s\colon X^\ast$. Now, define parameters $\tilde\phi$, $\tilde b$ and $\tilde\varphi$ for $\sBR_{X\times\BB,R}$ by
\begin{equation*}
\begin{aligned}
\tilde\varphi(\alpha^{(X\times\BB)^\NN})&\stackrel{\NN}{:=}\mu i\leq\varphi(\pi_0 \alpha)((\pi_1\alpha)(i)=_\BB 0) \\
\tilde b(u^{(X\times \BB)^\dagger})&\stackrel{R}{:=}b(\eta'u)\\
\tilde\phi_u(p^{X\times\BB\to R})&\stackrel{R}{:=}\phi_{\initSeg{\pi_0 \ext{u}}{\tilde\varphi(\ext{u})}}(\lambda x^X . p(\pair{x,1}))
\end{aligned}\end{equation*}
where $\pi_i\alpha$ denotes the projection of the sequence $\alpha$ i.e. $(\pi_i\alpha)(n):=\pi_i(\alpha(n))$. 
We claim that
\eqleft{\Delta^{\phi,b,\varphi}(s) := \sBR_{X \times \BB, R}^{\tilde\phi,\tilde b,\tilde\varphi}(\eta s)}
satisfies the defining equation of $\BR_{X, R}^{\phi,b,\varphi}(s)$. To prove this, first note that
\begin{itemize}
	\item[($i$)] $\pi_0 (\wext{\eta s}) =_{X^\NN}\ext{s}$, and
	\item[$(ii)$] $\pi_1 (\wext{\eta s})(i)=_\BB 1$, for $i<|s|$ and $0$ otherwise,
\end{itemize}
follow directly from the definition of $\eta$ and the fact that $\zero_{X\times\BB}=\pair{\zero_X,0}$. Therefore
\begin{itemize}
\item[($iii$)] $\tilde\varphi(\wext{\eta s})
	= \mu i\leq\varphi(\pi_0(\wext{\eta s})))(\pi_1(\wext{\eta s})(i)=_\BB 0)
	\stackrel{(i)}{=} \mu i\leq\varphi(\ext{s})(\pi_1(\wext{\eta s})(i)=0)
	\stackrel{(ii)}{=}
\begin{cases}
	\varphi(\ext{s}) & \mbox{if $\varphi(\ext{s})<|s|$} \\[1mm]
	|s| & \mbox{if $\varphi(\ext{s})\geq |s|$}.
\end{cases}$
\end{itemize}
Since $\dom(\eta s)=\{0,\ldots,|s|-1\}$ point ($iii$) above implies the equivalence
\begin{itemize}
	\item[($iv$)] $\varphi(\ext{s})<|s| \;\;\Leftrightarrow\;\; \tilde\varphi(\wext{\eta s})\in\dom(\eta s)$.
\end{itemize}
Therefore, if $\varphi(\ext{s})<|s|$ then $\tilde\varphi(\wext{\eta s})\in\dom(\eta s)$ and hence
\eqleft{
\;\;\Delta(s) = \sBR_{X \times \BB, R}^{\tilde\phi,\tilde b,\tilde\varphi}(\eta s) = \tilde b(\eta s)=b(\eta'\eta s)=b(s).
}
If $\varphi(\ext{s})\geq |s|$ then $\tilde\varphi(\wext{\eta s}) = |s| \notin\dom(\eta s)$ and hence
\eqleft{
\begin{array}{ll}
\Delta(s)&= \sBR_{X \times \BB, R}^{\tilde\phi,\tilde b,\tilde\varphi}(\eta s) \\[2mm]
	&=\tilde\phi_{\eta s}(\lambda \pair{x,b}^{X\times\BB} . \, \sBR(\upd{\eta s}{|s|}{\pair{x,b}})) \\[2mm]
	&=\phi_{\initSeg{\pi_0(\wext{\eta s})}{\tilde\varphi(\wext{\eta s})}}(\lambda x\,.\, \sBR(\upd{\eta s}{|s|}{\pair{x,1}})) \\
	&\stackrel{(i)}{=}\phi_{\initSeg{\ext{s}}{|s|}}(\lambda x\,.\, \sBR(\upd{\eta s}{|s|}{\pair{x,1}})) \\[2mm]
	&=\phi_s(\lambda x\,.\, \sBR(\eta(s\ast x))) \\[2mm]
	&=\phi_s(\lambda x\,.\, \Delta(s\ast x))
\end{array}
}
where for the penultimate equality one easily verifies that $\upd{\eta s}{|s|}{\pair{x,1}}=\eta(s\ast x)$. \end{proof}

The basic idea behind the preceding proof is that $\BR_{X,R}$ can be defined from a single instance of $\sBR_{X \times \BB,R}$ of (essentially) the same type, in which the symmetric control functional $\tilde\varphi$ is designed to be `stubborn' and always search for the least undefined point to update, thereby simulating Spector's bar recursion. 

\subsection{$\sBR$ is definable from $\BR$}
\label{sec-sBR-BR}

A similar idea, however, does not seem to work in the opposite direction, since Spector's bar recursion is inherently less flexible than symmetric bar recursion. Instead, to define $\sBR_{X,R}$ we resort to an instance of Spector's bar recursion of a strictly higher type, and the resulting construction is somewhat more intricate.

\begin{theorem}\label{thm-BR-sBR} $\sBR$ is primitive recursively definable from $\BR$, provably in $\EHAomega+\sBI+\BR$.\end{theorem}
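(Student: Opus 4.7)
The plan is to encode a run of $\sBR^{\phi,b,\varphi}$ starting from a given input $u\colon X^\dagger$ as a linear sequence of $X$-values, recording only the successive choices made at each recursive call and letting $\varphi$ dictate the positions being updated. First I would introduce the primitive recursive ``extension map'' $e_u\colon X^\ast\to X^\dagger$ given by $e_u(\pair{})=u$ and $e_u(s\ast x)=e_u(s)\oplus(\ext\varphi(e_u(s)),x)$. A routine induction on $|t|$ then yields the crucial \emph{shift identity} $e_u(s\ast t)=e_{e_u(s)}(t)$, which is what will permit a single $\BR$-call on $s$ to simulate $\sBR$ acting on $e_u(s)$.

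I would then set up an instance of $\BR_{X,R}$ with $u$-dependent parameters
\[
\tilde b_u(s):=b(e_u(s)),\quad (\tilde\phi_u)_s(p):=\phi_{e_u(s)}(p),\quad \tilde\varphi_u(\alpha):=\mu k\,[\ext\varphi(e_u(\initSeg{\alpha}{k+1}))\in\dom(e_u(\initSeg{\alpha}{k+1}))],
\]
and define
\[
\Delta^{\phi,b,\varphi}(u):=\begin{cases} b(u) & \mbox{if } \ext\varphi(u)\in\dom(u), \\ \BR_{X,R}^{\tilde\phi_u,\tilde b_u,\tilde\varphi_u}(\pair{}) & \mbox{otherwise.}\end{cases}
\]
An outer case-split is unavoidable because the termination condition of $\BR$ can never be triggered at the empty sequence, so the initial boundary case must be handled by hand.

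To verify that $\Delta$ satisfies the defining equation of $\sBR$, the bar case is immediate. For the recursive case the key step is a \emph{$\BR$-shift identity} of the form $\BR^{\tilde\phi_u,\tilde b_u,\tilde\varphi_u}(s\ast t)=\BR^{\tilde\phi_{e_u(s)},\tilde b_{e_u(s)},\tilde\varphi_{e_u(s)}}(t)$, valid provided no prefix of $s$ has reached the bar. This follows from the shift identity for $e_u$ by a backward induction on $t$ justified by ordinary bar induction $\BI$ (derivable from $\BR$). Applying this with $s=\pair{x}$ and $t=\pair{}$, together with the unfolding $\BR^{\tilde\phi_u,\ldots}(\pair{})=\phi_u(\lambda x.\,\BR^{\tilde\phi_u,\ldots}(\pair{x}))$ and the inner case-split defining $\Delta(u\oplus(\ext\varphi(u),x))$, yields $\Delta(u)=\phi_u(\lambda x.\,\Delta(u\oplus(\ext\varphi(u),x)))$ as required.

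The step I expect to be the main obstacle is the totality of $\tilde\varphi_u$: the existence of the least such $k$ is essentially $\sSPEC$ shifted to start from $u$, and it must be derived inside $\EHAomega+\sBI+\BR$ by combining $\SPEC$ coming from $\BR$ (Proposition~\ref{howard}) with the choice-like strength of $\sBI$, adapting the argument of Theorem~\ref{thm-SPEC-sSPEC} to the shifted situation. It is perhaps cleaner to sidestep this issue by replacing the unbounded $\mu$ with an application of $\BR$ at a genuinely higher type (e.g.\ $\BR$ at $R'=X^\dagger\to R$, whose output is itself a function of the current partial function), which is presumably what the authors have in mind when they remark that the construction uses Spector's bar recursion at a ``strictly higher type''.
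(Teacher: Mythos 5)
Your plan shares the paper's broad strategy of simulating a $\sBR$-thread by an auxiliary Spector recursion, but it differs in a way that creates a genuine gap. The paper runs $\BR$ at the strictly higher type $Y = X^\dagger \times (X^\dagger \to (X\to R))$: each slot of the sequence stores the \emph{entire partial function} present when that position was updated (plus a continuation), and this is precisely what makes the control functional $\tilde\varphi\colon Y^\NN\to\NN$ a single application of $\varphi$ composed with a primitive-recursive ``diagonal'' operator $d^\infty$, with no search at all. Your encoding keeps the sequence at type $X$, so your $\tilde\varphi_u$ must reconstruct the thread by iterating $e_u$ until the stopping condition fires --- and that iteration is an unbounded $\mu$. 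This is where the proposal breaks: unbounded search is not a term of $\systemT+\BR$, so $\Delta$ as written is not primitive recursive in $\BR$, which is exactly what the theorem asserts. Totality of your search is in substance $\sSPEC$ shifted to the initial segment $u$, and the paper's only routes to $\sSPEC$ are via $\sBR$ itself (Proposition~\ref{prop-sbar}, circular here) or via $\SPEC_{X^\dagger}$ together with classical $\AC_{\NN}$ (Theorem~\ref{thm-SPEC-sSPEC}), neither of which yields a bounding term in $\systemT+\BR$. Your suggested remedy of using $\BR$ at output type $X^\dagger\to R$ does not address this, since it leaves the sequence type unchanged and $\tilde\varphi$ would still see only a stream in $X^\NN$; what is actually required is to enrich the \emph{sequence} type so the control functional can read off the accumulated partial function without unrolling the thread, which is what the paper's $Y$-encoding does (at the cost of the much more intricate verification in Lemmas~\ref{lem-BR-sBR}--\ref{lem-Psi-sBR}, including a detour through a restricted variant $\Theta$ that only accepts $\varphi$-threads, and a separate application of $\sBI$ to remove that restriction). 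A secondary worry: you invoke ``ordinary bar induction $\BI$ (derivable from $\BR$)'' for the $\BR$-shift identity, but $\BI$ is not available in $\EHAomega+\sBI+\BR$ as far as the paper establishes; Proposition~\ref{howard} only yields $\SPEC$ from $\BR$, not $\BI$, and the paper's own verification proceeds by ordinary arithmetical induction on the length of the thread rather than any form of bar induction.
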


We break up the proof of Theorem \ref{thm-BR-sBR} into several steps. The basic idea behind our construction is as follows: a finite partial state $u$ in the computation of $\sBR$ is represented by a finite sequence of pairs $s_u$ in our instance of $\BR$. If $n\in\dom(u)$ then the first component of $s_u(n)$ contains the state $u' \sqsubseteq u$ that was present when point $n$ was updated, and if $n\notin\dom(u)$ then the second component of $s_u(n)$ contains a continuation that allows us to make bar recursive calls on updates of the state at any point in the future allowing us to simulate the behaviour of $\sBR$. First, we need some definitions. 

\begin{definition}\label{defn-BR-sBR} Suppose $u,v\colon X^\dagger$ and $x\colon X$. Then $\initSeg{u}{n}\ast x\at v\colon X^\dagger$ denotes the finite partial function given by
\begin{equation*}
(\initSeg{u}{n}\ast x\at v)(i) = 
\begin{cases}
u(i) & \mbox{if $i<n$} \\
x & \mbox{if $i=n$} \\
v(i) & \mbox{if $i>n$}.
\end{cases}
\end{equation*}
So $\dom(\initSeg{u}{n} \ast x\at v) = (\dom(u) \backslash \{n,\ldots\}) \cup \{n\} \cup (\dom(v) \backslash \{ 0,\ldots, n \})$. Next, let us introduce the type abbreviation $Y:\equiv X^\dagger\times (X^\dagger\to (X\to R))$, and define the `diagonal' functional $d\colon Y^\ast\to X^\dagger$ by
\begin{equation*}
d(s)(j):=
\begin{cases}
(\pi_0s_i)(j) & \mbox{for least $i\leq j$ such that $i<|s|$ and $j\in\dom(\pi_0s_i)$} \\
\mbox{undefined} & \mbox{otherwise}.
\end{cases}
\end{equation*} 
The function $d$ returns a particular kind of `union' of the finite partial functions $\pi_0s_i$, where in the event that the $\pi_0s_i$ are defined at $j$ for more than one index $i\leq j$, the value at the least index is chosen. Because $s$ is a finite sequence the resulting partial function $d(s)$ must also have finite domain. Similarly, define an infinitary diagonal function $d^\infty\colon Y^\NN\to X^\NN$ by
\begin{equation*}
d^\infty(\alpha)(j):=
\begin{cases}
(\pi_0\alpha_i)(j) & \mbox{for least $i\leq j$ such that $j\in\dom(\pi_0\alpha_i)$} \\[1mm]
\zero_X & \mbox{otherwise}.
\end{cases}
\end{equation*}
This function $d^\infty$ returns a similar union of the infinite sequence of partial functions $\pi_0\alpha_i$, returning a partial function with potentially infinite domain, and then embedding this partial function in $X^\NN$ by assigning the canonical value $\zero_X$ to undefined elements.
\end{definition}

The main step in our proof of Theorem \ref{thm-BR-sBR} will be to show that $\BR$ defines a slightly altered form of $\sBR$, which only accepts $\varphi$-threads as input. As we show in Lemma \ref{lem-Psi-sBR} this restriction is inessential -- however, it makes the verification slightly easier to work with and hence we adopt this variant for now.

%

\newcommand{\sBRVar}{\Theta}

\begin{lemma}\label{lem-BR-sBR} Let $\sBRVar$ be the following variant of $\sBR$,
\begin{equation}
\label{sbr-var-def}
\sBRVar^{\phi,b,\varphi}(u^{X^\dagger}):=
\begin{cases}
\zero_R & \mbox{if $\neg S_\varphi(u)$} \\[1mm]
b(u) & \mbox{if $\varphi(\ext{u})\in\dom(u)$} \\[1mm]
\phi_u(\lambda x . \sBRVar^{\phi,b,\varphi}(u\oplus (\varphi(\ext{u}),x))) & \mbox{otherwise}.
\end{cases}
\end{equation}
where the parameters have the same type as those for $\sBR$, namely $\phi\colon X^\dagger\to (X\to R)\to R$, $b\colon X^\dagger\to R$ and $\varphi\colon X^\NN\to \NN$. Then $\sBRVar$ is primitive recursively definable from Spector's bar recursion $\BR_{Y,R}$ for $Y:\equiv X^\dagger\times (X^\dagger\to (X\to R))$, provably in $\EHAomega+\BR$. \end{lemma}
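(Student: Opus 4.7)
The plan is to simulate $\sBRVar^{\phi,b,\varphi}(u)$ by a single application $\BR_{Y,R}^{\tilde\phi, \tilde b, \tilde\varphi}(\sigma_u)$, where the $\BR$-sequence $s \in Y^\ast$ encodes the current $\sBR$-state through its diagonalisation $d(s)$, and each entry's second component, of type $X^\dagger \to (X \to R)$, stores a continuation that Spector's recursion may invoke to resume the simulation at an index it has already skipped past. Set $\tilde\varphi(\alpha) := \hat\varphi(d^\infty(\alpha))$, so that Spector's termination condition $\tilde\varphi(\hat s) < |s|$ translates to $\hat\varphi(d(s)) < |s|$. Define
\[
\tilde b(s) := \begin{cases} b(d(s)) & \text{if } \hat\varphi(d(s)) \in \dom(d(s)), \\ \phi_{d(s)}\bigl((\pi_1 s_{\hat\varphi(d(s))})(d(s))\bigr) & \text{otherwise,} \end{cases}
\]
so that legitimate $\sBR$-termination matches the first branch and the second fires the stored continuation at the ``gap'' index. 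Define $\tilde\phi_s(p) := p((\emptyset, \kappa_{|s|}))$, so that each $\BR$-extension appends a padding entry that leaves $d$ unchanged but deposits a fresh continuation at position $|s|$.

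The overall term $\Delta^{\phi,b,\varphi}(u)$ is set to $\zero_R$ when $\neg S_\varphi(u)$, and otherwise to $\BR_{Y,R}^{\tilde\phi,\tilde b,\tilde\varphi}(\sigma_u)$, where the initial sequence $\sigma_u \in Y^\ast$ is primitive recursively defined with $|\sigma_u| := \max(\dom(u))+1$ (or $0$ if $u=\emptyset$), $\pi_0(\sigma_u)_i := (i, u(i))$ for $i \in \dom(u)$ and $\emptyset$ otherwise, and $\pi_1(\sigma_u)_i := \kappa_i$, where each continuation is $\kappa_n(v)(x) := \Delta(v \oplus (n, x))$. A short check using the definition of $d$ yields $d(\sigma_u) = u$. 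Although $\Delta$ and $\kappa_n$ are apparently defined in terms of each other, the package is a legitimate term of $\EHAomega + \BR$, as $\kappa_n$ is primitive recursive in $\BR_{Y,R}$ via the specific $\BR$-application defining $\Delta$, and the resulting closed term satisfies the recursive defining equation \eqref{sbr-var-def} of $\sBRVar$.

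Verification proceeds by cases on \eqref{sbr-var-def}. The non-$\varphi$-thread case is immediate by construction. When $\hat\varphi(u) \in \dom(u)$, one has $\tilde\varphi(\hat\sigma_u) = \hat\varphi(u) \leq \max(\dom(u)) < |\sigma_u|$, so $\BR$ fires immediately in the ``good'' branch, yielding $\tilde b(\sigma_u) = b(d(\sigma_u)) = b(u)$. When $\hat\varphi(u) \notin \dom(u)$, split on whether $\hat\varphi(u) < |\sigma_u|$: in the first subcase $\BR$ immediately hits the gap branch of $\tilde b$, unfolding to $\phi_u(\kappa_{\hat\varphi(u)}(u)) = \phi_u(\lambda x . \Delta(u \oplus (\hat\varphi(u), x)))$; in the second, $\BR$ extends the sequence one padding entry at a time via $\tilde\phi$, each extension preserving $d$ and installing $\kappa_{|s|}$ at its new position, until $|s| > \hat\varphi(u)$ and the gap branch fires with the same result. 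The main obstacle is the padding phase, where we need the invariant that along any execution trace starting from $\sigma_u$ the continuation stored at position $n \geq |\sigma_u|$ is precisely $\kappa_n$; this follows by induction on the number of $\tilde\phi$-extensions, exploiting the fact that $\tilde\phi_s$ deposits exactly $\kappa_{|s|}$ at position $|s|$.
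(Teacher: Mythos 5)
Your simulation architecture is the right high-level picture, but there is a genuine circularity problem in the construction that the paper's proof is specifically engineered to avoid, and which your write-up does not resolve.

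You define $\kappa_n(v)(x) := \Delta(v \oplus (n,x))$ and then build $\sigma_u$ (and the padding entries deposited by $\tilde\phi$) out of the $\kappa_n$'s, while $\Delta(u)$ is defined as $\BR(\sigma_u)$. So $\Delta$ refers to $\kappa_n$, which refers back to $\Delta$. This is not a legitimate primitive recursive definition in $\EHAomega+\BR$: there is no fixed-point combinator available, and the implicit recursion $\Delta(u) \leadsto \Delta(u \oplus (n,x))$ is on a \emph{growing} domain, so no primitive recursion principle resolves it. In effect you have written down the defining equation of $\sBRVar$ itself (the single $\BR$ call does no work: $\tilde\phi$ merely pads, and $\tilde b$ immediately hands control back to $\Delta$ via the stored $\kappa_n$), not a reduction of it to $\BR$. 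Saying ``the package is a legitimate term of $\EHAomega+\BR$'' is precisely the point that needs to be proved, and as stated it is false.

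The symptom of this is visible in your $\tilde\phi$: you set $\tilde\phi_s(p) := p((\emptyset,\kappa_{|s|}))$, so the continuation stored at a fresh position is manufactured \emph{outside} of $\BR$'s machinery and never uses the continuation $p$ that $\BR$ hands to $\tilde\phi$. The paper's construction is different in exactly this spot: it sets
\[
\tilde\phi_s(p) := p\bigl(\langle d(s),\, \lambda v,x.\, p(\langle \initSeg{d(s)}{|s|}\ast x \at v,\, \zero\rangle)\rangle\bigr)
\]
when $|s|\notin\dom(d(s))$, so the stored continuation is built \emph{from $p$ itself}, and hence every recursive call routed through a continuation is literally a further $\BR$-call reachable inside the single outer $\BR$ invocation. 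This is what makes the reduction non-circular. Correspondingly, the paper's initial sequences $s_u$ and their stored continuations $f^i_n$ are defined by explicit course-of-values recursion on the position $n$ (the definition of $f^i_n$ uses only $\initSeg{s_{u,i+1}}{n}$, never the whole of $s_{u,i+1}$), giving a genuinely primitive-recursive construction in $\BR$. The payoff is that the paper must then \emph{prove} (Lemmas~\ref{lemma-B}--\ref{lemma-F}) that these $\BR$-built continuations behave like $\sBRVar(\upd{u}{n}{x})$, whereas you tried to build that property in by fiat, which is where the circularity entered.

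To repair your argument you would need to replace $\kappa_n$ by a continuation expressible directly through $\BR$'s own $p$ (as the paper does), and then carry out a bookkeeping induction (analogous to Lemmas~\ref{lemma-Ep} and~\ref{lemma-E}) to identify which continuation sits at each gap position of $s_u$ -- this is the genuinely technical part that your proposal elides.
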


\begin{proof} Suppose now that we are given parameters $\phi$, $b$ and $\varphi$ for $\sBRVar$. We now define parameters $\tilde\phi$, $\tilde b$ and $\tilde\varphi$ for $\BR_{Y,R}$ in terms of $\phi$, $b$ and $\varphi$. We begin with $\tilde\phi$, which is given by
\begin{equation*}
\tilde\phi_{s^{Y^\ast}}(p^{Y\to R})\stackrel{R}{:=}
\begin{cases}
p(\pair{d(s),\zero_{X^\dagger\to(X\to R)}}) & \mbox{if $|s|\in\dom(d(s))$} \\[1mm]
p(\pair{d(s),\lambda v^{X^\dagger}, x^X . p(\pair{\initSeg{d(s)}{|s|}\ast x\at v,\zero_{X^\dagger\to (X\to R)}})}) & \mbox{otherwise}.
\end{cases}
\end{equation*}
As will become clear below, $p$ plays the role of a continuation: If $d(s)$ is not defined at $|s|$, it initiates a nested recursive call to variants of $d(s)$ of the form $\initSeg{d(s)}{|s|}\ast x\at v$, which are identical to $d(s)$ for all arguments $i<|s|$ but now defined with some value $x$ at $|s|$. For the parameter $\tilde b$, define
\begin{equation*}
\tilde b(s^{Y^\ast})\stackrel{R}{:=}
\begin{cases}
b(d(s)) & \mbox{if $\ext{\varphi}(d(s))\in\dom(d(s))$} \\[1mm]
\phi_{d s}(\pi_1(\ext{s}_{\ext{\varphi}(d(s))})(d(s))) & \mbox{otherwise}.
\end{cases}
\end{equation*}
Note that this is well typed since $\ext{s}\colon Y^\NN$, and thus $\ext{s}_{\varphi(\wext{d(s)})}\colon Y$ and $\pi_1\ext{s}_{\varphi(\wext{d(s)})}\colon X^\dagger\to (X\to R)$, which implies that $\pi_1(\ext{s}_{\varphi(\wext{d(s)})})(d s) \colon X\to R$. For the final parameter, let 
\begin{equation*}
\tilde\varphi(\alpha^{Y^\NN})\stackrel{\NN}{:=}\varphi(d^\infty(\alpha)).
\end{equation*}
We now define a sequence of finite sequences $s_{u, i}\colon Y^\ast$ for $u\colon X^\dagger$, primitive recursively in $\BR_{Y,R}^{\tilde\phi,\tilde b,\tilde \varphi}$, as
\begin{equation*}
\begin{aligned}
s_{u,0}&:=\pair{} \\ 
s_{u,i+1}&:=
\begin{cases}
\initSeg{s_{u,i}}{n_i}\ast\pair{u_{i+1},\zero_{X^\dagger\to (X\to R)}} & \mbox{if $n_i<|s_{u,i}|$}\\
s_{u,i}\ast\pair{d(s_{u,i}),f^i_{|s_{u,i}|}}\ast\ldots\ast\pair{d(s_{u,i}),f^i_{n_i-1}}\ast\pair{u_{i+1},\zero} & \mbox{otherwise}
\end{cases}
\end{aligned}
\end{equation*}
where we use the abbreviations $u_i := \initSegs{\varphi}{u}{i}$ and $n_i := \ext{\varphi}(u_i)$, and the functions $f^i_n$ are defined using course-of-values recursion as
\begin{equation} \label{br-defs-sbr-eq}
f^i_n:=_{X^\dagger\to(X\to R)}
\begin{cases}\lambda w^{X^\dagger}, x^X . \BR_{Y,R}^{\tilde\phi,\tilde b,\tilde\varphi}(\initSeg{s_{u,i+1}}{n}\ast\pair{\initSeg{d(s_{u,i})}{n}\ast x\at w,\zero}) & \mbox{if $n\notin\dom(d(s_{u,i}))$}\\ \zero & \mbox{otherwise}.
\end{cases}
\end{equation}
While the definition of $s_{u,i+1}$ may seem circular, in that $f^i_n$ uses $s_{u,i+1}$, it is well-defined by course-of-values recursion along the length of $s_{u,i+1}$ because $f^i_n = \pi_1(s_{u,i+1}(n))$, and to define this we only require knowledge of $\initSeg{s_{u,i+1}}{n}$ i.e. $s_{u,i+1}(m)$ for $m<n$, and for the base case $f^i_{|s_{u,i}|}$ is defined in terms of $\initSeg{s_{u,i+1}}{|s_{u,i}|}=s_{u,i}$. \\
Now, let $s_u:=s_{u,|\dom(u)|}$, so that $s_u \colon Y^*$. It is easy to see that $|s_{u, i+1}| = n_i + 1$ for arbitrary $i$, which means that in particular $|s_u| = n_{|\dom(u)|-1} + 1$ whenever $|\dom(u)|>0$. We claim that $\sBRVar$ can be defined from $\BR$ as follows:
\begin{equation} \label{eqn-Phi} \ \ \
\sBRVar(u):=\begin{cases}\zero_R & \mbox{if $\neg S_\varphi(u)$} \\[1mm]
\BR_{Y,R}^{\tilde\phi,\tilde b,\tilde\varphi}(s_u) & \mbox{otherwise}
\end{cases}
\end{equation}
satisfies the equation (\ref{sbr-var-def}). The rest of the section contains the lemmas needed to verify this claim. \end{proof}
The claim that $\sBRVar$ as defined in (\ref{eqn-Phi}) satisfies equation (\ref{sbr-var-def}) clearly holds in the case that $\neg S_\varphi(u)$, so from now on we assume that $S_\varphi(u)$ is true. 


\begin{lemma} \label{lemma-B} $d(s_{u,i})=u_i$, for all $i \leq |\dom(u)|$.
\end{lemma}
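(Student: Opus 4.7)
The plan is to prove this by induction on $i$, with $0 \leq i \leq |\dom(u)|$. The base case is immediate: $s_{u,0} = \pair{}$, so $d(s_{u,0})$ is the everywhere-undefined partial function, which coincides with $u_0 = \emptyset$. Before attacking the induction step, I would extract the following key facts from the hypothesis $S_\varphi(u)$ and Lemma \ref{lem-S}: for each $i+1 \leq |\dom(u)|$ the indices $n_0, \ldots, n_{i}$ are distinct and lie in $\dom(u)$, so in particular $n_i \in \dom(u) \setminus \dom(u_i)$, which means $u_{i+1} = u_i \oplus (n_i, u(n_i))$ and $n_i \notin \dom(u_i)$. Combining with the inductive hypothesis $d(s_{u,i}) = u_i$, I would record the crucial consequence that $n_i \notin \dom(\pi_0 s_{u,i}(k))$ for every $k < |s_{u,i}|$ with $k \leq n_i$ — without this, a previously recorded partial function could shadow the new entry and spoil the value of $d$ at $n_i$.

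For the induction step I would split on the two clauses of the definition of $s_{u,i+1}$. In both cases $|s_{u,i+1}| = n_i + 1$, so $d(s_{u,i+1})$ only consults positions $k \leq n_i$. In Case 1 ($n_i < |s_{u,i}|$), the sequence $s_{u,i+1}$ is just $\initSeg{s_{u,i}}{n_i}$ followed by $\pair{u_{i+1}, \zero}$ at position $n_i$; I would check the claim separately for $j < n_i$ (positions $k < n_i$ are inherited unchanged from $s_{u,i}$, so $d(s_{u,i+1})(j) = d(s_{u,i})(j) = u_i(j) = u_{i+1}(j)$), for $j = n_i$ (no earlier $k$ captures $n_i$ by the key fact above, so $d(s_{u,i+1})(n_i) = u_{i+1}(n_i) = u(n_i)$), and for $j > n_i$ (a case split on whether $j \in \dom(u_i)$, using the IH to match against $u_{i+1}(j) = u_i(j)$).

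In Case 2 ($n_i \geq |s_{u,i}|$), the new positions $|s_{u,i}| \leq k < n_i$ all carry $\pi_0 s_{u,i+1}(k) = d(s_{u,i}) = u_i$, while position $n_i$ carries $u_{i+1}$. The verification mirrors Case 1: for $j < |s_{u,i}|$ the value is already correct from the IH; for $|s_{u,i}| \leq j < n_i$ either $j \in \dom(u_i)$, in which case some $k^* < |s_{u,i}|$ already witnesses the right value via the IH (and the new middle positions, which also carry $u_i$, agree), or $j \notin \dom(u_i)$, in which case nothing in $s_{u,i+1}$ is defined at $j$, matching $u_{i+1}(j)$ undefined; for $j = n_i$ the key fact again rules out earlier witnesses and the new position $n_i$ supplies $u_{i+1}(n_i)$; and for $j > n_i$ the argument is as in Case 1.

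I expect the only real obstacle to be the bookkeeping surrounding the \emph{least-index} rule in the definition of $d$: one must verify that the inserted partial function $u_i$ placed at the intermediate slots $|s_{u,i}|, \ldots, n_i - 1$ (in Case 2), and the replaced slot $n_i$ (in Case 1), never override an earlier $s_{u,i}$-slot to give a different value. This reduces to showing that whenever $d(s_{u,i})(j)$ was defined, its witnessing index $k^*$ in $s_{u,i}$ is still the least witness in $s_{u,i+1}$; this is immediate from the IH together with the observation that $u_i = d(s_{u,i})$, so the newly inserted $u_i$-copies agree with $d(s_{u,i})$ wherever the latter is defined. Once this pattern is isolated, the remainder of the proof is a mechanical case analysis on the position of $j$ relative to $n_i$ and $|s_{u,i}|$.
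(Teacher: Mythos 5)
Your proof is correct and follows essentially the same route as the paper's: induction on $i$, splitting into the two clauses $n_i < |s_{u,i}|$ and $n_i \geq |s_{u,i}|$, then a position-by-position comparison of $d(s_{u,i+1})(j)$ against $u_{i+1}(j)$. The ``key fact'' you isolate (that no slot $k \leq n_i$ with $k < |s_{u,i}|$ can shadow $n_i$, which follows from $d(s_{u,i})(n_i)$ being undefined by IH since $n_i \notin \dom(u_i)$) is the same observation the paper uses implicitly when it concludes the least witnessing index for $j=n_i$ is $n_i$ itself; making it explicit is a small expositional improvement but not a different argument.
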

Lemma \ref{lemma-B} is not a deep result at all, and can be intuitively seen by inspecting the definition of $s_{u,i}$. However, due to the syntactic complexity of the underlying definitions, the result is quite tedious to prove, so we relegate this to Appendix \ref{app-A}.

\begin{lemma} \label{lemma-C} $d(s_u)=u$.
\end{lemma}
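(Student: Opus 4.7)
The plan is to observe that Lemma \ref{lemma-C} is essentially an immediate corollary of Lemma \ref{lemma-B} combined with the standing assumption that $S_\varphi(u)$ holds (as stated right before Lemma \ref{lemma-B}, where the authors note that the $\neg S_\varphi(u)$ case of (\ref{eqn-Phi}) is trivial).

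Concretely, I would unfold the definition $s_u := s_{u, |\dom(u)|}$ and apply Lemma \ref{lemma-B} with $i = |\dom(u)|$ (which is a legal choice since the lemma is stated for all $i \leq |\dom(u)|$). This gives
\begin{equation*}
d(s_u) = d(s_{u, |\dom(u)|}) = u_{|\dom(u)|} = \initSegs{\varphi}{u}{|\dom(u)|}.
\end{equation*}
Then I would invoke Definition \ref{defn-Sprop}, which tells us that $S_\varphi(u)$ is equivalent to the identity $u = \initSegs{\varphi}{u}{|\dom(u)|}$. Combining these two equalities yields $d(s_u) = u$, as required.

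There is no real obstacle here beyond checking that the indices line up: the cleanest way to present it is simply a two-line chain of equalities citing Lemma \ref{lemma-B} and the characterisation of $\varphi$-threads from Definition \ref{defn-Sprop}. All the substantive work has already been done in Lemma \ref{lemma-B} (whose proof is deferred to the appendix).
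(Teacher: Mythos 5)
Your proof is correct and follows essentially the same route as the paper: unfold $s_u = s_{u,|\dom(u)|}$, apply Lemma \ref{lemma-B} at $i = |\dom(u)|$, and then use the standing assumption $S_\varphi(u)$ to identify $\initSegs{\varphi}{u}{|\dom(u)|}$ with $u$. The only cosmetic difference is the citation for the last step — you appeal to the equivalence noted in Definition \ref{defn-Sprop}, while the paper cites Lemma \ref{lem-S}; these are the same fact.
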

\begin{proof} Recall that we are assuming $S_\varphi(u)$. We have $d(s_u) = d(s_{u, |\dom(u)|}) \stackrel{\textup{L}\ref{lemma-B}}{=} u_{|\dom(u)|} \stackrel{\textup{L}\ref{lem-S}}{=} u$.
\end{proof}

\begin{lemma} \label{lemma-D} $d^\infty(\wext{s_u})=\ext{u}$, and hence $\tilde\varphi(\wext{s_u}) = \varphi(\ext{u})$.
\end{lemma}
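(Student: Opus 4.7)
My plan is to deduce Lemma \ref{lemma-D} directly from Lemma \ref{lemma-C} by a pointwise comparison of $d^\infty(\wext{s_u})$ and $d(s_u)$. The only thing separating the two definitions is what happens at indices $i \geq |s_u|$ (and what the ``otherwise'' case outputs). Since $\wext{s_u}_i = \zero_Y = \pair{\emptyset,\zero_{X^\dagger\to(X\to R)}}$ for every $i \geq |s_u|$, we have $\pi_0\wext{s_u}_i = \emptyset$, so these indices contribute no defined points to $d^\infty(\wext{s_u})$; for indices $i < |s_u|$ we have $\pi_0\wext{s_u}_i = \pi_0 s_u(i)$. Thus for any $j$, the quantifier ``least $i \leq j$ such that $j \in \dom(\pi_0\wext{s_u}_i)$'' is witnessed by the same $i_0$ as ``least $i \leq j$ such that $i < |s_u|$ and $j \in \dom(\pi_0 s_u(i))$'', when such an $i_0$ exists.

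Thus I argue by cases on $j$. If there is a least $i_0 \leq j$ with $i_0 < |s_u|$ and $j \in \dom(\pi_0 s_u(i_0))$, then
\begin{equation*}
d^\infty(\wext{s_u})(j) = (\pi_0 \wext{s_u}_{i_0})(j) = (\pi_0 s_u(i_0))(j) = d(s_u)(j) \stackrel{\textup{L}\ref{lemma-C}}{=} u(j) = \ext{u}(j),
\end{equation*}
where the last equality holds because $j \in \dom(u)$ by Lemma \ref{lemma-C}. If no such $i_0$ exists, then $j \notin \dom(d(s_u)) = \dom(u)$ by Lemma \ref{lemma-C}, so $\ext{u}(j) = \zero_X$; and by the ``otherwise'' clause in the definition of $d^\infty$ we also have $d^\infty(\wext{s_u})(j) = \zero_X$. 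In both cases the values agree, so $d^\infty(\wext{s_u}) =_{X^\NN} \ext{u}$.

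The second claim then follows immediately from the definition of $\tilde\varphi$, since
\begin{equation*}
\tilde\varphi(\wext{s_u}) = \varphi(d^\infty(\wext{s_u})) = \varphi(\ext{u}).
\end{equation*}

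I expect no real obstacle here: the lemma is essentially a type-chasing calculation, and the only genuine content is the observation that padding $s_u$ with the canonical $\zero_Y$, whose first projection is the nowhere-defined partial function $\emptyset$, leaves the ``diagonalisation'' unchanged on the indices where $d(s_u)$ was already defined, and yields $\zero_X$ exactly on the complement, which is precisely what $\ext{u}$ does. The only step requiring the earlier lemma is the identification of $\dom(d(s_u))$ with $\dom(u)$, which is supplied by Lemma \ref{lemma-C}.
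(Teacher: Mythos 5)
Your proof is correct and takes essentially the same approach as the paper: both establish that $d^\infty(\wext{s_u})$ coincides pointwise with $\wext{d(s_u)}$ (using the fact that padding with $\zero_Y = \pair{\emptyset, \zero}$ contributes no new defined points), then apply Lemma~\ref{lemma-C} and the definition of $\tilde\varphi$. You merely spell out in a pointwise case analysis what the paper states as an immediate observation.
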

\begin{proof} Recall that we assume an encoding of the type $X^\dagger$ such that $\zero_{X^\dagger}\equiv\emptyset$, and hence $\zero_Y=\pair{\zero_{X^\dagger},\zero_{X^\dagger\to (X\to R)}}$. It is clear from this that $d^\infty(\ext{s})=\wext{d(s)}$ for arbitrary $s\colon Y^\ast$. Hence, by Lemma \ref{lemma-C} we have $d^\infty(\wext{s_u})=\wext{d(s_u)}=\wext{u}$ and hence $\tilde\varphi(\wext{s_u})=\varphi(d^\infty(\wext{s_u}))=\varphi(\wext{u})$, by the definition of $\tilde\varphi$.
\end{proof}

\begin{lemma}\label{lemma-Ep}Let $i\leq |\dom(u)|$. Then if $n\notin\dom(u_i)$ and $n<|s_{u,i}|$ then $\pi_1((s_{u,i})_n)=f^k_n$ where $k$ is the least number such that $\forall j \in \{k, \ldots, i-1\} (n_j>n)$.\end{lemma}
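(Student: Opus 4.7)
The plan is to prove the lemma by induction on $i$, with the inductive step splitting on the two clauses of the recursive definition of $s_{u,i+1}$.

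The base case $i = 0$ is vacuous since $|s_{u,0}| = 0$, so no $n$ can satisfy $n < |s_{u,0}|$. For the induction step, assume the statement at $i$ and take $n \notin \dom(u_{i+1})$ with $n < |s_{u,i+1}| = n_i + 1$. Since $u$ is a $\varphi$-thread, $n_i \in \dom(u_{i+1})$, so $n \neq n_i$; combined with $n \leq n_i$ this gives $n < n_i$, and in particular $n_i > n$.

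In the first branch, where $n_i < |s_{u,i}|$, we have $s_{u,i+1} = \initSeg{s_{u,i}}{n_i} \ast \pair{u_{i+1}, \zero}$, so position $n < n_i$ is inherited: $(s_{u,i+1})_n = (s_{u,i})_n$. The inductive hypothesis applies (since $n < n_i < |s_{u,i}|$ and $\dom(u_i) \subseteq \dom(u_{i+1})$) and yields $\pi_1((s_{u,i})_n) = f^{k'}_n$ for the least $k'$ satisfying $\forall j \in \{k',\ldots,i-1\}(n_j > n)$. Adjoining the constraint $n_i > n$ does not shrink the set of admissible indices, so $k = k'$ remains the least index that works at stage $i+1$. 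In the second branch, where $n_i \geq |s_{u,i}|$, the sequence $s_{u,i+1}$ extends $s_{u,i}$ by pairs $\pair{d(s_{u,i}), f^i_m}$ for $|s_{u,i}| \leq m < n_i$, followed by $\pair{u_{i+1}, \zero}$ at position $n_i$. If $n < |s_{u,i}|$, the argument from the first branch carries over verbatim, now exploiting $n_i \geq |s_{u,i}| > n$. If $|s_{u,i}| \leq n < n_i$, then $\pi_1((s_{u,i+1})_n) = f^i_n$ directly from the definition; it remains to verify that $k = i$ is least. The condition is vacuous at $k = i$, and for any $k \leq i-1$ it would entail $n_{i-1} > n$, contradicting $n_{i-1} = |s_{u,i}| - 1 \leq n$ (with the degenerate case $i = 0$ handled trivially, since then $k = 0 = i$).

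I expect the main obstacle to be purely bookkeeping: one must carefully track how each position of $s_{u,i}$ either persists, is overwritten by the pair $\pair{u_{j+1}, \zero}$, or is discarded under the truncation in the first clause, and then translate these movements into the minimality characterisation of $k$. There is no deep conceptual content; the difficulty lies entirely in the syntactic intricacy of the underlying definitions and in keeping the several sub-cases cleanly separated.
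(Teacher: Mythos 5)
Your proof is correct and follows essentially the same strategy as the paper's: induction on $i$, splitting on the two clauses of the definition of $s_{u,i+1}$, with a further subcase split in the second clause and the same minimality bookkeeping for $k$. One small wording slip: in the second subcase, at stage $i+1$ the condition at $k=i$ is $\forall j \in \{i,\ldots,i\}(n_j > n)$, i.e.\ $n_i > n$, which is not vacuous but was already established at the start of the step -- this does not affect the argument.
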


\begin{proof}Induction on $i$. If $i=0$ then the claim is trivial since $|s_{u,0}|=0$. Suppose that the lemma is true for $i<|\dom(u)|$. Then as in the proof of Lemma \ref{lemma-B} there are two main cases corresponding to $n_i<|s_{u,i}|$ or $n_i\geq |s_{u,i}|$. \\
In the first case, suppose that $n\notin\dom(u_{i+1})$ and $n<|s_{u,i+1}|=n_i+1$. Then since $n\neq n_i$ (since $n_i\in\dom(u_{i+1})$) we must have $n<n_i$. By $u_i \sqsubset u_{i+1}$ we have $n\notin\dom(u_i)$. Moreover, by our main case assumption that $n_i<|s_{u,i}|$ we can assume that $i>0$ (else we would have $|s_{u,i}|=0$), and thus by definition $|s_{u,i}|=n_{i-1}+1$, and so $n_i\leq n_{i-1}$. By the assumption $S_\varphi(u)$ this can be strengthened to $n_i<n_{i-1}$. Therefore we also have $n<|s_{u,i}|$ and by the induction hypothesis obtain that $\pi_1((s_{u,i})_n)=f^k_n$ where $k$ is the least such that $\forall j \in \{k, \ldots, i-1\} (n_j>n)$. Since $n_i > n$, we have $\pi_1((s_{u,i+1})_n)=\pi_1((s_{u,i})_n)=f^k_n$, for the same $k$, which moreover satisfies the stronger property $\forall j \in \{ k, \ldots, i \}(n_j>n)$. \\
For the second case, suppose again that $n\notin\dom(u_{i+1})$ and $n<|s_{u,i+1}|=n_i$, which as in the first case we can strengthen to $n<n_i$. As in the first case we can also infer that $n\notin\dom(u_i)$. There are two subcases to deal with. Either $n\leq n_{i-1}$ and hence $n\leq|s_{u,i}|$, and so by the induction hyposthesis we have $\pi_1((s_{u,i+1})_n)=\pi_1((s_{u,i})_n)=f^k_n$ where $k$ is the least such that $\forall j \in \{k, \ldots, i-1\} (n_j>n)$. Then by the main case assumption we obtain $n_{i-1}+1\leq n_i$ and hence $n<n_{i-1}<n_i$, and thus $k$ satisfies the stronger property $\forall j \in \{ k, \ldots, i \} (n_j>n)$. Or, in the second subcase $n\geq n_{i-1}$ we have $\pi_1((s_{u,i+1})_n)=f^i_n$. We clearly have $\forall j \in \{ k, \ldots, i \} (n_j>n)$ for $i=k$ since this reduces to $n_i>n$ which we have already established. To see that $k=i$ is the least such $k$, observe that for any $k<i$ we would need $n_{i-1}>n$, which contradicts the premise of the second subcase. \end{proof}

\begin{lemma} \label{lemma-E} If $n \not\in \dom(u)$ and $n < |s_u|$ then $\pi_1((s_u)_n) = f^k_n$, where $k$ is the least such that $\forall j \in \{ k, \ldots, |\dom(u)|-1\} (n_j > n)$.
\end{lemma}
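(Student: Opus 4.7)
The plan is to derive Lemma \ref{lemma-E} as an immediate specialization of Lemma \ref{lemma-Ep}, in which we plug in $i = |\dom(u)|$. By definition $s_u = s_{u, |\dom(u)|}$, so the equation $|s_{u,i}| = |s_u|$ holds and the hypothesis $n < |s_u|$ becomes $n < |s_{u,i}|$ for this choice of $i$.

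The only point that requires justification is that the hypothesis $n \notin \dom(u)$ translates into the hypothesis $n \notin \dom(u_i)$ required by Lemma \ref{lemma-Ep}. For this I would recall the standing assumption of this section, namely that $S_\varphi(u)$ holds. Then by Lemma \ref{lem-S} we have $u = u_{|\dom(u)|}$, so $\dom(u) = \dom(u_{|\dom(u)|})$, and hence $n \notin \dom(u)$ exactly matches the hypothesis $n \notin \dom(u_i)$ of Lemma \ref{lemma-Ep} when $i = |\dom(u)|$.

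Invoking Lemma \ref{lemma-Ep} with these choices then yields $\pi_1((s_u)_n) = f^k_n$, where $k$ is the least such that $\forall j \in \{k, \ldots, i-1\}(n_j > n)$, i.e.\ $\forall j \in \{k, \ldots, |\dom(u)| - 1\}(n_j > n)$, which is precisely the conclusion of Lemma \ref{lemma-E}. There is essentially no obstacle here since all the genuine work was carried out in Lemma \ref{lemma-Ep}; the proof should fit in three or four lines.
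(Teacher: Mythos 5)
Your proposal matches the paper's proof exactly: both instantiate Lemma \ref{lemma-Ep} at $i = |\dom(u)|$, invoke $S_\varphi(u)$ and Lemma \ref{lem-S} to identify $u$ with $u_{|\dom(u)|}$, and recall the definition $s_u := s_{u,|\dom(u)|}$. There is nothing to add.
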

\begin{proof} This is a direct corollary of Lemma \ref{lemma-Ep}, setting $i=|\dom(u)|$ and using $u_{|\dom(u)|}=u$ (Lemma \ref{lem-S}), and recalling that $s_u:=s_{u,|\dom(u)|}$.
\end{proof}

\begin{lemma} \label{lemma-F-a} Assuming $\tilde\varphi(\wext{s_u})\geq |s_u|$ we have
\begin{equation*}
\BR_{Y,R}^{\tilde\phi,\tilde b,\tilde \varphi}(s_u) = 
\begin{cases}b(u) & \mbox{if $\varphi(\ext{u})\in\dom(u)$} \\
\phi_u(\lambda x . \BR_{Y,R}^{\tilde\phi,\tilde b,\tilde \varphi}(s_{\upd{u}{\varphi(\ext{u})}{x}})) & \mbox{otherwise}.
\end{cases}
\end{equation*}
\end{lemma}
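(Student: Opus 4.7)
First, I will set $N := \varphi(\ext{u})$, so by Lemma \ref{lemma-D} we have $\tilde\varphi(\wext{s_u}) = N \geq |s_u|$ (the hypothesis), and let $K := N + 1 - |s_u|$. The plan is to iteratively unfold the Spector recursion $\BR(s_u)$ exactly $K$ times, producing a sequence $T_0 := s_u$, $T_{k+1} := T_k \ast y_k$, where $y_k$ is the element that $\tilde\phi_{T_k}$ applies to the continuation $\lambda y . \BR(T_k \ast y)$ (read off from whichever clause of $\tilde\phi$ fires, according to whether $|T_k| \in \dom(d(T_k))$).

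Two loop invariants drive the argument, both proved by induction on $k \leq K$. First, $d(T_k) = u$: this holds at $k = 0$ by Lemma \ref{lemma-C}, and since each $y_k$ has first component $u$ (inspecting both clauses of $\tilde\phi$) and $\dom(u)$ is already entirely witnessed inside $d(s_u)$, appending $y_k$ leaves the union unchanged. Consequently $d^\infty(\wext{T_k}) = \ext{u}$ and hence $\tilde\varphi(\wext{T_k}) = N$. Second, for $k < K$ we have $|T_k| = |s_u| + k \leq N$, so the recursive clause of $\BR$ applies and $\BR(T_k) = \BR(T_{k+1})$; thus $\BR(s_u) = \BR(T_K)$. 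At the final step, $|T_K| = N + 1 > N$ and $\BR(T_K) = \tilde b(T_K)$.

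It then remains to unpack $\tilde b(T_K)$, using $d(T_K) = u$. If $N \in \dom(u)$, the first clause of $\tilde b$ gives $b(u)$ directly, establishing the first case of the lemma. Otherwise $N \notin \dom(u)$ and $\tilde b(T_K) = \phi_u(\pi_1(\ext{T_K}_N)(u))$; since $N = |T_{K-1}|$, the element $y_{K-1}$ was produced by Case~II of $\tilde\phi$, so $\pi_1(\ext{T_K}_N) = \pi_1(y_{K-1}) = \lambda v, x . \BR(T_{K-1} \ast \pair{\initSeg{u}{N} \ast x \at v, \zero})$. Applying this to $u$ and using $\initSeg{u}{N} \ast x \at u = u \oplus (N, x)$ (valid because $N \notin \dom(u)$) yields $\lambda x . \BR(T_{K-1} \ast \pair{u \oplus (N, x), \zero})$.

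The hard part will be the final identification $T_{K-1} \ast \pair{u \oplus (N, x), \zero} = s_{u \oplus (N, x)}$. I would prove this by a secondary induction on $k \leq K - 1$ showing $T_k = \initSeg{s_{u \oplus (N, x)}}{|s_u| + k}$ for every $x$, noting that the right-hand side does not genuinely depend on $x$ in this range. The induction step must match $y_k$ with the $(|s_u|+k)$-th entry $\pair{u, f^{|\dom(u)|}_{|s_u|+k}}$ of $s_{u \oplus (N, x)}$, where the two clauses of $\tilde\phi$ (splitting on whether $|T_k| \in \dom(u)$) correspond precisely to the two clauses in the definition of $f^i_n$, and the course-of-values nature of that definition ensures independence from $x$ for indices $< N$. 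This bookkeeping is the most delicate part of the proof; once it is in hand, substituting into the formula from the previous paragraph gives $\BR(s_u) = \phi_u(\lambda x . \BR(s_{u \oplus (N, x)}))$, as required.
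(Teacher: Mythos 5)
Your argument is correct and is essentially the proof the paper gives. The one cosmetic difference is in the order of construction: the paper first pre-defines the sequences $t_{s,m} := s_u \ast \pair{u, g_{|s_u|}} \ast \cdots \ast \pair{u, g_{m-1}}$ (with the $g_k$ given by a course-of-values recursion analogous to $f^i_n$), proves $\BR(s_u) = \BR(t_{s,m})$ by induction on $m$, and finally observes by ``expanding the definition of $s_{u\oplus(n,x),|\dom(u)|+1}$ and $g_m$'' that $t_{s,n}\ast\pair{u\oplus(n,x),\zero} = s_{u\oplus(n,x)}$; you instead build the sequence $T_k$ by unfolding and then match it against $s_{u\oplus(N,x)}$ by a secondary induction. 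The content is the same: your $T_k$ is exactly the paper's $t_{s,|s_u|+k}$, your $y_k$ is the paper's $\pair{u, g_{|s_u|+k}}$, your loop invariants $d(T_k)=u$ and $\tilde\varphi(\wext{T_k})=N$ are precisely the paper's $(\ast)$ and $(\ast\ast)$, and your observation that $f^{|\dom(u)|}_n$ is $x$-independent for $n<N$ by course-of-values is what the paper implicitly relies on when identifying $g_m$ with $f^{|\dom(u)|}_m$.
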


\begin{proof} Suppose that $\tilde\varphi(\wext{s_u})\geq |s_u|$ (and $S_\varphi(u)$ as in the previous results). Define the sequences $t_{s,m}$, for $m \geq |s_u|$, by course-of-values recursion as
\begin{equation*} t_{s,m} :=_{Y^\ast} s_u \ast \pair{u,g_{|s_u|}}\ast\ldots\ast\pair{u,g_{m-1}}\end{equation*}
where $g_{k}$ is inductively defined as
\begin{equation*}
g_{k}:=
\begin{cases}
\lambda w,x . \BR(t_{s,k}\ast\pair{\initSeg{u}{k}\ast x\at w, \zero}) & \mbox{if $k\notin\dom(u)$}\\
\zero & \mbox{otherwise}.
\end{cases}
\end{equation*}
As before, let $n = \tilde\varphi(\wext{s_u}) = \varphi(\ext{u})$. We prove by induction on $m$ that $\BR(s_u)=\BR(t_{s,m})$ for $|s_u|\leq m\leq n+1$. This is true by definition for $m=|s_u|$, so assuming it is also true for arbitrary $m<n+1$ we have
\[
\begin{array}{lcl}
\BR(s_u)	&\stackrel{(\textup{IH})}{=} & \BR(t_{s,m}) \\[1mm]
		&\stackrel{(\ast)}{=} & \tilde\phi_{t_{s,m}}(\lambda z\; . \;\BR(t_{s,m}\ast z)) \\[1mm]
		&\stackrel{\scriptsize\mbox{def. $\tilde\phi$}}{=} &
			\begin{cases}
				\BR(t_{s,m}\ast \pair{d(t_{s,m}),\zero}) & \mbox{$m\in\dom(d(t_{s,m}))$} \\ 
				\BR(t_{s,m}\ast \pair{d(t_{s,m}),\lambda w,x\;. \; \BR(t_{s,m}\ast \pair{\initSeg{d(t_{s,m})}{m}\ast x\at w,\zero})})) & \mbox{otherwise}
			\end{cases}\\[5mm]
		&\stackrel{(\ast\ast)}{=} &
			\begin{cases}
				\BR(t_{s,m} \ast \pair{u,\zero}) & \mbox{$m\in\dom(u)$} \\ 
				\BR(t_{s,m} \ast \pair{u,\lambda w,x\;. \; \BR(t_{s,m} \ast \pair{\initSeg{u}{m}\ast x\at w,\zero})})) & \mbox{otherwise}
			\end{cases}\\[3mm]
		& \stackrel{\scriptsize\mbox{def. $g_m$}}{=} & \BR(t_{s,m} \ast\pair{u,g_m})\\[1mm]
		& = & \BR(t_{s,m+1}) 
\end{array}
\]
where $(\ast)$ follows from $\tilde\varphi(\wext{t_{s,m}}) = \varphi(d^\infty(\wext{t_{s,m}})) = \varphi(\ext{u})=n\geq m$, with the second equation justified by a simple argument along the lines of Lemma \ref{lemma-D} (i.e. $d^\infty$ already pick up $u$ in $s_u$, and cannot acquire any additional elements since the first component of $(\wext{t_{s,m}})_k$ for $k>|s_u|$ can only be $u$ or $\zero$). Equality $(\ast\ast)$ uses $d(t_{s,m}) = u$, which is proved similarly. By induction we have that $\BR(s_u)=\BR(t_{s,n+1})$. But
\begin{equation*}\tilde\varphi(\wext{t_{s,n+1}})=\varphi(d^\infty(\wext{t_{s,n+1}}))=\varphi(\ext{u})=n<n+1=|t_{s,n+1}|\end{equation*}
and therefore
\begin{equation*}
\begin{aligned}
\BR(s_u)&=\BR(t_{s,n+1}) \\
&=\tilde b(t_{s,n+1}) \\
&\stackrel{}{=}\begin{cases}b(u) & \mbox{if $n\in\dom(u)$} \\ 
\phi_u(\lambda x^X . (\pi_1(t_{s,n+1})_n)(u)(x)) & \mbox{otherwise}.
\end{cases}
\end{aligned}
\end{equation*}
The second equality follows by expanding the definition of $\tilde b(t_{s,n+1})$, observing as above that $d(t_{s,n+1})=u$ and recalling that $n:=\varphi(\ext{u})$.
All that remains to show is that $(\pi_1(t_{s,n+1})_n)(u)(x)=\BR(s_{u\oplus(n,x)})$ for $n=\varphi(\ext{u})\notin\dom(u)$. This can be shown as
\[
\begin{array}{lcl}
(\pi_1(t_{s,n+1})_n)(u)(x) &=& g_n(u)(x) \\[0mm]
&\stackrel{n\notin\dom(u)}{=}& \BR(t_{s,n} \ast \pair{\initSeg{u}{n}\ast x\at u,\zero}) \\[1.5mm]
&=& \BR(t_{s,n} \ast \pair{u\oplus (n,x),\zero}) \\[1.5mm]
&=& \BR(s_{u\oplus(n,x)})
\end{array}
\]
where for the last equality we have (as observed above) $s_{u\oplus(n,x),|\dom(u)|}=s_u$, and since $n \geq |s_u|$ we have, expanding the definition of $s_{u\oplus (n,x),|\dom(u)|+1}$ and $g_m$,
\begin{equation*}s_{\upd{u}{n}{x}}:=s_{\upd{u}{n}{x},|\dom(u)|+1}=s_u\ast\pair{u,g_{|s_u|}}\ast\ldots\ast\pair{u,g_{n-1}}\ast\pair{\upd{u}{n}{x},\zero}\end{equation*}
and therefore $s_{u\oplus(n,x)}=t_{s,n} \ast \pair{u\oplus(n,x),\zero}$. 
\end{proof}

\begin{lemma} \label{lemma-F} $\sBRVar$ as defined in (\ref{eqn-Phi}) satisfies equation (\ref{sbr-var-def}).
\end{lemma}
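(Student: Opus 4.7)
The plan is to verify that the definition (\ref{eqn-Phi}) satisfies the defining equation (\ref{sbr-var-def}) by case analysis on $u$. If $\neg S_\varphi(u)$ then both sides return $\zero_R$ by definition, so from now on I would assume $S_\varphi(u)$ and set $n := \varphi(\ext{u})$, noting that by Lemma \ref{lemma-D} this equals $\tilde\varphi(\wext{s_u})$. The proof then splits according to whether $n \geq |s_u|$ or $n < |s_u|$.

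When $n \geq |s_u|$, Lemma \ref{lemma-F-a} applies directly and already yields the two remaining clauses of (\ref{sbr-var-def}): the base clause $\BR(s_u) = b(u)$ when $n \in \dom(u)$, and the recursive clause $\BR(s_u) = \phi_u(\lambda x . \BR(s_{u \oplus (n,x)}))$ otherwise. In the latter sub-case the only extra observation needed is that $u \oplus (n,x)$ is itself a $\varphi$-thread, which holds because $n = \ext\varphi(u_{|\dom(u)|})$ is precisely the $|\dom(u)|$-th index of the $\varphi$-thread of $u \oplus (n,x)$; this lets us rewrite $\BR(s_{u \oplus (n,x)})$ as $\sBRVar(u \oplus (n,x))$ via (\ref{eqn-Phi}).

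When $n < |s_u|$, the defining clause of Spector's bar recursion immediately gives $\BR(s_u) = \tilde b(s_u)$, and expanding $\tilde b$ using Lemma \ref{lemma-C} (so $d(s_u) = u$) takes care of the case $n \in \dom(u)$, which collapses to $b(u)$. The obstacle lies in the remaining sub-case $n \notin \dom(u)$. There $\tilde b(s_u) = \phi_u(\pi_1((s_u)_n)(u))$, and Lemma \ref{lemma-E} identifies $\pi_1((s_u)_n)$ with $f^k_n$ for $k$ minimal such that $n_j > n$ for all $j \in \{k, \ldots, |\dom(u)|-1\}$. Unwinding the definition of $f^k_n$ in (\ref{br-defs-sbr-eq}) reduces the goal to showing $\BR(\initSeg{s_{u,k+1}}{n} \ast \pair{\initSeg{u_k}{n} \ast x \at u, \zero}) = \BR(s_{u \oplus (n,x)})$.

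I expect this last identification to be the hard part, but it should follow from three observations that all exploit the minimality of $k$. Firstly, because every later thread index $n_j$ (with $j \geq k$) exceeds $n$, the sets $\dom(u_k) \cap \{0, \ldots, n-1\}$ and $\dom(u) \cap \{0, \ldots, n-1\}$ coincide, so $\initSeg{u_k}{n} \ast x \at u$ agrees pointwise with $u \oplus (n,x)$. Secondly, a short induction on $j$ from $k+1$ up to $|\dom(u)|$ shows that $\initSeg{s_{u,j}}{n}$ is invariant, since the update at step $j$ either truncates at $n_j > n$ or appends beyond $|s_{u,j}| = n_{j-1}+1 > n$; in particular $\initSeg{s_{u,k+1}}{n} = \initSeg{s_u}{n}$. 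Finally, because $n < |s_u|$ the construction of $s_{u \oplus (n,x)}$ hits the first clause in the definition of $s_{u,i+1}$ at the last step, giving $s_{u \oplus (n,x)} = \initSeg{s_u}{n} \ast \pair{u \oplus (n,x), \zero}$. Combining these three equalities closes the case and with it the lemma.
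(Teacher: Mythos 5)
Your proposal follows the paper's proof almost step for step: the same split into $\neg S_\varphi(u)$, then $\tilde\varphi(\wext{s_u})\geq |s_u|$ (dispatched by Lemma \ref{lemma-F-a}) versus $\tilde\varphi(\wext{s_u})<|s_u|$, the same use of Lemma \ref{lemma-C} to collapse the defined case to $b(u)$, and in the hard sub-case ($n\notin\dom(u)$) the same invocation of Lemma \ref{lemma-E} to identify $\pi_1((s_u)_n)$ with $f^k_n$, followed by the same three identifications ($\initSeg{u_k}{n}\ast x\at u = u\oplus(n,x)$, $\initSeg{s_{u,k+1}}{n}=\initSeg{s_u}{n}$, and the assembly of $s_{u\oplus(n,x)}$). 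The argument is correct and is essentially the paper's own proof.
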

\begin{proof} We only need to consider inputs $u$ that satisfy $S_\varphi(u)$, for which we have $\sBRVar(u) = \BR_{Y,R}^{\tilde\phi,\tilde b,\tilde \varphi}(s_u)$. Let us consider the two main cases in the definition of $\BR_{Y,R}^{\tilde\phi,\tilde b,\tilde\varphi}$: \\[1mm]
(I) If $\tilde\varphi(\wext{s_u})<|s_u|$ then $\BR_{Y,R}^{\tilde\phi,\tilde b,\tilde \varphi}(s_u) = \tilde b(s_u)$. Let $n = \varphi(\ext{u})=\tilde\varphi(\wext{s_u})$ (the last equality following from Lemma \ref{lemma-C}). Consider two further subcases:
\begin{itemize}
	\item[(Ia)] If $n \in \dom(u)$ then $\varphi(\wext{d(s_u)})\in\dom(d(s_u))$ by Lemma \ref{lemma-C}. Hence, by definition of $\tilde b$ we have $\tilde b(s_u)=b(d(s_u))=b(u)$. 
	\item[(Ib)] If $n \not\in \dom(u)$ then $\varphi(\wext{d(s_u)})\not\in\dom(d(s_u))$, again by Lemma \ref{lemma-C}, and we have
	\[
	\begin{array}{lcl}
		\tilde b(s_u) & = & \phi_u(\lambda x . (\pi_1(\wext{s_u})_n)(u)(x)) \\
				& \stackrel{(*)}{=} & \phi_u(\lambda x . \sBRVar(\upd{u}{n}{x})),
	\end{array}
	\]
	where for the first equality we use Lemma \ref{lemma-C} while step $(\ast)$ is proved as follows. By Lemma \ref{lemma-E}, noting that the premises of the lemma are precisely the assumptions of cases (I) and (Ia), we have
	\begin{equation}\label{eqn-Ib}  (\pi_1(\wext{s_u})_n)(u)(x) \stackrel{n<|s_u|}{=}(\pi_1(s_u)_n)(u)(x)= f^k_n(u)(x) \stackrel{n\notin\dom(u)}{=} \BR(\initSeg{s_{u,k+1}}{n}\ast\pair{\initSeg{v_k}{n}\ast x\at u,\zero}) \end{equation}
	where $k$ is the least index such that $\forall j \in \{k, \ldots, |\dom(u)|-1\} (n_j>n)$. Now, we have
	\begin{equation*}\initSeg{v_k}{n}=\initSeg{d(s_{u,k})}{n} \stackrel{\textup{L}\ref{lemma-B}}{=} \initSeg{u_k}{n} \stackrel{(\dagger)}{=} \initSeg{u}{n} \end{equation*}
	where the last equality follows by observing that $u$ is obtained from $u_k$ by updating the latter at points $n_k,\ldots,n_{|\dom(u)|-1}$, but since $n_j>n$ in this range it is clear that $u_k(i)=u(i)$ for $i< n$. 
	Thus $\initSeg{v_k}{n}\ast x\at u=\upd{u}{n}{x}$. In addition, using again that $n_j > n$ for all $j > k$ and examining definition of $s_{u,i}$ it is easy to see that $\initSeg{s_{u,k+1}}{n}=\initSeg{s_{u}}{n}$. Therefore
	\begin{equation*}
	\begin{aligned}
		(\pi_1(\wext{s_u})_n)(u)(x)&\stackrel{(\ref{eqn-Ib})}{=}\BR(\initSeg{s_{u}}{n}\ast\pair{\upd{u}{n}{x}, \zero})\\
		& \stackrel{(i)}{=} \BR(\initSeg{s_{\upd{u}{n}{x},|\dom(u)|}}{n}\ast\pair{\upd{u}{n}{x}, \zero})\\
		& \stackrel{(ii)}{=} \BR(s_{\upd{u}{n}{x}})\\[1mm]
		&=\sBRVar(\upd{u}{n}{x})
	\end{aligned}
	\end{equation*}
	where $(i)$ follows since the $\varphi$-thread of $\upd{u}{n}{x}$ will coincide with the $\varphi$-thread of $u$ up to point $n$, and thus by extension $s_{u,i}$ will coincide with $s_{\upd{u}{n}{x},i}$ for $i\leq|\dom(u)|$. For $(ii)$ we have
	\begin{equation}\begin{aligned}s_{\upd{u}{n}{x}}&:=s_{\upd{u}{n}{x},|\dom(u)|+1}\\
	&=\initSeg{s_{\upd{u}{n}{x},|\dom(u)|}}{n}\ast\pair{\upd{u}{n}{x},\zero}	\end{aligned}\end{equation}
	where in the second equality we expanded the definition of $s_{\upd{u}{n}{x},|\dom(u)|+1}$, observing that $s_{\upd{u}{n}{x},|\dom(u)|}=s_{u,|\dom(u)|}=s_u$ and $$\ext\varphi(\initSegs{\varphi}{\upd{u}{n}{x}}{|\dom(u)|})=\ext{\varphi}(u)=n<|s_{u}|=|s_{\upd{u}{n}{x},|\dom(u)|}|.$$ For the last equation we used the simple fact that $S_{\varphi}(u)\Rightarrow S_{\varphi}(\upd{u}{n}{x})$ for $n=\varphi(\ext{u})\notin\dom(u)$.
\end{itemize}
\noindent (II) Assuming $\tilde\varphi(\wext{s_u})\geq |s_u|$ we have, by Lemma \ref{lemma-F-a}, that $\sBRVar(u) := \BR_{Y,R}^{\tilde\phi,\tilde b,\tilde \varphi}(s_u)$ satisfies equation (\ref{sbr-var-def}), again using that $S_\varphi(u)\Rightarrow S_\varphi(\upd{u}{n}{x})$ for $n\notin\dom(u)$, and thus $\sBRVar(\upd{u}{n}{x})=\BR(s_{\upd{u}{n}{x}})$. \\[1mm]
Putting both cases together we have that $\sBRVar$ as defined in (\ref{eqn-Phi}) satisfies equation (\ref{sbr-var-def}).
\end{proof}

All that remains to be shown is that the restricted version $\sBRVar$ is equivalent to the full version.

\begin{lemma}\label{lem-Psi-sBR}$\sBR$ is primitive recursively definable from $\sBRVar$, provably in $\EHAomega+(\sBRVar)+\sBI$.\end{lemma}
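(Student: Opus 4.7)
The plan is to reduce $\sBR_{X,R}$ to $\sBRVar_{X\times\BB, R}$ via a type augmentation that lets us decode the domain of the current state from its canonical extension. The crucial obstacle motivating this augmentation is that $\sBRVar(w)$ terminates only when $\tilde\varphi(\hat w) \in \dom(w)$, whereas $\sBR(u)$ terminates when $\varphi(\hat u) \in \dom(u)$. To simulate the latter when $u$ is not a $\varphi$-thread, we must arrange for $\dom(w)$ to always contain $\dom(u)$, which requires a ``bootstrap'' phase that loads $u$ into $w$ before the real computation begins. The Boolean flag in $X \times \BB$ (with $1$ marking genuine entries and $0$ marking padding) lets us tell, from $\hat w$ alone, which points of $\dom(u)$ have already been loaded.

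Concretely, I would fix an input $u \colon X^\dagger$ together with parameters $\phi, b, \varphi$ for $\sBR$, enumerate $\dom(u) = \{m_0 < \ldots < m_{N-1}\}$ in increasing order, and define parameters $\tilde\phi, \tilde b, \tilde\varphi$ for $\sBRVar_{X\times\BB, R}$ as follows: $\tilde\varphi(\alpha) := m_i$ for the least $i < N$ with $\pi_1(\alpha(m_i)) = 0$ if such $i$ exists, otherwise $\tilde\varphi(\alpha) := \varphi(\pi_0 \circ \alpha)$; $\tilde\phi_w(p) := p(\pair{u(\tilde\varphi(\hat w)), 1})$ when $\tilde\varphi(\hat w) \in \dom(u) \setminus \dom(w)$ and $\tilde\phi_w(p) := \phi_{\pi_0^*(w)}(\lambda x . p(\pair{x, 1}))$ otherwise, where $\pi_0^*(w) \colon X^\dagger$ denotes the first-projection of $w$; and $\tilde b(w) := b(\pi_0^*(w))$. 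The proposed realizer is then
\[
  \mathbf{SBR}^{\phi,b,\varphi}(u) := \sBRVar_{X\times\BB,R}^{\tilde\phi,\tilde b,\tilde\varphi}(\emptyset).
\]
The intuition is that during $N$ ``bootstrap'' steps $\sBRVar$ is forced to visit $m_0, \ldots, m_{N-1}$ in order with the correct values, reaching the state $\bar u := \{(m_i, \pair{u(m_i),1}) : i < N\}$, after which it behaves exactly like $\sBR$ on the $X$-projection.

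Verification then proceeds in two steps. A \emph{bootstrap lemma}, proved by primitive-recursive induction on $N$ via direct unfolding of $\sBRVar$'s defining equation, gives $\sBRVar^{\tilde\phi,\tilde b,\tilde\varphi}(\emptyset) = \sBRVar^{\tilde\phi,\tilde b,\tilde\varphi}(\bar u)$. It then suffices to show that $\mathbf{SBR}$ satisfies both cases of the defining equation of $\sBR$. The terminating case ($\varphi(\hat u) \in \dom(u)$) is immediate, since $\tilde\varphi(\hat{\bar u}) = \varphi(\hat u) \in \dom(\bar u)$ gives $\sBRVar(\bar u) = \tilde b(\bar u) = b(u)$. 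In the recursive case ($n := \varphi(\hat u) \notin \dom(u)$), one unfolding yields $\sBRVar(\bar u) = \phi_u(\lambda x . \sBRVar(\bar u \oplus (n, \pair{x,1})))$; applying the bootstrap lemma now to $u' := u \oplus (n, x)$ reduces the right-hand side of the desired equation to $\sBRVar^{\tilde\phi',\tilde b',\tilde\varphi'}(\bar{u'})$, where the primes denote the parameters built from $u'$; since $\bar{u'} = \bar u \oplus (n, \pair{x,1})$, the remaining goal is $\sBRVar^{\tilde\phi,\tilde b,\tilde\varphi}(\bar{u'}) = \sBRVar^{\tilde\phi',\tilde b',\tilde\varphi'}(\bar{u'})$.

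This last equality is the main obstacle, because the two parameter sets genuinely differ: $\tilde\varphi$ bootstraps through $\dom(u)$ while $\tilde\varphi'$ bootstraps through $\dom(u')$, in possibly different orders. I would handle it by observing that on any $w$ with $\dom(w) \supseteq \dom(u')$ -- in particular on $\bar{u'}$ itself -- both $\tilde\varphi$ and $\tilde\varphi'$ are in the ``real'' phase and return $\varphi(\pi_0 \circ \hat w)$, and similarly $\tilde\phi_w$ and $\tilde b(w)$ coincide under both parameter sets; since $\sBRVar$-recursion starting from $\bar{u'}$ only extends the domain, every reachable state satisfies $\dom(w) \supseteq \dom(u')$. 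Formally, the equality is established by $\sBI$ applied to the predicate $P(w) :\equiv \sBRVar^{\tilde\phi,\tilde b,\tilde\varphi}(\bar{u'} \at w) = \sBRVar^{\tilde\phi',\tilde b',\tilde\varphi'}(\bar{u'} \at w)$, whose first premise follows from $\sSPEC$ (valid in any model of $\sBRVar$ by Proposition \ref{prop-sbar}), and whose second premise reduces to one-step unfoldings of both $\sBRVar$ instances using the established agreement of parameters on the relevant states.
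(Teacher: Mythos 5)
Your proposal takes a genuinely different route from the paper. Where you lift $\sBRVar$ to type $X \times \BB$ and ``bootstrap'' the input $u$ into the recursor's state before the real computation begins, the paper stays at the base type $X$ and instead shifts the \emph{parameters} by $v$: it sets $\varphi^v(\alpha) := \varphi(v \at \alpha)$, $b^v(u) := b(v \at u)$, and --- crucially --- defines $\phi^v_u(p)$ to return $b(v \at u)$ outright whenever $\varphi(v \at \ext{u}) \in \dom(v)$, thereby simulating termination at points of $\dom(v)$ without ever loading $v$ into the $\sBRVar$-state. The paper then needs only a single $\sBI$ (on $P(u) \equiv [\sBRVar(u) = \sBR(v \at u)]$ relative to $\varphi^v$), whereas you need a bootstrap lemma \emph{and} an $\sBI$ for the parameter-agreement equality. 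Your construction also never needs $\phi^v$'s case split because the loaded values are genuinely present in the state, which some may find conceptually cleaner, but the price is the Boolean tagging and the need to reconcile two distinct bootstrap orders.

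There is, however, a genuine gap in the last step. Your ``agreement of parameters'' argument establishes that $\tilde\varphi$, $\tilde\phi$, $\tilde b$ coincide with $\tilde\varphi'$, $\tilde\phi'$, $\tilde b'$ on states $w$ with $\dom(w) \supseteq \dom(u')$, but it is silent about the guard $\neg S_\varphi(u)$ in the defining equation (\ref{sbr-var-def}) of $\sBRVar$, which returns $\zero_R$ on non-threads. For the two $\sBRVar$-instances to agree at $\bar{u'} \at w$ you also need $S_{\tilde\varphi}(\bar{u'} \at w) \leftrightarrow S_{\tilde\varphi'}(\bar{u'} \at w)$, and this is \emph{not} a property of the current state: the thread predicate records the \emph{history} of how the state could have been reached, and $\tilde\varphi$ and $\tilde\varphi'$ bootstrap through $\dom(u)$ and $\dom(u')$ in different orders. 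The claim is true --- one can check that for any $w' \sqsupseteq \bar{u'}$ both threads $\initSegs{\tilde\varphi}{w'}{i}$ and $\initSegs{\tilde\varphi'}{w'}{i}$ arrive at $\bar{u'}$ exactly at step $|\dom(u')|$ and coincide from then on --- but this observation is load-bearing and must be stated and proved; without it the second premise of your $\sBI$, and indeed the first, cannot be discharged. (Note that the paper avoids this issue entirely: its $\sBI$ hypothesis directly delivers $u \in S_{\varphi^v}$, so the guard is never in doubt.) A similar but more minor remark applies to your bootstrap lemma, where you should verify that the intermediate states $w_j := \{(m_i, \langle u(m_i), 1\rangle) : i < j\}$ are $\tilde\varphi$-threads before unfolding $\sBRVar$ on them.
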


\begin{proof}For some arbitrary input $v$ (not necessarily satisfying $S_\varphi(v)$) we define $\sBR^{\phi,b,\varphi}(v):=\sBRVar^{\phi^v,b^v,\varphi^v}(\emptyset)$ where
\begin{equation*}\begin{aligned}\phi^v_u(p^{X\to R})&=_R\begin{cases}b(v \at u) & \mbox{if $\varphi(v\at\ext{u})\in\dom(v)$}\\
 \phi_{v\at u}(p) & \mbox{otherwise}\end{cases}\\
b^v(u)&=_R b(v\at u)\\
\varphi^v(\alpha)&=_\NN \varphi(v\at\alpha).\end{aligned}\end{equation*}
We prove that $\sBRVar^{\phi^v,b^v,\varphi^v}(\emptyset)=\sBR^{\phi,b,\varphi}(v)$ by $\sBI$ on the predicate $$P(u)\equiv [\sBRVar(u)=\sBR(v\at u)]$$ relative to the functional $\varphi^v$. First observe that $\sBRVar$ is sufficient to define the bound $\theta_{\varphi,\alpha}(\emptyset)$ of Lemma \ref{prop-sbar}, since $\theta_{\varphi,\alpha}(\emptyset)$ only makes recursive calls on $\varphi$-threads. Therefore provably in $\EHAomega + (\sBRVar)$, for all $\alpha\colon X^\NN$ there exists a least $n$ such that $\wext{\varphi^v}(\initSegs{\varphi^v}{\alpha}{n})=\ext{\varphi}(v\at\initSegs{\varphi^v}{\alpha}{n}) \in\dom(\initSegs{\varphi^v}{\alpha}{n})$. In this case $\initSegs{\varphi^v}{\alpha}{n}\in S_{\varphi^v}$ and therefore
\begin{equation*}\sBRVar(\initSegs{\varphi^v}{\alpha}{n})=b^v(\initSegs{\varphi^v}{\alpha}{n})=b(v\at\initSegs{\varphi^v}{\alpha}{n}).\end{equation*}
But since $\wext{\varphi^v}(\initSegs{\varphi^v}{\alpha}{n}) \in\dom(\initSegs{\varphi^v}{\alpha}{n})$ implies $\wext{\varphi^v}(\initSegs{\varphi^v}{\alpha}{n})\in\dom(v\at\initSegs{\varphi^v}{\alpha}{n})$ we also have $\sBR(v\at\initSegs{\varphi^v}{\alpha}{n})=b(v\at\initSegs{\varphi^v}{\alpha}{n})$.
For the induction step, for any $u\in S_{\varphi^v}$ we have, expanding $\sBRVar$ and its parameters:
\begin{equation*}\begin{aligned}\sBRVar(u)&=\begin{cases}b^v(\ext{u}) & \mbox{if $\varphi^v(u)\in\dom(u)$}\\
\phi^v_u(\lambda x\; . \; \sBRVar(u\oplus (\varphi^v(\ext{u}),x)) & \mbox{otherwise}\end{cases}\\
&=\begin{cases}b(v\at u) & \mbox{if $\varphi(v\at \ext{u})\in\dom(u)$}\\
b(v\at u) & \mbox{if $\varphi(v\at \ext{u})\in\dom(v)$}\\
\phi_{v\at u}(\lambda x\; . \; \sBRVar(u\oplus (\varphi(v\at\ext{u}),x)) & \mbox{otherwise}\end{cases}\\
&=\begin{cases}b(v\at u) & \mbox{if $\ext\varphi(v\at u)\in\dom(v\at u)$}\\
\phi_{v\at u}(\lambda x\; . \; \sBRVar(u\oplus (\ext\varphi(v\at u),x)) & \mbox{otherwise}\end{cases}\end{aligned}\end{equation*}
In the second case, setting $m=\ext{\varphi}(v\at u)=\varphi^v(\ext{u})$, since $u\oplus (m,x)\in S_{\varphi^v}$ and $m\notin\dom(u)$ we can assume as induction hypothesis $\forall x P(u\oplus (m,x))$ and thus 
\begin{equation*}
\begin{aligned}
\sBRVar(u\oplus (m,x))&=\sBR(v\at [u\oplus (m,x)]) \\
&= \sBR([v\at u]\oplus (m,x))
\end{aligned}
\end{equation*}
the first equality following from the bar induction hypothesis, and the last from $m\notin\dom(v)$, and hence $\sBRVar(u)=\sBR(v\at u)$. This establishes the premise of $\sBI$, from which we obtain $P(\emptyset)$, which completes the proof.\end{proof}

We are now able to prove the main result of the section.

\begin{proof}[Proof of Theorem \ref{thm-BR-sBR}] By Lemma \ref{lem-BR-sBR} a term satisfying the defining equations of $\sBRVar$ exists in $\EHAomega+\BR$, therefore by Lemma \ref{lem-Psi-sBR}, $\sBR$ is primitive recursively definable in turn from $\BR$, provably in $\EHAomega+\BR+\sBI$.\end{proof}

\section{Final remarks}
\label{sec-conc}

We conclude this paper by discussing, on a very informal level, how our move from $\BR$ to $\sBR$ opens the door to extending bar recursion to more complex types, thus leading to a computational interpretation of a wider class of choice principles. The following section should be considered as nothing more than a quick sketch which outlines the idea, as we do not provide any form of rigorous proof. We leave a full development of the section to future work.

\subsection{Discrete choice}

Up until now, we have considered bar recursion over either finite sequences or finite partial functions: in other words over objects of type $\NN\to X+\unit$ with finite support. It is natural ask whether we can further generalise bar recursion to take as input objects of type $D\to X+\unit$ with finite support, for some suitable class of indexing types $D$.

It is clear that for such a bar recursor to be well-defined we require equality on $D$ to be decidable, and moreover for well-foundedness we require that the stopping condition $\varphi(\ext{u})\in\dom(u)$ is eventually reached for $u$ with sufficiently large domain. This first condition is already highly restrictive: in $\PAomega$ decidability of equality is only guaranteed for types of lowest level. However, it has been shown by Escard{\'o} \cite{Escardo(2008.0)} that in the Kleene-Kreisel continuous functionals, decidability of equality can be established for a wider class of types.

\begin{definition}[Escard{\'o} \cite{Escardo(2008.0)}]Inductively define the \emph{discrete} and \emph{compact} types by
\begin{equation*}\begin{aligned}\discrete \; &::= \; \BB \; | \; \NN \; | \; \discrete\times\discrete \; | \; \compact\to\discrete\\
\compact \; &::= \; \BB \; | \; \compact\times\compact \; | \; \discrete\to\compact. \end{aligned}\end{equation*}\end{definition}
This terminology is based on the fact that the space $C_X$ of Kleene-Kreisel continuous functionals of type $X$ is discrete under the usual (quotient of the) Scott topology whenever $X$ is a discrete type, and is compact whenever $X$ is a compact type. Several properties of discrete and compact types are established in \cite{Escardo(2008.0)}, including the fact that for \emph{arbitrary} discrete $X$ the space $C_X$ is both computably enumerable and has decidable equality\footnote{However, equality may not be \emph{primitive recursively} decidable as in $\PAomega$: for non-trivial discrete types one must appeal to the so-called \emph{infinite product of selection functions} (see \cite{Escardo(2008.0)} for details).} (this is striking given that the discrete types contain genuine higher-types such as $\BB^\NN\to\NN$). Moreover, by a standard topological argument one can extend the usual sequential continuity property for functionals on infinite sequences to the following principle:
\begin{equation*}
\CONT[D] \ \colon \ \forall \varphi^{X^D\to D},\alpha^{X^D}\exists S\subset D\forall \beta(\forall d \in S(\alpha(d) =_{X} \beta(d)) \to \varphi(\alpha)=_{D}\varphi(\beta)),
\end{equation*}
where $S$ is a finite subset of $D$. 



Let $X^{\dagger D}$ denote the type of finite partial functions from $D$ to $X$, i.e. partial functions $u \colon D \to X$ with finite domain. We define $\sBR[D]$ where $D$ ranges over discrete types by
\begin{equation*}
\sBR[D]_{X,R}^{\phi,b,\varphi}(u^{X^{\dagger D}})=_R
\begin{cases}
	b(u) & \mbox{if $\ext{\varphi}(u)\in\dom(u)$} \\
	\phi_u(\lambda x^{}\; . \; \sBR[D]^{\phi,b,\varphi}(\upd{u}{\varphi(\ext{u})}{x})) & \mbox{otherwise}
\end{cases}
\end{equation*}
where $u \colon X^{\dagger D}$ and $\varphi \colon (D \to X) \to D$. Note that $\sBR$ as defined in Section \ref{sec-bar} is just $\sBR[\NN]$. This generalised form of bar recursion is well-defined in the continuous functionals since equality on $D$ is decidable, and therefore the constructions $(\hat{\cdot}) \colon X^{\dagger D} \to X^D$ and $\oplus\colon X^{\dagger D}\to (D\times X)\to X^{\dagger D}$ are still continuous (which would not be the case for e.g. $D=\NN\to\NN$).

Moreover, the recursor $\sBR[D]$ is well founded by $\CONT[D]$: Suppose that $\sBR(u)$ is not total for some total input $u$. Then by classical dependent choice we can construct a sequence recursively by $u_0:=u$ and $u_{i+1}:=u_i\oplus (d_i,x_i)$, where for each $i$ we have
\begin{equation*}
(i) \; \; d_i=\ext{\varphi}(u_i)\notin\dom(u_i) \quad \quad \quad
(ii)\; \; \sBR(u_i)\mbox{ is not total}.
\end{equation*}
By classical countable choice define $\alpha\colon D\to X+\unit$ by $\alpha(d):= x_i$ if $d=d_i$ for some $i$, and undefined otherwise. Then by $\CONT[D]$ there exists a finite subset $S\subset D$ such that $\varphi(\alpha)=\varphi(\beta)$ whenever $\alpha(d)=\beta(d)$ for all $d\in S$. Now since $\alpha$ is the domain-theoretic limit of the $u_i$, there is some index $I$ such that $u_I(d) = \alpha(d)$ for $d \in S$, and therefore $d_I=\ext{\varphi}(u_I)=\varphi(\alpha)$. Now by definition we have $d_I\in\dom(u_{I+1})$, and so in particular $d_I\in\dom(\alpha)$. It is clear that $d_I\notin S$, since by $(i)$ we have $d_I\notin\dom(u_I)$ but $d_I\in\dom(\alpha)$, contradicting the definition of $I$. But then $u_{I+1}=u_I$ on $S$, and therefore $d_{I+1}=\varphi(\alpha)=d_I$ and hence $d_{I+1}\in\dom(u_{I+1})$, a contradiction.

This constitutes an informal argument that $\sBR[D]$ is a well-defined, total continuous functional, and so by an entirely analogous procedure to the case of $\sBR[\NN]$, one can construct $f$ and $p$ in $\sBR[D]$ which satisfy the appropriate generalisation of Spector's equations:
\begin{equation*} \label{eqn-spector-discrete}
\begin{aligned}
	\varphi f& =_D d \\
	f(\varphi f)&=_X \varepsilon_d p \\
	qf&=_Y p(\varepsilon_d p).
\end{aligned}
\end{equation*}
As a result, we gain a computational interpretation of the following axiom of \emph{discrete choice}:
\begin{equation*}
\AC_{D,X} \ \colon \ \forall d^D\exists x^X A(d, x)\to\exists f^{D \to X}\forall d A(d, f d).
\end{equation*}
We remark that, as shown in \cite{Escardo(2008.0)}, the set $C_D$ is recursively enumerable for any discrete type $D$, and so can be encoded in the usual type of natural numbers $\NN$. Therefore in theory we could have defined both $\sBR[D]$ and the analogous generalisation $\BR[D]$ of Spector's bar recursion (where the points of $D$ are ordered relative to their encoding in $\NN$) in terms of $\sBR[\NN]$ and $\BR[\NN]$ respectively. However, this reduction to the base level would of course rely explicitly on the encoding of $C_D$ into $\NN$ on the meta-level. To avoid this and define the generalised recursor directly seems only possible for the symmetric recursor $\sBR[D]$, as Spector's bar recursion relies inherently on the underlying ordering of the natural numbers, and is therefore prime-facie undefined for higher level discrete types on which no natural total ordering exists. 
\subsection{Conclusion}

We have introduced a variant of bar recursion that, unlike Spector's bar recursion, carries out recursive calls in symmetric manner, as dictated by the control parameter. We have shown that this symmetric recursor exists in the usual models of bar recursion, such as the Kleene-Kreisel continuous functionals, and is in fact primitive recursively equivalent to Spector's bar recursion. We then showed that Spector's equations, a solution to which is sufficient to give a Dialectica interpretation to the negated axiom of countable choice, can be solved with a special case of symmetric bar recursion, analogously to Spector's original bar recursive solution. We compared concrete realizers obtained from the classical proof that there is no injection from $\NN^\NN\to\NN$ using both variants of bar recursion, and demonstrated that our new method of extracting programs from proofs in classical analysis performs dramatically more efficiently in many cases.

Our work fits in to the much broader program of adapting and refining traditional proof theoretic techniques so that they are better suited to their role in modern proof theory - in our case taking a well-known method of proving the consistency of classical analysis and altering it so that it becomes better suited as a tool for extracting programs from proofs. However, concrete applications of program extraction using symmetric bar recursion are currently restricted to the single case study given here, and we believe that it deserves more attention in the future. 

In addition to the extension of bar recursion to discrete choice principles sketched above, it would be particularly interesting to investigate the procedural behaviour of symmetric bar recursion, which seems much more natural than Spector's bar recursion and has close links to both the update procedures of Avigad \cite{Avigad(2002.0)} and the learning-based realizers of Aschieri et al. \cite{AscBer(2010.0)}. Making this relationship precise could lead to a much better understanding of the relationship between proof interpretations like the Dialectica intepretation and more direct learning-based interpretations of classical logic. 

\appendix

\section{Omitted proofs}
\label{app-A}

\begin{proof}[Proof of Lemma \ref{lemma-B}] By induction on $i$. If $i = 0$ the result is trivial, since $d(s_{u,0})=d(\pair{})=\emptyset$. The induction step is not much more difficult, but involves a deal of tedious verification to perform rigourously. Suppose the lemma holds for some $i<|\dom(u)|$.  There are two main cases to deal with: either $n_i<|s_{u,i}|$ or $n_i\geq |s_{u,i}|$. 

In the first case we have $s_{u,i+1}=\initSeg{s_{u,i}}{n_i}\ast\pair{u_{i+1},\zero}$. Then for $j<n_i$ we have by definition
\begin{equation*}\begin{aligned}d(s_{u,i+1})(j)&=(\pi_0(s_{u,i+1})_{i'})(j) \mbox{ for least $i'\leq j$ such that $j\in\dom(\pi_0(s_{u,i+1})_{i'})$, else undefined}\\
&=(\pi_0(s_{u,i})_{i'})(j) \mbox{ for least $i'\leq j$ such that $j\in\dom(\pi_0(s_{u,i})_{i'})$, else undefined}\\
&=d(s_{u,i})(j)\\
&\stackrel{I.H.}{=}u_i(j)\\
&=u_{i+1}(j)\end{aligned}\end{equation*}
where we use the fact that $(s_{u,i+1})_{i'}=(s_{u,i})_{i'}$ for $i'\leq j<n_i$, and in the last equality that $u_i(j)=u_{i+1}(j)$ fo $j<n_i$. Using similar reasoning, for $j\geq n_i$ we have
\begin{equation*}\begin{aligned}d(s_{u,i+1})(j)&=\begin{cases}(\pi_0(s_{u,i+1})_{i'})(j) \mbox{ for least $i'<n_i$ such that $j\in\dom(\pi_0(s_{u,i+1})_{i'})$}\\
(\pi_0(s_{u,i+1})_{n_i})(j) \mbox{ if $j\in\dom(\pi_0(s_{u,i+1})_{n_i})$}\\
\mbox{else undefined}\end{cases}\\
&\stackrel{I.H.}{=}\begin{cases}u_i(j) \mbox{ for least $i'<n_i$ such that $j\in\dom(\pi_0(s_{u,i})_{i'})$}\\
u_{i+1}(j) \mbox{ if $j\in\dom(u_{i+1})$}\\
\mbox{else undefined}\end{cases}\\
&=u_{i+1}(j).\end{aligned}\end{equation*}  
For the last equality we use the fact that $j\in\dom(u_i)$ implies $j\in\dom(u_{i+1})$ and $u_i(j)=u_{i+1}(j)$.

For the second main case $n_i\geq |s_{u,i}|$ we have $s_{u,i+1}=s_{u,i}\ast\pair{d(s_{u,i}),f^i_{|s_{u,i}|}}\ast\ldots\ast\pair{d(s_{u,i}),f^i_{n_i-1}}\ast\pair{u_{i+1},\zero}$, and the argument proceeds as in the first case: for $j<n_i$ we have (expanding the definition of $d$)
\begin{equation*}\begin{aligned}d(s_{u,i+1})&=\begin{cases}(\pi_0(s_{u,i})_{i'})(j)\mbox{ for least $i'\leq j$ such that $i'<|s_{u,i}|$ and $j\in\dom(\pi_0(s_{u,i})_{i'})$}\\
d(s_{u,i})(j)\mbox{ if $|s_{u,i}|\leq j$ and $j\in\dom(d(s_{u,i}))$}\\
\mbox{else undefined}\end{cases}\\
&\stackrel{I.H.}{=}\begin{cases}u_i(j)\mbox{ for least $i'\leq j$ such that $i'<|s_{u,i}|$ and $j\in\dom(\pi_0(s_{u,i})_{i'})$}\\
u_i(j)\mbox{ if $|s_{u,i}|\leq j$ and $j\in\dom(u_i)$}\\
\mbox{else undefined}\end{cases}\\
&=u_i(j)\\
&=u_{i+1}(j).\end{aligned}\end{equation*}
where for the last equality we use that $u_i(j)=u_{i+1}(j)$ for $j\leq n_i$. For $j=n_i$, it is easy to see that $d(s_{u,i+1})(n_i)=u_{i+1}(n_i)$, since by our assumption $S_\varphi(u)$ we know that $n_i\notin\dom(u_i)$ and thus the least $i'\leq n_i$ such that $n_i\in\dom(\pi_0(u_{u,i+1})_{i'})$ is $i'=n_i$.\end{proof}

\bibliographystyle{plain}

\bibliography{tp}

\end{document}